\documentclass[12pt,onecolumn]{IEEEtran}

\usepackage[T1]{fontenc}
\usepackage{amsmath,amssymb,amsthm,mathtools}
\usepackage{microtype}
\usepackage{enumitem}
\usepackage{booktabs}
\usepackage{tabularx}
\usepackage{float}
\usepackage{tikz}
\usetikzlibrary{arrows.meta,positioning,calc}
\usepackage{xcolor}
\usepackage[hidelinks]{hyperref}
\usepackage[nameinlink,noabbrev,capitalise]{cleveref}

\allowdisplaybreaks
\newtheorem{theorem}{Theorem}[section]
\newtheorem{proposition}[theorem]{Proposition}
\newtheorem{lemma}[theorem]{Lemma}
\newtheorem{corollary}[theorem]{Corollary}
\theoremstyle{definition}
\newtheorem{definition}[theorem]{Definition}
\theoremstyle{remark}
\newtheorem{remark}[theorem]{Remark}

\crefname{theorem}{theorem}{theorems}
\Crefname{theorem}{Theorem}{Theorems}
\crefname{proposition}{proposition}{propositions}
\Crefname{proposition}{Proposition}{Propositions}
\crefname{lemma}{lemma}{lemmas}
\Crefname{lemma}{Lemma}{Lemmas}
\crefname{corollary}{corollary}{corollaries}
\Crefname{corollary}{Corollary}{Corollaries}
\crefname{definition}{definition}{definitions}
\Crefname{definition}{Definition}{Definitions}
\crefname{remark}{remark}{remarks}
\Crefname{remark}{Remark}{Remarks}

\newcommand{\Z}{\mathbb Z}
\newcommand{\supp}{\operatorname{supp}}
\newcommand{\sgn}{\operatorname{sgn}}
\newcommand{\wtL}{\operatorname{wt}_{\mathrm L}}
\newcommand{\dL}{d_{\mathrm L}}
\newcommand{\dH}{d_{\mathrm H}}
\newcommand{\cE}{\mathcal E}
\newcommand{\cC}{\mathcal C}
\newcommand{\e}{\mathrm e}
\newcommand{\eps}{\varepsilon}
\newcommand{\AL}{A_{\mathrm L}}

\makeatletter
\newcommand{\centeredfigurecaption}[1]{%
  \begingroup
  \long\def\@makecaption##1##2{%
    \@IEEEfigurecaptionsepspace
    \parbox[t]{\hsize}{\centering\normalfont\footnotesize
      ##1.\nobreakspace\nobreakspace ##2\par}}%
  \caption{#1}%
  \endgroup
}
\makeatother

\title{On the Existence of Complete Coverings of \\Lee Spheres}
\author{Jing Zhang and Yue Zhou \thanks{J.\ Zhang is with Key Laboratory of Computing and Stochastic Mathematics (Ministry of Education), School of Mathematics and Statistics, Hunan Normal University, Changsha, Hunan 410081, P.~R.~China (e-mail: zhangjing@hunnu.edu.cn). Y.\ Zhou is with the College of Science, National University of Defense Technology, Changsha 410073, P.~R.~China (e-mail: yue.zhou.ovgu@gmail.com). Y.\ Zhou is supported by the National Natural Science Foundation of China (No.~12371337).}}
\date{\today}

\begin{document}
\maketitle

\begin{abstract}
	Every perfect $r$-error-correcting Lee code in $\mathbb{Z}^n$ induces a finite configuration around each codeword: a family of pairwise disjoint radius-$r$ Lee spheres, all disjoint from the central sphere, that covers every lattice point at Lee distance $r+1$ from the codeword. We call such a configuration a complete covering. Under projection onto a suitable Cartesian product of cycles,	a complete covering corresponds to an optimal local structure in the sense of \v{S}pacapan. The existence of a complete covering is	necessary but, in general, not sufficient for the existence of a perfect Lee code, as our construction for $(n,r)=(3,2)$ shows.
	
	We prove that no complete covering exists whenever $n\ge 8$ and $r^2+\lfloor 25r/34\rfloor-3\ge 2n$.
	For the corresponding Cartesian products of cycles, this strengthens \v{S}pacapan's nonexistence theorem by replacing the linear radius condition $r\ge n$ with a threshold of order $\sqrt{2n}$. We also obtain an improved large-radius nonexistence bound for
	perfect Lee codes, strengthening the bound of Horak and Kim. Furthermore, we show that, for every $\varepsilon>0$, no complete
	covering exists when $n\ge 3$, $r$ is sufficiently large, and $r^2+(2/\mathrm{e}-\varepsilon)r\ge 2n$.
	
	On the positive side, we use near-optimal ternary constant-weight codes to construct almost complete coverings whose pairwise
	disjoint shadows cover a $1-o(1)$ proportion of the first outer	Lee shell as the dimension tends to infinity, for every fixed radius.
\end{abstract}

\begin{IEEEkeywords}
Lee metric, perfect Lee code, Golomb--Welch conjecture, complete covering, constant-weight code, local structure.
\end{IEEEkeywords}

\section{Introduction}\label{sec:intro}

For $x=(x_1,\ldots,x_n)\in\Z^n$, its \emph{Lee weight} is
\[
 \wtL(x)=\sum_{i=1}^n |x_i|,
\]
and the \emph{Lee distance} between $x,y\in\Z^n$ is
\[
 \dL(x,y)=\wtL(x-y).
\]
The \emph{support} of $x$ is
\[
 \supp(x)=\{i\in\{1,\ldots,n\}:x_i\ne0\}.
\]
The \emph{Lee metric} was introduced for nonbinary error correction by Lee \cite{Lee1958}.  It has appeared in coding problems for constrained and partial-response channels \cite{RothSiegel1994}, multidimensional cluster and burst errors \cite{BlaumBruckVardy1998,EtzionYaakobi2009}, flash memories \cite{Schwartz2012}, and resource placement in torus-based computer networks \cite{BaeBose1997}.  More recent work connects Lee-metric codes with deletion and repetition channels \cite{SokEtAl2018}, code-based cryptography \cite{HorlemannTrautmannWeger2021}, iterative decoding over finite integer rings for Lee channels \cite{BariffiEtAl2022}, and list decoding of quaternary codes \cite{GreferathZumbragel2022}.  These developments continue to motivate the study of structural and extremal questions in the Lee metric.

For $t\ge0$, the set
\[
 S^n(t)=\{x\in\Z^n:\wtL(x)=t\}
\]
is the \emph{Lee shell} of radius $t$.  Following the usual coding-theoretic terminology, we call
\[
 B^n(r)=\{x\in\Z^n:\wtL(x)\le r\}
\]
the \emph{Lee sphere} of radius $r$, although it is the closed ball in metric terminology.  A set $C\subseteq\Z^n$ is a \emph{perfect $r$-error-correcting Lee code} if
\[
 \Z^n=\bigsqcup_{c\in C}\bigl(B^n(r)+c\bigr).
\]
Equivalently, the Lee spheres centered at the codewords form a translational tiling of $\Z^n$.

In \cite{GolombWelch1970}, Golomb and Welch constructed perfect Lee codes under the conditions that $n=1,2$ or $r=1$. In the same paper, they also established the nonexistence of such tilings for sufficiently large radii (with a threshold $r_n$ depending on $n$, not explicitly specified), and conjectured that no perfect $r$-error-correcting Lee code exists for
\[
 n\ge3,
 \qquad
 r\ge2.
\]
This is the famous Golomb-Welch conjecture.

Let $\Z_q$ denote the ring of integers modulo $q$. A similar conjecture for tilings of $\Z_q^n$ by Lee balls was proposed in \cite[Section 3]{GolombWelch1970}. In fact, for any $q\geq 2r+1$, a tiling of $\Z_q^n$ by Lee balls of radius $r$ lifts to a $q$-periodic tiling of $\Z^n$. In the literature, the conjecture for $\Z_q^n$ is usually called the weak version and the one for $\Z^n$ is called the strong version.

Both the weak and strong Golomb--Welch conjectures remain open. Many partial results are known: for dimension $n=3$, they have been solved in \cite{GravierMollardPayan1998}.  For large radii, Post \cite{Post1975} obtained a linear bound for perfect Lee codes in $\Z_q^n$, and Lepist\"o used a modified Elias argument to obtain a bound of square-root order in \cite{Lepisto1981}.  Horak and Kim removed periodicity from the relevant large-radius arguments and combined them with a linear-programming method; in particular, they proved nonexistence for $n\ge876$ and $r\ge\sqrt{2n+40}$ \cite{HorakKim2018}.

Association-scheme methods form another important direction.  Sol\'e developed the \emph{Lee association scheme} and \emph{Lloyd-type conditions} for perfect codes \cite{Sole1988,Sole1989,Sole1991}; recently, Shi, Wang, and Sol\'e returned to this viewpoint and obtained further nonexistence results in a broader weakly metric association-scheme framework \cite{ShiWangSole2026}.

A complementary line of work concerns \emph{linear perfect Lee codes}, equivalently perfect Lee codes whose codewords form a lattice.  Algebraic and number-theoretic methods have produced nonexistence results for fixed radii and infinite families of dimensions \cite{HorakGrosek2014,Kim2017,QureshiCampelloCosta2018,ZhangGe2017,ZhangZhou2019,Qureshi2020}.  In particular, Leung and the second author proved that no linear perfect Lee code of radius two exists in any dimension $n\ge3$ \cite{LeungZhou2020}.  

For more results on the Golomb-Welch conjecture, see the survey \cite{HorakKim2018} by Horak and Kim.


The purpose of this paper is to isolate and study a necessary local condition which every perfect Lee code must satisfy.  Let
\[
 \cE_n=\{\pm e_1,\ldots,\pm e_n\}
\]
be the set of \emph{signed Lee unit vectors}.

\begin{definition}\label{def:complete-covering}
A set $\Omega\subseteq\Z^n$ is a \emph{complete covering of $B^n(r)$ by its translates} if
\begin{enumerate}[label=\textup{(\roman*)}]
 \item $\wtL(z)=2r+1$ for every $z\in\Omega$;
 \item for every $y\in S^n(r)$ and every $u\in\cE_n$ such that $y+u\notin B^n(r)$, there is exactly one $z\in\Omega$ for which
 \[
  y+u\in B^n(r)+z.
 \]
\end{enumerate}
\end{definition}

The signed-unit convention in part~\textup{(ii)} makes the condition apply to the entire first outer shell.  Every perfect Lee code in $\Z^n$ induces a complete covering around each codeword.  The converse is false: in Section~\ref{sec:constructions} we give a complete covering of $B^3(2)$ which cannot be extended across the next shell to a perfect Lee code.  We also classify the complete coverings in dimension two and prove that every one of them extends to a linear perfect Lee code.

For comparison with asymptotic coding constructions, we use the following relaxation in  which the \emph{shadow on the first outer shell} of $z\in S^n(2r+1)$ is defined as
\[
D(z)=S^n(r+1)\cap\bigl(B^n(r)+z\bigr).
\]

\begin{definition}\label{def:almost-covering}
For a fixed positive integer $r$, a sequence $\Omega_n\subseteq S^n(2r+1)$ is an \emph{almost complete covering} if its associated shadows on the first outer shell are pairwise disjoint and cover a $1-o(1)$ proportion of $S^n(r+1)$ as $n\to\infty$.
\end{definition}

Our two main results exclude complete coverings in a range close to the quadratic boundary $2n=r^2$.

\begin{theorem}\label{thm:main}
Let $n\ge8$ and $r\ge2$.  If
\begin{equation}\label{eq:main-condition}
 r^2+\left\lfloor\frac{25r}{34}\right\rfloor-3\ge2n,
\end{equation}
then no complete covering of $B^n(r)$ exists.  Consequently, no perfect $r$-error-correcting Lee code in $\Z^n$ of length $n$ exists.
\end{theorem}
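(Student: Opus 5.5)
The plan is to reduce the statement to a finite counting problem near a "tip" of the first outer shell, and then obtain a contradiction from a small system of linear inequalities. The second sentence of \cref{thm:main} is immediate once the first is proved, because every perfect Lee code induces a complete covering around each codeword. So I only have to rule out complete coverings.

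The first step is to describe shadows explicitly. If $z\in S^n(2r+1)$ and $x\in S^n(r+1)$, then $\dL(x,z)\le r$ forces $\wtL(x)+\dL(x,z)=\wtL(z)$. Hence $x$ lies on a geodesic from $0$ to $z$. Equivalently, $x$ is \emph{sign-compatible} with $z$ and satisfies $|x_i|\le|z_i|$ for all $i$. So
\[
D(z)=\{x\in S^n(r+1): x_iz_i\ge0,\ |x_i|\le|z_i|\ \text{for all } i\},
\]
and a complete covering is exactly a family $\Omega\subseteq S^n(2r+1)$ whose sets $D(z)$ partition $S^n(r+1)$. (The disjointness of the spheres $B^n(r)+z$ among themselves should follow from the same geodesic picture; I would check this separately.) The whole problem is thus a purely combinatorial exact-cover problem by "down-sets" of sign-compatible vectors.

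The second step is to localize at the tip $(r+1)e_1$ (WLOG) and consider the slab
\[
T_k=\{x\in S^n(r+1): x_1= r+1-k\}\qquad(k=0,1,2,\ldots).
\]
Here $|T_k|$ equals the number of points of $S^{n-1}(k)$, so $|T_1|=2(n-1)$ and $|T_2|=2(n-1)^2$. Any $z$ whose shadow meets $T_0\cup T_1\cup T_2$ must have $z_1\ge r-1$. I will classify such $z$ by $z_1=r+1+a$ with $a\ge-2$, so that the remaining coordinates $w=(z_2,\dots,z_n)$ have weight $r-a$. For each type I will count exactly how many points $|D(z)\cap T_k|$ it contributes for $k=0,1,2$ (and $k=3$ if needed). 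These counts depend only on $a$ and on the shape of $w$, namely the number of nonzero entries of $w$ and how many of them are $\ge2$ in absolute value.

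The key counting fact is the following. A single $z$ with $z_1\ge r+1$ covers $(r+1)e_1$. Its contribution to $T_1$ is the support size of $w$, which is at most $r-a$. Its contribution to $T_2$ is bounded by roughly $\binom{r-a+1}{2}$. Exact covering of $T_1$ and $T_2$ then forces many additional $z$'s with $z_1\in\{r-1,r\}$, and these compete for the same points. Introducing nonnegative integer variables for the number of $z$'s of each type yields equalities
\[
\sum_{\text{types}} N_{\text{type}}\,|D\cap T_k| = |T_k|,\qquad k=0,1,2,
\]
together with disjointness constraints. Examples of the latter are that only one $z$ covers the tip, and that two $z$'s with $z_1=r$ cannot share a support coordinate with the same sign. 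Eliminating variables yields an inequality of the form $2n \le r^2 + c\,r + O(1)$ or an integrality contradiction.

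I expect the main obstacle to be sharpening the linear term to $\lfloor 25r/34\rfloor-3$. The crude count gives only $2n\le r^2+O(r)$ with a worse constant. To improve it I would push the analysis one slab further, to $T_3$. I would also exploit that points $x=(r-1)e_1\pm e_j\pm e_l$ covered by a $z$ with $z_1=r-1$ force that $z$ to waste much of its shadow outside the slabs. Combining the resulting constraints with weights $\lambda_k$ amounts to a small LP, and the optimal multipliers should produce the ratio $25/34$. The floor and the $-3$ come from integrality and from the boundary cases. The hypothesis $n\ge8$ would be used to rule out degenerate small configurations in which the slabs $T_2,T_3$ are too small for the counting to bite.
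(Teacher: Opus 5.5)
Your reduction to an exact partition of $S^n(r+1)$ by shadows is correct. The slab system you then set up is, in the paper's notation, exactly the local decomposition at the single base point $a=(r-1)e_1\in S^n(r-1)$. The centers meeting $T_0\cup T_1\cup T_2$ are precisely those in $\Omega(a)=\{z\in\Omega: z_1\ge r-1\}$, and $T_0\cup T_1\cup T_2$ is the set of points $a+2u$ and $a+u+w$, of size $1+2(n-1)+2(n-1)^2=\binom{2n-1}{2}+n=|\widetilde E_a|$.

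The gap is that a local system of this kind only produces \emph{lower} bounds on the number of centers involved (for instance the rank bound $|\Omega(a)|\ge 2n-1$ of Lemma~\ref{lem:strong}). Nothing local bounds from above how many centers with $z_1=r-1$ (or $z_1=r-2$, once $T_3$ is included) you use. For example, the center $(r+1-k)e_1+(r+k)u$ with $u=\pm e_j$, $j\ge2$, meets $T_0\cup\cdots\cup T_k$ only in the single point $(r+1-k)e_1+ku$. So equalities of the form $\sum N_{\mathrm{type}}\,|D\cap T_k|=|T_k|$ in aggregated type counts admit solutions by padding, for any fixed number of slabs. The ``waste'' of shadow outside the slabs costs nothing unless you impose a global budget such as $\sum_{z\in\Omega}c_{r+1}(z)=L_{r+1}$. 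Even averaging over all $2n$ tips would not help. A center meets the slab $T_k$ at the tip $\pm(r+1)e_i$ only if $|z_i|\ge r+1-k$, whereas for $2n\approx r^2$ most of $S^n(r+1)$ consists of unit-coordinate points covered by centers with small entries, cf.~\eqref{eq:unit-coordinate-identity}.

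The missing idea is a global double count over \emph{typical} base points. The paper sums the local bound $|\Omega(a)|\ge 2n-|\supp(a)|$ over all $a\in S^n(r-1)$, most of which have support close to $r-1$ rather than $1$. This gives $2nL_{r-1}-K_{r-1}=rL_r-(r-1)L_{r-1}$. It then compares this with $\sum_a|\Omega(a)|=\sum_z c_{r-1}(z)=L_{r+1}-\sum_z g(z)$. Here the refined unimodality estimate $g(z)\ge\frac{2}{r+2}\binom{s(z)}{r+1}$, together with \eqref{eq:unit-coordinate-identity}, gives the saving $\frac{2^{r+2}}{r+2}\binom n{r+1}$.

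The constant $25/34$ is not an LP multiplier. It is a rational lower approximation of $2/\e$ (via $\e<68/25$), where $\e$ arises as the limit of $L_{r-1}/\bigl(2^{r-1}\binom n{r-1}\bigr)$, a statistic of the whole shell that no finite analysis at a tip produces. The floor only reflects that $2n-r^2$ is an integer, and the $-3$ is slack in the finite comparison of Lemma~\ref{lem:comparison-25-34}.

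Two further problems. Your expected outcome ``$2n\le r^2+cr+O(1)$'' points the wrong way: you must show that existence forces $2n>r^2+\lfloor 25r/34\rfloor-3$. Also, the hypothesis of the theorem covers $r\ge n$, which the paper handles by \v{S}pacapan's theorem. It also covers $2n<r^2$ with $r<n$, handled by the rank bounds combined with the recurrence \eqref{eq:recurrence}, with $n=8,9$ checked numerically; that is where $n\ge8$ enters. Your plan does not address either range.
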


The next result provide a slight improvement to the quadratic boundary under the assumption that $r$ is not smaller than $r_0(\eps)$ which is independent from $n$.

\begin{theorem}\label{thm:asymptotic-intro}
For every $\eps>0$, there exists $r_0(\eps)$ such that, for $n\ge3$, no complete covering of $B^n(r)$ exists whenever $r\ge r_0(\eps)$ and
\[
 r^2+\left(\frac2{\e}-\eps\right)r\ge2n.
\]
Consequently, no perfect $r$-error-correcting Lee code in $\Z^n$ of length $n$ exists in this range.
\end{theorem}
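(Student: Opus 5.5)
The plan is to prove \cref{thm:asymptotic-intro} by weighted double counting on the first outer shell $S^n(r+1)$, using the disjointness of the shadows $D(z)$, $z\in\Omega$. We write $\Omega$ for a hypothetical complete covering.

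The first step is to describe shadows explicitly. For $z\in S^n(2r+1)$ and $x\in S^n(r+1)$ we have $\wtL(x)+\wtL(z-x)\ge\wtL(z)=2r+1$. Hence $x\in D(z)$ exactly when equality holds, that is, when $x$ lies coordinatewise "between" $0$ and $z$: $x_iz_i\ge0$ and $|x_i|\le|z_i|$ for every $i$. By \cref{def:complete-covering}(ii), every point of $S^n(r+1)$ is of the form $y+u$ as there, so the shadows $D(z)$, $z\in\Omega$, partition $S^n(r+1)$. Consequently, for any nonnegative weight function $w$ on $S^n(r+1)$,
\[
 \sum_{z\in\Omega}w(D(z))=w\bigl(S^n(r+1)\bigr).
\]
I would use two such identities at once. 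Let $A$ be the set of points of $S^n(r+1)$ with $r+1$ entries $\pm1$. Let $B$ be the set of points with one entry $\pm2$ and $r-1$ entries $\pm1$. Their sizes are
\[
|A|=2^{r+1}\binom{n}{r+1},\qquad |B|=2^{r}\,n\binom{n-1}{r-1},
\]
so that $|A|/|B|=2(n-r)/(r(r+1))$. In the regime $2n\approx r^2+cr$ this ratio equals $1+(c-3)/r+O(r^{-2})$, so the first-order term in $r$ is exactly where the constant $2/\e$ must appear.

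The second step classifies each $z\in\Omega$ by its profile, namely the multiset of $|z_i|$. For each profile it computes $a(z)=|D(z)\cap A|$ and $b(z)=|D(z)\cap B|$. For instance, if $z$ has $k$ entries equal to $\pm1$ and $j$ entries equal to $\pm2$ (so $k+2j=2r+1$), then $a(z)=\binom{k+j}{r+1}$ and $b(z)=j\binom{k+j-1}{r-1}$. Larger entries only lower $a$ relative to the total size $|D(z)|$. The identities $\sum_z a(z)=|A|$ and $\sum_z b(z)=|B|$, together with $\sum_z|D(z)|=|S^n(r+1)|$, then form a small linear system. Complete coverings can exist only if some convex combination of the vectors $(a(z),b(z),|D(z)|)$, scaled by $|\Omega|$, hits the target vector. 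To show it cannot, I would find a single linear functional $\lambda a+\mu b-|D|$ that is negative on every profile but would have to be nonnegative in total.

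The main obstacle is this uniform estimate over all profiles. One must show that, when $j$ ranges up to about $r$, the quantity $\lambda\binom{k+j}{r+1}+\mu j\binom{k+j-1}{r-1}$ never exceeds $|D(z)|$ after the optimal choice of $\lambda,\mu$. The extremal profiles should be those with $j\approx r/2$. There the ratio $\binom{k+j-1}{r-1}/\binom{k+j}{r+1}$ and the factors $(1-1/r)^{r}\to\e^{-1}$ arise when Stirling-type estimates are applied. This should yield the margin $(2/\e)r$ up to $o(r)$, which is why $r\ge r_0(\eps)$ is needed to absorb the error into $\eps r$.

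I would first carry out the exact optimization for $j/r\to t\in[0,1/2]$ in the continuous limit, identify the maximizing $t$, and confirm the constant $2/\e$. Then I would bound the discrete error terms uniformly, using log-concavity of the binomials in $j$ to reduce to a neighbourhood of the maximizer. Finally, the requirement $n\ge3$ imposes no extra constraint: for $r\ge r_0$ the hypothesis forces $n\le(r^2+r)/2$, and the counting argument needs only $n\ge r+1$ for $A$ to be nonempty. If $n\le r$, I would instead appeal to the large-radius regime already covered by \cref{thm:main}, or handle it separately.
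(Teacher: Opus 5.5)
Your proposal has a genuine gap. The functional $\lambda a+\mu b-|D|$ you want does not exist once $r$ is large, whatever $\lambda,\mu$ are and however carefully the discrete errors are handled.

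Take the profile with $j$ entries $\pm2$ and $2r+1-2j$ entries $\pm1$. Its dominated points of weight $r+1$ having exactly $i$ entries $\pm2$ number $\binom{j}{i}\binom{2r+1-j-i}{r+1-2i}$. For fixed $j$ this is $(1+o(1))\binom{j}{i}2^{-i}a(z)$, so $(a,b,|D|)=a(z)\bigl(1,\tfrac j2+o(1),(\tfrac32)^j+o(1)\bigr)$. In the regime $2n=r^2+cr+O(1)$ the target satisfies $|B|/|A|\to1$. By the same top-support computation as in \cref{lem:top-support}, it also satisfies $|S^n(r+1)|/|A|\to\e$. But
\[
 (1,1,\e)=x\,(1,0,1)+(1-2x)\bigl(1,1,\tfrac94\bigr)+x\,\bigl(1,2,\tfrac{81}{16}\bigr),
 \qquad x=\tfrac{16}{25}\bigl(\e-\tfrac94\bigr)\approx0.30,
\]
where all three coefficients are positive and the three vectors are linearly independent. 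Hence, for all large $r$, $(|A|,|B|,|S^n(r+1)|)$ is a positive combination of the exact vectors of the profiles $j=0,2,4$. Any functional that is negative on those three profiles is then negative at the target. The three identities you use are therefore fractionally consistent, and no linear certificate built from them can produce the $2/\e$ margin.

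Your guess about the extremal profiles is also off. For $j\approx tr$, both $b/a$ and $|D|/b$ tend to infinity, so those profiles are harmless; the binding profiles are those with bounded $j$.

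The missing ingredient is an inequality that does not follow from counting on $S^n(r+1)$. The paper passes to the inner shell. For each $a\in S^n(r-1)$, \cref{prop:clique-decomp} assigns every compatible pair of outward directions and every repeated direction to exactly one center of $\Omega(a)$. So the incidence matrix satisfies $N_aN_a^{\mathsf T}=J+D-P$, with $D-P$ positive definite because every $\rho_u\ge3$. This gives the Fisher-type bound $|\Omega(a)|\ge2n-|\supp(a)|$ of \cref{lem:strong}, and summing yields $\sum_{z\in\Omega}c_{r-1}(z)\ge rL_r-(r-1)L_{r-1}$.

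The outer-shell partition is used only on the other side. Through symmetric unimodality and the gap bound $g(z)\ge\frac2{r+2}\binom{s(z)}{r+1}$, it gives $\sum_z c_{r-1}(z)\le L_{r+1}-\frac{2^{r+2}}{r+2}\binom n{r+1}$. The constant $2/\e$ comes from comparing these two bounds, via $\Phi_{n,r}/L_{r-1}=(2/\e-c)r+o(r)$ and $L_{r-1}/(2^{r-1}\binom n{r-1})\to\e$.

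Your treatment of the remaining ranges is also incomplete. The hypothesis includes all $2n<r^2$, which the paper handles with the rank bounds in \cref{lem:square-strong} and \cref{lem:square-weak}. For $r\ge n$ you need \cref{cor:spacapan-Zn}, not \cref{thm:main}, since the latter assumes $n\ge8$.
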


Since the existence of a complete covering is a necessary—but not sufficient—condition for a perfect Lee code, our local impossibility results are stronger than merely ruling out perfect codes. It also yields a strict improvement over the large-radius perfect-code bound highlighted in \cite{HorakKim2018}.  The comparison is summarized in Table~\ref{tab:large-radius-comparison}.

\begin{table}[H]
\centeredfigurecaption{Comparison with the large-radius bounds}
\label{tab:large-radius-comparison}
\centering
\renewcommand{\arraystretch}{1.16}
\begin{tabularx}{\textwidth}{@{}p{0.25\textwidth}X p{0.3\textwidth}@{}}
\toprule
Source & Parameter condition & Conclusion \\
\midrule
Lepist\"o \cite{Lepisto1981} for $\mathbb{Z}^n_q$, extended to $\mathbb{Z}_q^n$ in  \cite{HorakKim2018} & $r\geq 285$ and $r^2+4r+4 > 2.1 n$ & No perfect Lee code\\
Horak--Kim \cite{HorakKim2018}
& $n\ge876$ and $r^2-40\ge2n$
& No perfect Lee code \\
Theorem~\ref{thm:main}
& $n\ge8$ and $r^2+\lfloor25r/34\rfloor-3\ge2n$
& No complete covering; hence no perfect Lee code \\
Theorem~\ref{thm:asymptotic-intro}
& $n\ge3$, $r$ sufficiently large, and $r^2+(2/\mathrm e-\eps)r\ge2n$
& No complete covering; hence no perfect Lee code \\
\bottomrule
\end{tabularx}
\end{table}

Indeed, the Horak--Kim range is contained in the range of Theorem~\ref{thm:main}: if $r^2\ge2n+40$, then the left-hand side of \eqref{eq:main-condition} is larger than $2n$.  The containment is strict.  For example, Theorem~\ref{thm:main} excludes a perfect Lee code with $(n,r)=(876,42)$, whereas $42<\sqrt{2\cdot876+40}$.  For large $n$, the threshold in Theorem~\ref{thm:main} has the form
\[
 r\ge \sqrt{2n}-\frac{25}{68}+O(n^{-1/2}),
\]
and the asymptotic threshold in Theorem~\ref{thm:asymptotic-intro} approaches
\[
 r\ge \sqrt{2n}-\frac1{\mathrm e}+o(1).
\]
By contrast, $\sqrt{2n+40}=\sqrt{2n}+O(n^{-1/2})$.  Thus the present results give new nonexistence cases for perfect Lee codes and, at the same time, rule out the stronger local boundary configuration.

\begin{remark}\label{rem:lepisto-horak-kim}
It is curious that our method gives almost the same bound as Lepist\"o's and the Horak-Kim bound. However, the hypotheses and the methods for these three bounds are quite different. Lepist\"o begins with a perfect Lee code and averages over all translates of a coordinate-truncated shell of radius $r+2$ in order to find a critical set containing many codewords; a coordinate-distribution estimate then forces two of those codewords to be too close \cite{Lepisto1981}.  The large-radius result highlighted by Horak and Kim is likewise a global perfect-code result: it combines density arguments, including extensions of earlier periodic arguments, with a linear-programming estimate \cite{HorakKim2018}.  Neither argument starts from the first outer shell around one Lee sphere alone.  Since a complete covering need not extend to a perfect Lee code, the nonexistence theorems in \cite{Lepisto1981,HorakKim2018} do not directly imply the nonexistence of complete coverings.

There is nevertheless a useful local resemblance.  For every $a\in S^n(r-1)$ and every center $z\in\Omega$ satisfying
$a\preceq z$, the vector $z-a$ has Lee weight $r+2$, which is the same shell radius used in Lepist\"o's critical sets.  In our setting, however, exact coverage gives more rigid information: every compatible pair of outward directions and every repeated outward direction is assigned to exactly one center.  We use this exact local decomposition to obtain an incidence-rank lower bound, and we combine it with a coefficient-gap upper bound derived solely from the first outer shell.  Thus the present proof does not require the global averaging step that is essential in the modified Elias approach.
\end{remark}

There is also a coding interpretation.  Every complete covering is a \emph{constant-Lee-weight code} of weight $2r+1$ and minimum Lee distance at least $2r+2$ which attains equality in a natural shadow-packing inequality.  Theorems~\ref{thm:main} and~\ref{thm:asymptotic-intro} show that this equality cannot occur in the stated parameter ranges.  On the positive side, the near-optimal ternary constant-weight codes of Liu and Shangguan \cite{LiuShangguan2025} give almost complete coverings for every fixed $r$ and all sufficiently large $n$.  This construction asymptotically attains the corresponding Johnson-type bound on the unit-coordinate part of the Lee shell.

The paper is organized as follows.  Section~\ref{sec:prelim} gives the shadow formulation, its coding interpretation, the relation with optimal local structures in Cartesian products of cycles, and the generating-function and rank tools.  Section~\ref{sec:constructions} contains the exact low-dimensional constructions and the almost-complete construction.  Section~\ref{sec:nonexistence} develops local decompositions associated with compatibility graphs and proves Theorems~\ref{thm:main} and~\ref{thm:asymptotic-intro}.  Section~\ref{sec:conclusion} summarizes the consequences and records some further questions.

\section{Preliminaries}\label{sec:prelim}

\subsection{Signed domination and exact shadow partition}

For $x,z\in\Z^n$, write $x\preceq z$ if, for every coordinate $i$,
\[
 x_i z_i\ge0
 \qquad\text{and}\qquad
 |x_i|\le |z_i|.
\]
We refer to this relation as \emph{signed coordinatewise domination}.  Thus every coordinate of $x$ lies on the integer interval joining $0$ to the corresponding coordinate of $z$.

For $z\in S^n(2r+1)$, its \emph{shadow on the first outer shell} is
\[
 D(z)=S^n(r+1)\cap\bigl(B^n(r)+z\bigr).
\]
For convenience, we usually call it the shadow of $z$ when the radius $r$ is clear from the context.
The following elementary formula will be used throughout the paper.

\begin{lemma}\label{lem:shadow}
For every $z\in S^n(2r+1)$,
\[
 D(z)=\{x\in S^n(r+1):x\preceq z\}.
\]
\end{lemma}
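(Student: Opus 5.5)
The plan is to prove the two inclusions separately, using only the triangle inequality for the Lee weight and the fact that, coordinatewise, $|a|+|b|=|a+b|$ holds exactly when $ab\ge 0$.

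For the inclusion ``$\supseteq$'', take $x\in S^n(r+1)$ with $x\preceq z$. Since $x_i$ and $z_i$ have the same sign (or one of them vanishes) and $|x_i|\le|z_i|$, we have $|z_i-x_i|=|z_i|-|x_i|$ for every $i$. Summing over $i$ gives
\[
\wtL(z-x)=\wtL(z)-\wtL(x)=(2r+1)-(r+1)=r.
\]
Hence $x-z\in B^n(r)$, so $x\in B^n(r)+z$, and therefore $x\in D(z)$.

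For the inclusion ``$\subseteq$'', take $x\in D(z)$, so that $\wtL(x)=r+1$ and $\wtL(x-z)\le r$. The triangle inequality gives
\[
2r+1=\wtL(z)\le \wtL(x)+\wtL(z-x)\le (r+1)+r=2r+1,
\]
so equality holds throughout. In particular $\sum_i |z_i|=\sum_i\bigl(|x_i|+|z_i-x_i|\bigr)$. Since $|z_i|\le|x_i|+|z_i-x_i|$ holds term by term, equality of the sums forces equality in each coordinate. Equality in $|a+b|\le|a|+|b|$, with $a=x_i$ and $b=z_i-x_i$, means $x_i(z_i-x_i)\ge0$. Consequently either $x_i=0$, or $z_i-x_i$ is zero or has the same sign as $x_i$. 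In both cases $x_iz_i\ge0$ and $|z_i|=|x_i|+|z_i-x_i|\ge|x_i|$, which is exactly $x\preceq z$.

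The argument is essentially routine. The one step requiring care is passing from equality of the sums to coordinatewise equality, and then translating the equality case of the scalar triangle inequality into the two conditions defining $\preceq$. The edge cases $x_i=0$ and $z_i=x_i$ need to be handled explicitly there.
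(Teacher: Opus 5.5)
Your proposal is correct and follows essentially the same route as the paper. Both arguments use the chain $2r+1=\wtL(z)\le\wtL(x)+\wtL(z-x)\le 2r+1$ for one inclusion and the absence of coordinatewise cancellation, giving $\wtL(z-x)=r$, for the other; you simply make explicit the coordinatewise equality cases that the paper leaves implicit.
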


\begin{proof}
If $x\in D(z)$, then $\wtL(x)=r+1$ and $\wtL(z-x)\le r$.  Hence
\[
 2r+1=\wtL(z)
 \le \wtL(x)+\wtL(z-x)
 \le(r+1)+r=2r+1.
\]
Equality holds throughout.  Equality in the $\ell_1$ triangle inequality holds coordinatewise and says that there is no cancellation between $x_i$ and $z_i-x_i$.  This is equivalent to $x_i z_i\ge0$ and $|x_i|\le|z_i|$ for every $i$, so $x\preceq z$.

Conversely, if $x\in S^n(r+1)$ and $x\preceq z$, then there is no coordinate cancellation in $z=x+(z-x)$.  Therefore
\[
 \wtL(z-x)=\wtL(z)-\wtL(x)=r,
\]
and hence $x\in B^n(r)+z$.
\end{proof}

The definition of a complete covering is therefore equivalent to an exact partition of one Lee shell.

\begin{proposition}\label{prop:shell-partition}
A set $\Omega\subseteq S^n(2r+1)$ is a complete covering if and only if
\begin{equation}\label{eq:shell-partition}
 S^n(r+1)=\bigsqcup_{z\in\Omega}D(z).
\end{equation}
\end{proposition}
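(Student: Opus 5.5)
The plan is to translate condition (ii) of Definition~\ref{def:complete-covering} into a statement about points of $S^n(r+1)$, and then let Lemma~\ref{lem:shadow} convert membership in a translated sphere into membership in a shadow. Throughout, $\Omega\subseteq S^n(2r+1)$, which is condition (i).

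\textbf{Step 1: the points reached in (ii) are exactly $S^n(r+1)$.} Take $y\in S^n(r)$ and $u\in\cE_n$. Then $\wtL(y+u)\in\{r-1,r+1\}$, so the requirement $y+u\notin B^n(r)$ forces $y+u\in S^n(r+1)$. Conversely, let $x\in S^n(r+1)$, so $x\neq0$. Pick a coordinate $i$ with $x_i\neq0$ and set $u=\sgn(x_i)e_i$. Then $y=x-u$ satisfies $\wtL(y)=r$, and $y+u=x\notin B^n(r)$. Hence the set of points $y+u$ occurring in (ii) is precisely $S^n(r+1)$.

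\textbf{Step 2: rewrite (ii) pointwise.} Condition (ii) depends on the pair $(y,u)$ only through the point $x=y+u$. By Step 1, (ii) is therefore equivalent to the following: for every $x\in S^n(r+1)$ there is exactly one $z\in\Omega$ with $x\in B^n(r)+z$.

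\textbf{Step 3: identify the relevant points with shadows.} For $x\in S^n(r+1)$, the condition $x\in B^n(r)+z$ is the same as $x\in D(z)$, by the definition $D(z)=S^n(r+1)\cap(B^n(r)+z)$. Lemma~\ref{lem:shadow} also gives $D(z)\subseteq S^n(r+1)$ for every $z\in\Omega$. So the condition of Step 2 says exactly that the sets $D(z)$, $z\in\Omega$, are pairwise disjoint and that their union is $S^n(r+1)$. This is \eqref{eq:shell-partition}.

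Both directions follow at once, because each step above is an equivalence. The only point needing care is the surjectivity in Step 1, namely that every point of the first outer shell actually arises as some $y+u$. The argument is otherwise routine.
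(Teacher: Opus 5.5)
Your proof is correct and takes the same route as the paper: both observe that every $x\in S^n(r+1)$ has the form $y+u$ (subtract one unit in the sign of a nonzero coordinate), so condition (ii) says exactly that each point of the shell lies in exactly one shadow. Your Step~1 also spells out the reverse containment, namely that $y+u\notin B^n(r)$ forces $\wtL(y+u)=r+1$, which the paper leaves implicit.
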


\begin{proof}
Every $x\in S^n(r+1)$ can be written as $x=y+u$, where $y\in S^n(r)$, $u\in\cE_n$, and $y+u\notin B^n(r)$: choose a nonzero coordinate of $x$ and subtract one unit in its sign.  Part~\textup{(ii)} of Definition~\ref{def:complete-covering} therefore says precisely that every point of $S^n(r+1)$ lies in exactly one shadow.  The converse is immediate.
\end{proof}

In Figure \ref{fig:complete_covering}, we provide an example in $\mathbb{Z}^2$ to illustrate Proposition \ref{prop:shell-partition}. In  this example, $r=n=2$, $S^2(3)$ is colored in yellow, $\Omega$ is colored in blue and it consists of points $(3,-2), (2,3), (-3,2)$ and  $(-2,-3)$. 
\begin{figure}[h!]
	\centering
	\includegraphics[width=0.5\textwidth]{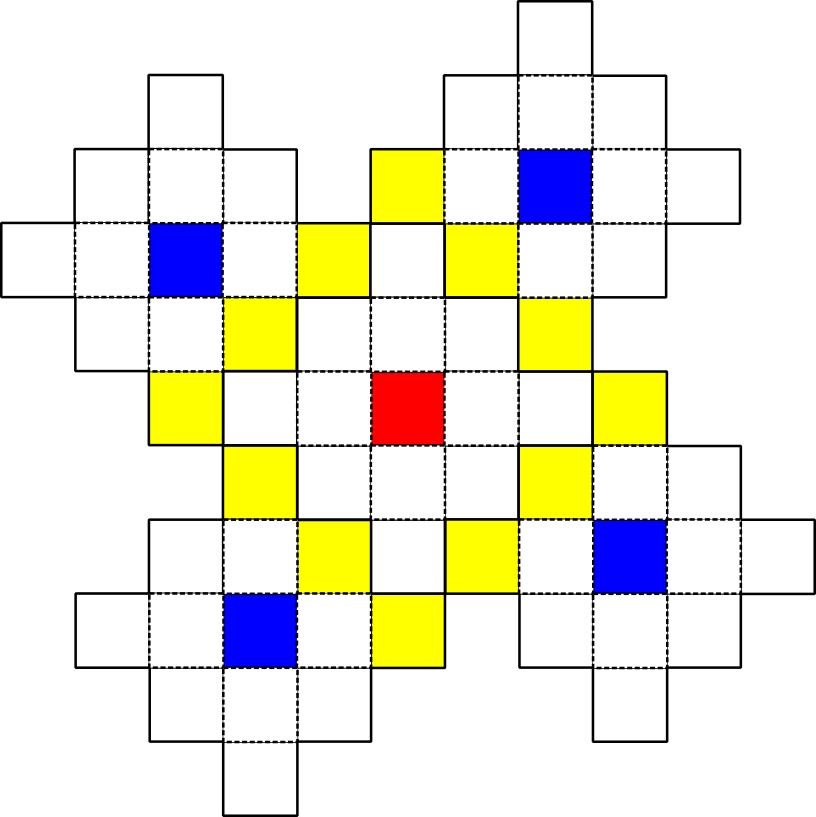}
	\centeredfigurecaption{A complete covering $\Omega\subseteq S^2(5)$ and the corresponding partition of $S^2(3)$\label{fig:complete_covering}}
\end{figure}

For later comparison with tilings and codes, we record the separation forced by the shell partition presented in Proposition~\ref{prop:shell-partition}

\begin{lemma}\label{lem:disjoint-balls}
If $\Omega$ is a complete covering, then the balls
\[
 \{B^n(r)+z:z\in\Omega\}
\]
are pairwise disjoint, are disjoint from $B^n(r)$, and cover $S^n(r+1)$.
\end{lemma}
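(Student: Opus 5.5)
The lemma has three assertions, and I will prove them in the order: covering, disjointness from the central ball, mutual disjointness. Throughout I use Proposition~\ref{prop:shell-partition}, which gives $\Omega\subseteq S^n(2r+1)$ and the shell partition \eqref{eq:shell-partition}.

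The covering assertion is immediate. By \eqref{eq:shell-partition}, every $x\in S^n(r+1)$ lies in some $D(z)\subseteq B^n(r)+z$.

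Disjointness from $B^n(r)$ is also short. If $x\in B^n(r)\cap(B^n(r)+z)$, then the triangle inequality gives
\[
 2r+1=\wtL(z)\le\wtL(x)+\wtL(z-x)\le 2r,
\]
which is a contradiction.

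The main work is pairwise disjointness. Suppose $z\ne z'$ in $\Omega$ and $w\in(B^n(r)+z)\cap(B^n(r)+z')$. I aim to produce a point of $S^n(r+1)$ lying in both $D(z)$ and $D(z')$, which contradicts the disjoint union in \eqref{eq:shell-partition}.

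The plan is to push $w$ toward the origin inside both balls simultaneously. Set $a=z-w$ and $b=z'-w$, so that $\wtL(a)\le r$ and $\wtL(b)\le r$. I claim there is a point $x$ with
\[
 x\preceq z,\qquad x\preceq z',\qquad \wtL(x)=r+1.
\]
To find it, define $m\in\Z^n$ coordinatewise as the signed minimum of $z$ and $z'$: $m_i=\sgn(z_i)\min(|z_i|,|z_i'|)$ when $z_i z_i'>0$, and $m_i=0$ otherwise. Then every $x\preceq m$ satisfies $x\preceq z$ and $x\preceq z'$.

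Next I bound $\wtL(m)$ from below. Coordinatewise one has
\[
 |z_i|+|z'_i|-|z_i-z'_i|=2|m_i|.
\]
This holds in both cases: when the signs agree, $|z_i|+|z'_i|-\bigl||z_i|-|z'_i|\bigr|=2\min(|z_i|,|z'_i|)$; when they disagree or one is zero, $|z_i-z'_i|=|z_i|+|z'_i|$. Summing over $i$ gives
\[
 2\wtL(m)=2(2r+1)-\dL(z,z').
\]
Since $\dL(z,z')\le\wtL(a)+\wtL(b)\le 2r$, this yields $\wtL(m)\ge r+1$.

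Because the Lee weight changes by one each time a coordinate moves one step toward $0$, I can choose $x\preceq m$ with $\wtL(x)=r+1$ exactly. By Lemma~\ref{lem:shadow}, this $x$ lies in $D(z)\cap D(z')$, contradicting the disjointness in \eqref{eq:shell-partition}.

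The only delicate step is the coordinatewise identity for $m$, in particular the case where the signs of $z_i$ and $z'_i$ differ or one of them is zero; this is routine to check.
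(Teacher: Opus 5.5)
Your proof is correct and follows the paper's argument. Your $\wtL(m)$ is the paper's $\mu(z,z')$, your coordinatewise identity is its distance formula $\dL(z,z')=4r+2-2\mu(z,z')$, and your choice of $x\preceq m$ with $\wtL(x)=r+1$ fills in the paper's claim that two shadows meet once $\mu(z,z')\ge r+1$; the only difference is that you argue by contradiction from a common point of two balls, rather than first deducing $\dL(z,z')\ge 2r+2$.
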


\begin{proof}
The covering assertion follows from Proposition~\ref{prop:shell-partition}.  Since $\wtL(z)=2r+1$, the central ball and $B^n(r)+z$ are disjoint.

For distinct $z,z'\in S^n(2r+1)$, put
\[
 \mu(z,z')=
 \sum_{i:z_i z_i'>0}\min\{|z_i|,|z_i'|\}.
\]
By Lemma~\ref{lem:shadow}, the two shadows intersect if and only if $\mu(z,z')\ge r+1$.  A coordinatewise calculation gives
\begin{equation}\label{eq:distance-mu}
 \dL(z,z')=4r+2-2\mu(z,z').
\end{equation}
The shadows in a complete covering are disjoint, so $\mu(z,z')\le r$, and hence $\dL(z,z')\ge2r+2$.  Thus the corresponding radius-$r$ balls are disjoint.
\end{proof}

\subsection{Optimal local structures}\label{sec:optimal-local}

\v{S}pacapan studied a closely related local problem in Cartesian products of cycles and paths \cite{Spacapan2007}.  Let $G$ be a graph with graph distance $d_G$.  For a vertex $v$, write
\[
 B_G(v,r)=\{x\in V(G):d_G(x,v)\le r\}
\]
and
\[
 N_G(v,r)=\{x\in V(G):d_G(x,v)=r+1\}.
\]
An \emph{optimal local structure} for $B_G(v,r)$ is a family of pairwise disjoint radius-$r$ balls, each disjoint from $B_G(v,r)$, whose union contains $N_G(v,r)$.  This definition isolates the boundary configuration forced around one codeword of a perfect code in a graph.  In $\Z^n$, Lemma~\ref{lem:disjoint-balls} shows that every complete covering gives an optimal local structure around the central Lee sphere.  Conversely, after discarding balls which do not meet the outer shell, an optimal local structure in $\Z^n$ has centers of Lee weight $2r+1$ and gives a complete covering.  Thus the two notions coincide in the integer lattice.

The following proposition gives an explicit transfer in both directions between the integer lattice and sufficiently long Cartesian products of cycles.  The two directions require different lower bounds on the cycle lengths because descending a complete configuration must preserve all neighboring balls simultaneously, whereas lifting uses only distances from the central ball and its first outer shell.

\begin{proposition}\label{prop:cycle-lifting}
Let
\[
 G=C_{\ell_1}\times \cdots\times C_{\ell_n}.
\]
The following statements hold.
\begin{enumerate}[label=\textup{(\roman*)}]
 \item If $B^n(r)$ admits a complete covering and
 \[
  \ell_i\ge6r+3\qquad(1\le i\le n),
 \]
 then a radius-$r$ ball in $G$ admits an optimal local structure.
 \item If a radius-$r$ ball in $G$ admits an optimal local structure and
 \[
  \ell_i\ge4r+3\qquad(1\le i\le n),
 \]
 then $B^n(r)$ admits a complete covering in $\Z^n$.
\end{enumerate}
\end{proposition}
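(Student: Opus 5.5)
Both directions go through the quotient map $\pi:\Z^n\to G$, $\pi(x)=(x_1\bmod \ell_1,\ldots,x_n\bmod\ell_n)$, with the central ball placed at $\pi(0)$. The basic fact we use is that $\pi$ is an isometry on sets of small diameter. For $x,y\in\Z^n$ we have $d_G(\pi x,\pi y)=\sum_i \min_{k\in\Z}|x_i-y_i-k\ell_i|$. Hence $d_G(\pi x,\pi y)=\dL(x,y)$ whenever $|x_i-y_i|\le \ell_i/2$ for every $i$. More generally, $d_G(\pi x,\pi y)\ge \dL(x,y)$ fails only if some coordinate difference exceeds $\ell_i/2$; in that case the representative of $x-y$ nearest to the origin differs from $x-y$ in some coordinate by a nonzero multiple of $\ell_i$.

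For (i), let $\Omega$ be a complete covering and take the family $\{B_G(\pi z,r):z\in\Omega\}$. We must check three things: each ball is disjoint from $B_G(\pi 0,r)$, the balls are pairwise disjoint, and their union contains $N_G(\pi 0,r)$.

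\emph{Covering.} Every point of $\Z^n$ with $|x_i|\le\ell_i/2$ for all $i$ maps isometrically to distance $\wtL(x)$ from $\pi 0$. Also $\pi$ is surjective on the box $\prod_i[-\ell_i/2,\ell_i/2]$. So $N_G(\pi0,r)=\pi(S^n(r+1))$, and covering follows from Proposition~\ref{prop:shell-partition}.

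\emph{Disjointness.} All relevant points lie in $B^n(3r+1)$: the centers have weight $2r+1$ and the balls around them reach weight at most $3r+1$. Two points of $B^n(3r+1)$ differ by at most $6r+2<\ell_i$ in each coordinate. I will argue that if $\pi a=\pi b$ for $a\in B^n(r)+z$ and $b\in B^n(r)+z'$ (or $b\in B^n(r)$), then $a-b\in\prod_i\ell_i\Z$, so each $|a_i-b_i|\in\{0\}\cup[\ell_i,\infty)$. Since $|a_i-b_i|\le 6r+2<\ell_i$, we get $a=b$, contradicting Lemma~\ref{lem:disjoint-balls}. This is the step that needs $6r+3$.

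There is one subtlety: a point of $B_G(\pi z,r)$ is $\pi a$ with $\dL(a,z)\le r$, and it must be reachable from $z$ in $\Z^n$ without wrapping. This holds because a geodesic of length $\le r<\ell_i/2$ in $G$ lifts to a lattice path of the same length.

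For (ii), take an optimal local structure $\{B_G(w,r):w\in W\}$ around $\pi 0$. Discard the balls that miss $N_G(\pi0,r)$. Every remaining $w$ has $d_G(\pi0,w)=2r+1$: it is at least $2r+1$ by disjointness from the central ball (the distance between the two centres exceeds $2r$), and at most $2r+1$ since the ball meets $N_G$. Using $2r+1\le\ell_i/2$, lift each $w$ to the unique $z\in\Z^n$ with $|z_i|\le \ell_i/2$ and $\pi z=w$; then $\wtL(z)=d_G(\pi0,w)=2r+1$. Put $\Omega$ equal to the set of these lifts.

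We now check the shell partition of Proposition~\ref{prop:shell-partition}. For covering, take $x\in S^n(r+1)$. Then $\pi x\in N_G$, so $d_G(\pi x,w)\le r$ for some $w$. The coordinatewise differences between $x$ and $z$ are at most $(r+1)+(2r+1)=3r+2$, and we need them to be at most $\ell_i/2$ to conclude $\dL(x,z)=d_G(\pi x,w)\le r$. Here is the main obstacle, because $4r+3$ only gives $\ell_i/2\ge 2r+1.5$. I would try to argue per coordinate instead: if $|x_i-z_i|>\ell_i/2$, then $x_i$ and $z_i$ have opposite signs. In that coordinate the wrapped distance $\ell_i-|x_i-z_i|$ is at least $\ell_i-(3r+2)\ge r+1$, so $d_G(\pi x,w)\ge r+1$, a contradiction. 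Hence no wrapping occurs and $x\in D(z)$ by Lemma~\ref{lem:shadow}.

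For uniqueness, suppose $x\in D(z)\cap D(z')$. Then $\dL(x,z),\dL(x,z')\le r$. Every coordinate difference between $x$ and $z$ is at most $r<\ell_i/2$, so the same point $\pi x$ lies in $B_G(\pi z,r)\cap B_G(\pi z',r)$. By the disjointness of the balls, $\pi z=\pi z'$, and hence $z=z'$ by uniqueness of the lift. This completes the proof.
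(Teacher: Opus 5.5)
Your proposal is correct and follows essentially the same route as the paper's proof. For (i) you use injectivity of reduction on the box containing $B^n(3r+1)$ together with the identification of $G$-balls and the shell $N_G$ as images of Lee balls and $S^n(r+1)$. For (ii) you lift the centres to their unique short representatives and exclude wrap-around coordinatewise via $\ell_i-(3r+2)\ge r+1$, which is the paper's comparison $\ell_i-r\ge 3r+3>3r+2$ in slightly different form.
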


\begin{proof}
Let
\[
 \pi:\Z^n\longrightarrow
 C_{\ell_1}\times \cdots\times C_{\ell_n}
\]
be coordinatewise reduction.

For part~\textup{(i)}, let $\Omega$ be a complete covering.  Every point in the central ball or in one of the balls $B^n(r)+z$, $z\in\Omega$, has every coordinate in the interval
\[
 [-3r-1,3r+1].
\]
When $\ell_i\ge6r+3$, reduction is injective on this box.  Moreover, every radius-$r$ ball in the product of cycles is the image of the corresponding Lee ball in $\Z^n$, and the shell $N_G(0,r)$ is the image of $S^n(r+1)$; these assertions follow because the relevant radii are strictly smaller than one half of every cycle length.  Lemma~\ref{lem:disjoint-balls} therefore shows that the balls centered at $\pi(z)$, $z\in\Omega$, are pairwise disjoint, avoid the central ball, and cover $N_G(0,r)$.  They form an optimal local structure.

For part~\textup{(ii)}, use vertex transitivity to take the central vertex to be $0$, and discard any ball in the optimal local structure which does not meet $N_G(0,r)$.  If a remaining ball has center $z$, then disjointness from the central ball gives
\[
 d_G(0,z)\ge2r+1,
\]
while its intersection with $N_G(0,r)$ gives the reverse inequality.  Hence $d_G(0,z)=2r+1$.  Since $\ell_i\ge4r+3$, both $z$ and every point $x\in N_G(0,r)$ have unique short representatives $\widetilde z,\widetilde x\in\Z^n$ satisfying
\[
 \wtL(\widetilde z)=2r+1,
 \qquad
 \wtL(\widetilde x)=r+1.
\]
Suppose that the ball centered at $z$ contains $x$.  If a coordinate difference between $\widetilde z$ and $\widetilde x$ wrapped around its cycle, then
\[
 |\widetilde z_i-\widetilde x_i|
 \ge \ell_i-r
 \ge3r+3.
\]
On the other hand,
\[
 |\widetilde z_i-\widetilde x_i|
 \le |\widetilde z_i|+|\widetilde x_i|
 \le3r+2,
\]
a contradiction.  Therefore
\[
 \wtL(\widetilde z-\widetilde x)=d_G(z,x)\le r.
\]
It follows that
\[
 2r+1=\wtL(\widetilde z)
 \le\wtL(\widetilde x)+\wtL(\widetilde z-\widetilde x)
 \le2r+1.
\]
Thus $\wtL(\widetilde z)
=\wtL(\widetilde x)+\wtL(\widetilde z-\widetilde x)$ which implies $\widetilde x\preceq\widetilde z$.  Pairwise disjointness of the balls gives uniqueness.  Hence the lifted centers form a complete covering of $B^n(r)$.
\end{proof}

The following theorem is the main local nonexistence result of \cite{Spacapan2007}.

\begin{theorem}\label{thm:spacapan}
Let
\[
 G=C_{\ell_1}\times \cdots\times C_{\ell_n}.
\]
If
\[
 n\ge3,
 \qquad
 r\ge n,
 \qquad
 \ell_i\ge2r+1\quad(1\le i\le n),
\]
then no radius-$r$ ball in $G$ admits an optimal local structure.
\end{theorem}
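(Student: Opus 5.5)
Since this theorem is quoted from \cite{Spacapan2007}, the paper will presumably just cite it. If I had to supply a proof, I would argue locally near one coordinate axis, as follows.

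\emph{Setup and reduction.} By vertex transitivity, put the central vertex at $0$ and discard every ball that misses $N_G(0,r)$. The surviving centers are at distance exactly $2r+1$ from $0$. Because $\ell_i\ge 2r+1$, the ball $B_G(0,r)$ is an isometric copy of $B^n(r)$. The shell $N_G(0,r)$ is the image of $S^n(r+1)$, except that antipodal points may be identified in a coordinate with $\ell_i=2r+1$ or $2r+2$.

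I would not try to lift the whole configuration to $\Z^n$ as in Proposition~\ref{prop:cycle-lifting}(ii), because the bound $\ell_i\ge 4r+3$ is not available. Instead I would only use points of $N_G(0,r)$ lying close to the axis direction $e_1$. These have a unique short representative and sit far from any wrap-around in the other coordinates. For centers covering such points, the argument of Proposition~\ref{prop:cycle-lifting}(ii) goes through in the other coordinates. It yields the domination picture of Lemma~\ref{lem:shadow} and the overlap quantity $\mu$ of Lemma~\ref{lem:disjoint-balls}, at least after restricting to coordinates $2,\dots,n$.

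\emph{Step 1: the center of the axis point.} Let $z^{(0)}$ be the center covering $(r+1)e_1$. It satisfies $z^{(0)}_1\ge r+1$, so it has at most $r$ units of weight outside coordinate $1$.

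\emph{Step 2: the points $re_1\pm e_j$.} Each of the $2(n-1)$ points $re_1\pm e_j$ with $j\ge2$ is covered by a center $z$ with $z_1\ge r$. Disjointness of shadows means $\mu\le r$ for any two such centers. This forces:
\begin{itemize}
\item every other such center has $z_1=r$ exactly;
\item any two of these centers share no signed direction outside coordinate $1$.
\end{itemize}
Hence the $2n-2$ signed directions $\pm e_j$ ($j\ge 2$) are distributed among these centers. Each center with $z_1=r$ carries $r+1$ units of weight spread over its own private signed directions.

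\emph{Step 3: the layer $(r-1)e_1+w$ with $\wtL(w)=2$.} Repeat the analysis for the points $(r-1)e_1+w$ with $\wtL(w)=2$ and $w$ supported off coordinate $1$. For $w=\pm e_j\pm e_k$ with $\pm e_j,\pm e_k$ owned by the same center, the point is already covered by that center. Otherwise $\pm e_j$ and $\pm e_k$ are owned by different centers, and the point needs a new center $z'$ with $z'_1=r-1$ that dominates both directions. Such a center has $r+2$ units of weight off coordinate $1$, while its overlap with each previous center is at most $1$ (in coordinate $1$ the overlap is already $r-1$).

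\emph{Step 4: the count.} Count the cross pairs of signed directions that must be covered, using that each new center can only cover cross pairs among the directions in its support. Then compare this with the weight budget $r+2$ per center. The condition $r\ge n$ should enter here: a center with $z_1=r$ has $r+1\ge n+1$ units of weight but at most $n-1$ coordinates in which to place them. So some coordinate $z_j$ must be at least $2$, and the points $(r-1)e_1+2\sigma e_j$ then force a further incompatibility. I would aim to show that the $2n-2$ directions cannot be split so that both the repeated-direction points $2\sigma e_j$ and all cross pairs are covered exactly once. I would finish with a parity or double-counting contradiction, using $n\ge3$ to guarantee at least two off-axis coordinates.

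\emph{Main obstacle.} The hard step is Step 4: finding the right finite layer near the axis, possibly going down to $x_1=r-2$, at which the exact-coverage constraints become contradictory for every $r\ge n$. The cycle wrap-around when $\ell_i=2r+1$ is a secondary nuisance. I would handle it by checking that all points and centers used above have every coordinate of absolute value at most $r+1$, which can be ensured since only weight off coordinate $1$ is involved.
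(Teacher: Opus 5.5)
You are right that the paper gives no proof of this theorem. It is quoted from \cite{Spacapan2007} and used only through Proposition~\ref{prop:cycle-lifting}\textup{(i)} to obtain Corollary~\ref{cor:spacapan-Zn}. As a proof, however, your proposal has a genuine gap, and it is the step you flag yourself: Step~4 never produces a contradiction, and the mechanism you suggest for it cannot.

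Steps~1--3 are sound bookkeeping in $\Z^n$:
\begin{itemize}
\item at most one center has $z_1\ge r+1$;
\item the centers with $z_1\ge r$ split the $2n-2$ off-axis signed directions with pairwise disjoint signed supports;
\item any further center meeting the layer $x_1=r-1$ has $z_1=r-1$ and off-axis overlap at most one with every other such center.
\end{itemize}
These constraints are, however, consistent. For $(n,r)=(3,3)$, which lies in the range of the theorem, the five centers
\[
 (7,0,0),\quad(3,3,1),\quad(3,-3,-1),\quad(2,1,-4),\quad(2,-1,4)
\]
have pairwise $\mu\le 3$, hence pairwise disjoint shadows. They cover each of the $13$ points of $S^3(4)$ with $x_1\ge 2$ exactly once.

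The pigeonhole consequence of $r\ge n$ that you invoke is present in this example and causes no trouble. The center $(3,3,1)$ has a coordinate of size $3\ge 2$, but $(2,2,0)=(r-1)e_1+2e_2$ is simply covered by $(3,3,1)$ itself. The only effect of $|z_j|\ge 2$ is to force $|z'_j|\le 1$ for later centers carrying the same signed direction. That is a restriction, not a conflict. So any contradiction near the axis must come from layers $x_1\le r-2$ or elsewhere in the shell. You have not said what it is, or where $n\ge 3$ and $r\ge n$ actually bite. ``A parity or double-counting contradiction'' is the missing idea, not a step.

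Separately, your setup is wrong in the short-cycle case the theorem explicitly allows.
\begin{itemize}
\item If $\ell_1=2r+1$, then $(r+1)e_1\equiv -re_1$ lies in $B_G(0,r)$, not in $N_G(0,r)$. So Step~1 has nothing to apply to, and if every $\ell_i=2r+1$ the shell contains no axis point at all.
\item $B_G(0,r)$ is then not isometric to $B^n(r)$, since $re_1$ and $-re_1$ are adjacent.
\item Bounding coordinates by $r+1$ does not avoid wrap-around: $r+1$ already wraps in $C_{2r+1}$, and coordinate differences between two centers can reach $2r+2$.
\item For $\ell_j=2r+2$ the single center $re_1+(r+1)e_j$ covers both $re_1\pm e_j$, so one center can own an opposite pair. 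This breaks the disjoint-ownership picture of Step~2.
\end{itemize}
Parts of this can be repaired. Disjointness in $G$ passes to short lifts because $d_G\le\dL$, which still gives $\mu\le r$. The equality $d_G(0,z)=d_G(0,x)+d_G(x,z)$ forces each $x_i$ onto a geodesic from $0$ to $z_i$ in $C_{\ell_i}$, which recovers domination away from antipodal coordinates. Step~2 can start from the points $re_1\pm e_j$ instead of the axis point. But your proposal does none of this, and even after the repair the central difficulty of Step~4 remains.
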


By Proposition~\ref{prop:cycle-lifting}\textup{(i)}, the next result follows immediately.

\begin{corollary}\label{cor:spacapan-Zn}
For $n\ge3$ and $r\ge n$, no complete covering of $B^n(r)$ exists.
\end{corollary}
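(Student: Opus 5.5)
The plan is to argue by contradiction: assume $n\ge3$, $r\ge n$, and that $\Omega$ is a complete covering of $B^n(r)$. Then transfer $\Omega$ to a product of cycles and invoke \v{S}pacapan's theorem.

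\textbf{Choosing the cycle lengths.} Set $\ell=6r+3$ and
\[
 G=C_{\ell}\times\cdots\times C_{\ell}\qquad(n\text{ factors}).
\]
Two requirements have to be checked.
\begin{enumerate}[label=\textup{(\alph*)}]
 \item Proposition~\ref{prop:cycle-lifting}\textup{(i)} needs $\ell_i\ge6r+3$ for every $i$, which holds by the choice of $\ell$.
 \item Theorem~\ref{thm:spacapan} needs $\ell_i\ge2r+1$ for every $i$, which holds since $6r+3\ge2r+1$.
\end{enumerate}

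\textbf{Applying the two results.} By Proposition~\ref{prop:cycle-lifting}\textup{(i)}, the complete covering $\Omega$ yields an optimal local structure for a radius-$r$ ball in $G$. Explicitly, it is the family of balls centered at $\pi(z)$ for $z\in\Omega$, where $\pi$ is coordinatewise reduction modulo $\ell$. On the other hand, $G$ satisfies every hypothesis of Theorem~\ref{thm:spacapan}: we have $n\ge3$, $r\ge n$, and all cycle lengths are at least $2r+1$. That theorem therefore says no radius-$r$ ball in $G$ admits an optimal local structure. This contradicts the previous sentence, so no complete covering exists.

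\textbf{Where care is needed.} The argument itself is short, so the only real risk is that the two quoted results use the notion of optimal local structure in incompatible ways. Both are stated for the same definition from Section~\ref{sec:optimal-local}: a family of pairwise disjoint radius-$r$ balls, each disjoint from the central ball, whose union contains $N_G(v,r)$. Proposition~\ref{prop:cycle-lifting}\textup{(i)} produces exactly such a family. It is also worth confirming that Theorem~\ref{thm:spacapan} places no extra restriction on the cycle lengths, such as requiring them to be equal or to satisfy a divisibility condition. Its only stated requirement is $\ell_i\ge2r+1$, and equal lengths $6r+3$ satisfy it, so the deduction is immediate.
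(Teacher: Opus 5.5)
Your proposal is correct and follows the paper's own argument. The paper obtains the corollary by applying Proposition~\ref{prop:cycle-lifting}\textup{(i)} with cycle lengths at least $6r+3$ to produce an optimal local structure, which contradicts \v{S}pacapan's Theorem~\ref{thm:spacapan}, since that theorem needs only $\ell_i\ge 2r+1$.
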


\subsection{Constant-Lee-weight codes and the shadow-packing bound}\label{sec:lee-code-link}

A \emph{constant-Lee-weight code} with parameters $(n,d,w)$ is a set
\[
 \cC\subseteq S^n(w)
\]
such that $\dL(c,c')\ge d$ for all distinct $c,c'\in\cC$.  We write
\[
 \AL(n,d,w)
\]
for the maximum possible cardinality.  Here ``optimal'' in the usual coding-theoretic sense means that the cardinality is $\AL(n,d,w)$.  This should be distinguished from equality in the weighted bound below, because different centers may have shadows of different sizes.

The distance formula in \eqref{eq:distance-mu} gives an exact link between constant-Lee-weight codes and complete coverings.

\begin{proposition}\label{prop:lee-shadow-code}
Let $\cC\subseteq S^n(2r+1)$.  The shadows $D(z)$, $z\in\cC$, are pairwise disjoint if and only if $\cC$ is a constant-Lee-weight code with minimum Lee distance at least $2r+2$.  In that case,
\begin{equation}\label{eq:weighted-shadow-bound}
 \sum_{z\in\cC}|D(z)|\le |S^n(r+1)|.
\end{equation}
Equality holds in \eqref{eq:weighted-shadow-bound} if and only if $\cC$ is a complete covering.
\end{proposition}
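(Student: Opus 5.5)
The plan is to reduce everything to the quantity $\mu(z,z')$ from the proof of Lemma~\ref{lem:disjoint-balls}, and then use Proposition~\ref{prop:shell-partition} for the equality case.

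First, the equivalence between disjoint shadows and minimum distance. By Lemma~\ref{lem:shadow}, $D(z)\cap D(z')$ consists of the points $x\in S^n(r+1)$ with $x\preceq z$ and $x\preceq z'$. Such an $x$ must satisfy:
\begin{itemize}
 \item $x_i=0$ unless $z_i z_i'>0$;
 \item for coordinates with $z_iz_i'>0$, $x_i$ has the common sign of $z_i$ and $z_i'$;
 \item $|x_i|\le\min\{|z_i|,|z_i'|\}$.
 \end{itemize}
Hence such an $x$ exists exactly when these coordinate bounds allow a total weight of $r+1$, that is, when $\mu(z,z')\ge r+1$. In that case we decrease coordinates greedily from their maxima to reach total weight $r+1$.

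The identity \eqref{eq:distance-mu} follows coordinatewise from
\[
|a-b|=|a|+|b|-2\min\{|a|,|b|\}\quad\text{when } ab>0,
\qquad
|a-b|=|a|+|b|\quad\text{otherwise}.
\]
Summing over $i$ and using $\wtL(z)=\wtL(z')=2r+1$ gives $\dL(z,z')=4r+2-2\mu(z,z')$. Therefore $\mu(z,z')\le r$ if and only if $\dL(z,z')\ge 2r+2$. This gives the first equivalence for all pairs of distinct elements of $\cC\subseteq S^n(2r+1)$, and these are automatically constant-Lee-weight.

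Next, the inequality. If the shadows are pairwise disjoint, then $\bigsqcup_{z\in\cC}D(z)\subseteq S^n(r+1)$, since every shadow lies in $S^n(r+1)$ by definition. Counting both sides gives \eqref{eq:weighted-shadow-bound}.

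Finally, the equality case. Because $S^n(r+1)$ is finite and contains the disjoint union, equality in \eqref{eq:weighted-shadow-bound} holds exactly when the disjoint union equals $S^n(r+1)$. This is \eqref{eq:shell-partition}, and by Proposition~\ref{prop:shell-partition} it is equivalent to $\cC$ being a complete covering.

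Conversely, a complete covering has disjoint shadows, as in Lemma~\ref{lem:disjoint-balls}, and these shadows cover the shell, so equality holds. One finiteness point needs care here: a complete covering $\Omega$ is a priori a subset of $S^n(2r+1)$ that may contain elements whose shadows are empty. An empty shadow would require $\mu(z,z)=2r+1<r+1$, which is false, so $D(z)\neq\emptyset$ for every $z$. Moreover $S^n(2r+1)$ is finite anyway.

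None of these steps is a real obstacle. The only point needing care is the exact characterization of when the shadows intersect, namely the greedy existence of $x$ once $\mu(z,z')\ge r+1$.
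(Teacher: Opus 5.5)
Your proof is correct and follows the paper's own argument: Lemma~\ref{lem:shadow} reduces intersection of shadows to $\mu(z,z')\ge r+1$, identity \eqref{eq:distance-mu} converts this to $\dL(z,z')\le 2r$, the inequality is a count of a disjoint union inside $S^n(r+1)$, and equality is identified with a complete covering via Proposition~\ref{prop:shell-partition}. You make explicit the greedy construction of a common dominated point and the coordinatewise derivation of \eqref{eq:distance-mu}, which the paper leaves implicit. Your remark about empty shadows is harmless but not needed, since an empty shadow would contribute nothing to either side of the count.
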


\begin{proof}
For distinct $z,z'\in S^n(2r+1)$, Lemma~\ref{lem:shadow} and \eqref{eq:distance-mu} show that
\[
 D(z)\cap D(z')=\varnothing
 \quad\Longleftrightarrow\quad
 \mu(z,z')\le r
 \quad\Longleftrightarrow\quad
 \dL(z,z')\ge2r+2.
\]
If the shadows are pairwise disjoint, their total size is at most $|S^n(r+1)|$, proving \eqref{eq:weighted-shadow-bound}.  Equality means that the disjoint shadows partition the shell, which is equivalent to a complete covering by Proposition~\ref{prop:shell-partition}.
\end{proof}

Thus Theorems~\ref{thm:main} and~\ref{thm:asymptotic-intro} can be read as strictness results for the weighted shadow-packing bound.  They do not, by themselves, determine the cardinality parameter $\AL(n,2r+2,2r+1)$.

A point of a Lee shell is called \emph{unit-coordinate} if all its nonzero coordinates belong to $\{1,-1\}$.  If every center of $\cC$ is unit-coordinate, then every shadow has size
\[
 \binom{2r+1}{r+1}
\]
and lies entirely in the unit-coordinate part of $S^n(r+1)$.  Hence
\begin{equation}\label{eq:unit-coordinate-shadow-bound}
 |\cC|\binom{2r+1}{r+1}
 \le 2^{r+1}\binom n{r+1}.
\end{equation}
This is the relevant Johnson-type packing bound for the construction in Section~\ref{sec:almost}.  The result in \cite{LiuShangguan2025} produces ternary constant-weight codes which asymptotically attain the right side of \eqref{eq:unit-coordinate-shadow-bound} under a slightly stronger Hamming-distance condition.

\subsection{Generating functions of Lee shells}\label{sec:shell-generating}

For fixed $n$ and $t\geq 0$, put
\[
 L_t=|S^n(t)|.
\]
We regard $q$ as a formal variable.  For one integer coordinate, the contribution of the value $0$ is $1$, while for every positive integer $j$ there are two values, $j$ and $-j$, of Lee weight $j$.  Therefore
\[
 \sum_{m\in\Z}q^{|m|}
 =1+2q+2q^2+\cdots
 =1+\frac{2q}{1-q}
 =\frac{1+q}{1-q}.
\]
For a vector $x=(x_1,\ldots,x_n)$, the monomial contributed by the coordinates is
\[
 q^{|x_1|}\cdots q^{|x_n|}=q^{\wtL(x)}.
\]
Multiplying the one-coordinate series and then grouping vectors according to their Lee weight gives
\begin{equation}\label{eq:F}
	 F(q):=\sum_{t\ge0}L_tq^t
	 =\sum_{x\in\Z^n}q^{\wtL(x)}
	 =\prod_{i=1}^n\left(\sum_{m\in\Z}q^{|m|}\right)
	 =\left(\frac{1+q}{1-q}\right)^n.
\end{equation}
In particular, the coefficient of $q^t$ in $F(q)$ is exactly $L_t$.

The same numbers can be obtained directly by fixing the support size.  If $t\ge1$ and $x\in S^n(t)$ has exactly $j$ nonzero coordinates, then one first chooses the support in $\binom nj$ ways and the signs in $2^j$ ways.  The positive absolute values on the selected coordinates form an ordered composition of $t$ into $j$ positive parts.  By the standard separator argument, the number of such compositions is $\binom{t-1}{j-1}$.  Hence
\begin{equation}\label{eq:shell-explicit}
 L_t=\sum_{j=1}^{\min\{n,t\}}
 2^j\binom nj\binom{t-1}{j-1}.
\end{equation}

Differentiating \eqref{eq:F} as a formal power series gives
\[
 F'(q)=\frac{2n}{1-q^2}F(q),
\]
or equivalently
\[
 (1-q^2)F'(q)=2nF(q).
\]
The coefficient of $q^t$ on the left is $(t+1)L_{t+1}-(t-1)L_{t-1}$.  Therefore
\begin{equation}\label{eq:recurrence}
 (t+1)L_{t+1}-(t-1)L_{t-1}=2nL_t
 \qquad(t\ge1),
\end{equation}
where $L_{-1}=0$.

We shall also need the sum of the support sizes over a shell.  Define
\[
 K_t=\sum_{x\in S^n(t)}|\supp(x)|.
\]
Equivalently, $K_t$ counts the pairs $(x,i)$ such that $x\in S^n(t)$ and $i\in\supp(x)$.  Indeed,
\[
 K_t
 =\sum_{x\in S^n(t)}\sum_{i=1}^n\mathbf 1_{\{x_i\ne0\}}
 =\sum_{i=1}^n\#\{x\in S^n(t):x_i\ne0\}.
\]
To form the generating function for these pairs, first choose the distinguished coordinate $i$.  It is required to be nonzero, and hence contributes
\[
 \sum_{m\in\Z\setminus\{0\}}q^{|m|}
 =2q+2q^2+\cdots
 =\frac{2q}{1-q}.
\]
Each of the remaining $n-1$ coordinates is unrestricted and contributes $(1+q)/(1-q)$.  Summing over the $n$ choices of $i$ yields
\begin{equation}\label{eq:KGF}
	 \sum_{t\ge0}K_tq^t
	 =\sum_{i=1}^n\sum_{x\in\Z^n}
	   \mathbf 1_{\{x_i\ne0\}}q^{\wtL(x)}
	 =n\frac{2q}{1-q}
	   \left(\frac{1+q}{1-q}\right)^{n-1}
	 =\frac{2nq}{1+q}F(q).
\end{equation}
The last identity follows by dividing $F(q)$ by one factor $(1+q)/(1-q)$ and multiplying by $2q/(1-q)$.

Finally,
\[
 2nF(q)-\frac{2nq}{1+q}F(q)
 =\frac{2n}{1+q}F(q)
 =(1-q)F'(q).
\]
The coefficient of $q^{r-1}$ on the left is $2nL_{r-1}-K_{r-1}$, while the coefficient on the right is $rL_r-(r-1)L_{r-1}$.  Thus
\begin{equation}\label{eq:support-identity}
 2nL_{r-1}-K_{r-1}
 =rL_r-(r-1)L_{r-1}.
\end{equation}

A finite sequence $(c_0,\ldots,c_d)$ is \emph{symmetric unimodal} if
\[
c_j=c_{d-j}
\]
and
\[
c_0\le c_1\le\cdots\le c_{\lfloor d/2\rfloor}.
\]
The following standard fact follows, for example, from the symmetric
chain decomposition of products of finite chains in
\cite{deBruijnTengbergenKruyswijk1951}, or from the more general product
theorem in \cite{ProctorSaksSturtevant1980}.

\begin{lemma}\label{lem:unimodal}
	For nonnegative integers $a_1,\ldots,a_s$, the coefficient sequence of
	\[
	\prod_{i=1}^s(1+q+\cdots+q^{a_i})
	\]
	is symmetric unimodal.
\end{lemma}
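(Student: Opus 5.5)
We argue by induction on $s$. The statement is that the coefficients of $\prod_{i=1}^s(1+q+\cdots+q^{a_i})$ are symmetric and unimodal, with degree $d=a_1+\cdots+a_s$.

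\emph{Symmetry.} Each factor $P_a(q)=1+q+\cdots+q^{a}$ satisfies $q^{a}P_a(1/q)=P_a(q)$. Multiplying these identities gives $q^dP(1/q)=P(q)$ for the product $P$, that is, $c_j=c_{d-j}$.

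\emph{Unimodality.} We use the classical fact that the product of two symmetric unimodal polynomials with nonnegative coefficients is again symmetric unimodal. Write $A(q)=\sum_{j=0}^{m} \alpha_j q^j$, symmetric unimodal with centre $m/2$. Decompose it as a nonnegative combination of centred ``blocks'':
\[
A(q)=\sum_{k=0}^{\lfloor m/2\rfloor}(\alpha_k-\alpha_{k-1})\,q^{k}P_{m-2k}(q),
\qquad \alpha_{-1}=0.
\]
The coefficients $\alpha_k-\alpha_{k-1}$ are nonnegative by unimodality. Each term $q^kP_{m-2k}$ is a constant run centred at $m/2$.

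Multiplying by $P_a$, each product $q^kP_{m-2k}P_a$ is a ``trapezoid'': its coefficients increase by $1$, stay constant, then decrease. This is checked directly, since the coefficient of $q^j$ in $P_bP_a$ equals $\#\{(u,v):u\le b,\,v\le a,\,u+v=j\}$, which is $\min(j,a,b,a+b-j)+1$ in the range $0\le j\le a+b$. Each trapezoid is symmetric about the common centre $(m+a)/2$.

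A nonnegative sum of sequences that are symmetric and unimodal about the same centre is again symmetric and unimodal about that centre. This holds because each summand is nondecreasing up to the centre and symmetric, so the sum is too. Applying this to $A(q)P_a(q)$ completes the inductive step, starting from the base case $s=1$ (a constant run).

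\emph{Main obstacle.} The one point needing care is that all summands must share the same centre. A nonnegative sum of unimodal sequences with different peaks need not be unimodal. The centred block decomposition guarantees a common centre, so the parity bookkeeping for odd and even $m$ has to be written carefully. Alternatively, one could simply cite the symmetric chain decomposition of the chain product $[0,a_1]\times\cdots\times[0,a_s]$ from \cite{deBruijnTengbergenKruyswijk1951}. Its rank-generating function is exactly this product, and a symmetric chain decomposition immediately yields rank symmetry and unimodality, since every chain spanning ranks $k$ through $d-k$ contributes one element to each rank in between.
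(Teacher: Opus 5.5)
Your proof is correct, but it follows a different route from the paper. The paper gives no argument of its own. It only cites the symmetric chain decomposition of the chain product $[0,a_1]\times\cdots\times[0,a_s]$ \cite{deBruijnTengbergenKruyswijk1951}, or alternatively the product theorem of \cite{ProctorSaksSturtevant1980}.

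Your main argument is a self-contained induction with three steps:
\begin{enumerate}[label=\textup{(\arabic*)}]
 \item You write the running product as a nonnegative combination of nested constant runs $q^k(1+q+\cdots+q^{m-2k})$, all centred at $m/2$. The telescoping identity checks out: for $j\le\lfloor m/2\rfloor$ it gives $\alpha_j$, and for $j>m/2$ it gives $\alpha_{m-j}=\alpha_j$ by symmetry.
 \item You multiply each run by $1+q+\cdots+q^a$. The explicit count $\min(j,a,b,a+b-j)+1$ shows that each product is a trapezoid centred at $(m+a)/2$.
 \item You conclude because a nonnegative sum of sequences that are symmetric and nondecreasing up to a common centre keeps both properties.
\end{enumerate}

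What your route buys is a proof that uses nothing beyond coefficient arithmetic. Decomposing both factors in the same way would also give the general fact that products of symmetric unimodal polynomials with nonnegative coefficients are symmetric unimodal. The cited chain-decomposition route, which you mention at the end and which is exactly the paper's first reference, yields more structure: a partition of the poset into symmetric chains, and hence also a Sperner-type property. Its cost is reliance on an external combinatorial construction.
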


\subsection{A matrix rank lemma}
We first recall the elementary graph notation used below.  For a
positive integer $v$, let $K_v$ denote the \emph{complete graph} on
$v$ vertices; thus every unordered pair of distinct vertices is an
edge of $K_v$.  A \emph{matching of size $m$} is a collection of $m$
pairwise vertex-disjoint edges, and we use $M_m$ to denote such a
matching.  If $2m\le v$, then
\[
K_v-M_m
\]
denotes the graph obtained from $K_v$ by deleting the $m$ edges of a
copy of $M_m$, while retaining all $v$ vertices.  Thus $K_v-M_m$ has
\[
\binom{v}{2}-m
\]
edges.

For a graph with vertex set $V$, its adjacency matrix is the
$V$-by-$V$ matrix whose $(u,w)$-entry is $1$ when $u$ and $w$ are
adjacent and is $0$ otherwise. It is clear that the adjacency matrix of a complete graph is an all-one matrix which we denote by $J$.

The following elementary matrix lemma is used to convert the local edge decompositions in Section~\ref{sec:descent} into lower bounds.

\begin{lemma}\label{lem:matching-matrix}
Let $P$ be the adjacency matrix of a matching on a finite vertex set $V$, and let
\[
 D=\operatorname{diag}(d_u:u\in V),
 \qquad d_u\ge1.
\]
Then $D-P$ is positive semidefinite.  If $t$ matching edges $\{u,\bar u\}$ satisfy $d_u=d_{\bar u}=1$, then
\[
 \operatorname{rank}(J+D-P)=
 \begin{cases}
 |V|,&t=0,\\
 |V|-t+1,&t\ge1.
 \end{cases}
\]
In particular, if the matching has $m\ge1$ edges, then
\begin{equation}\label{eq:matching-rank}
 \operatorname{rank}(J+D-P)\ge |V|-m+1.
\end{equation}
If every $d_u\ge2$, then $D-P$ is positive definite.
\end{lemma}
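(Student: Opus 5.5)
The plan is to exploit the block structure of $D-P$. Since $P$ is the adjacency matrix of a matching, after reordering $V$ the matrix $D-P$ is block diagonal. Each matching edge $\{u,\bar u\}$ contributes a $2\times 2$ block
\[
\begin{pmatrix} d_u & -1\\ -1 & d_{\bar u}\end{pmatrix},
\]
and each unmatched vertex $u$ contributes the $1\times1$ block $(d_u)$.

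\textbf{Semidefiniteness and definiteness.} A $1\times 1$ block $(d_u)$ is positive because $d_u\ge1$. A $2\times2$ block has positive diagonal and determinant $d_ud_{\bar u}-1\ge0$, so it is positive semidefinite. This gives positive semidefiniteness of $D-P$. If every $d_u\ge2$, each determinant is at least $3>0$, so every block, and hence $D-P$, is positive definite.

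\textbf{Kernel of $D-P$.} A $2\times2$ block is singular exactly when $d_ud_{\bar u}=1$, i.e.\ $d_u=d_{\bar u}=1$. In that case its kernel is spanned by $e_u+e_{\bar u}$. Every other block is nonsingular. Hence $\ker(D-P)$ is spanned by the $t$ vectors $e_u+e_{\bar u}$ over the $t$ special edges. These have disjoint supports, so they are independent and $\dim\ker(D-P)=t$.

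\textbf{Adding $J$.} Write $J=\mathbf 1\mathbf 1^{\top}$, which is positive semidefinite. For a sum $A+B$ of two positive semidefinite matrices, $x^{\top}(A+B)x=0$ forces $x^{\top}Ax=x^{\top}Bx=0$, so
\[
\ker(A+B)=\ker A\cap\ker B .
\]
Therefore
\[
\ker(J+D-P)=\ker(D-P)\cap\mathbf 1^{\perp}.
\]
A kernel vector $\sum_e a_e(e_u+e_{\bar u})$ has coordinate sum $2\sum_e a_e$. It is orthogonal to $\mathbf 1$ exactly when $\sum_e a_e=0$. This single linear condition is nontrivial precisely when $t\ge1$. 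Consequently the kernel has dimension $0$ if $t=0$ and $t-1$ if $t\ge1$, which gives the stated rank formula.

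\textbf{The bound \eqref{eq:matching-rank}.} Since $t\le m$, for $t\ge1$ we get $\operatorname{rank}(J+D-P)=|V|-t+1\ge|V|-m+1$. For $t=0$ the rank is $|V|\ge|V|-m+1$ because $m\ge1$.

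The only step needing some care is the kernel-intersection fact for sums of positive semidefinite matrices; no real obstacle is expected.
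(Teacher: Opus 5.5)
Your proof is correct and follows essentially the same route as the paper. Both arguments block-diagonalize $D-P$ into $1\times1$ and $2\times2$ blocks and identify $\ker(D-P)$ as the span of the $t$ vectors $e_u+e_{\bar u}$. Both then use that the kernel of a sum of positive semidefinite matrices is the intersection of the kernels, so adding $J$ lowers the nullity by exactly one when $t\ge1$. Your determinant check for (semi)definiteness and your separate treatment of the case $t=0$ in the final bound simply fill in details the paper leaves implicit.
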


\begin{proof}
After ordering matched vertices consecutively, $D-P$ is a direct sum of positive $1$-by-$1$ blocks and matrices
\[
 \begin{pmatrix}d_u&-1\\-1&d_{\bar u}\end{pmatrix}.
\]
Each $2$-by-$2$ block is positive semidefinite and is singular exactly when $d_u=d_{\bar u}=1$; in that case its kernel is spanned by $e_u+e_{\bar u}$.  Hence the kernel of $D-P$ is spanned by the corresponding $t$ vectors.

Both $J$ and $D-P$ are positive semidefinite.  Their sum has kernel equal to the intersection of their kernels.  On the kernel of $D-P$ which is of $t$-dimension, a vector has the form
\[
 x=\sum_{j=1}^t c_j(e_{u_j}+e_{\bar u_j}),
\]
and
\[
 x^{\mathsf T}Jx=4\left(\sum_{j=1}^t c_j\right)^2.
\]
Thus, for $t\geq 1$, the matrix $J$ reduces the nullity by one, thereby yielding the stated rank.  The remaining assertions follow immediately.
\end{proof}

\section{Constructions of Complete and Almost Complete Coverings}\label{sec:constructions}

In this section, we concentrate on the positive results.  The first two are exact local constructions, while the last one is asymptotic.

\subsection{Complete coverings in dimension two}\label{sec:dim2}

The local condition is completely rigid in dimension two.  With the coordinate axes fixed, there are only two possibilities, and both arise from the classical lattice tiling.

\begin{theorem}\label{thm:intro-dim2}
For every $r\ge1$, and with the coordinate axes fixed, there are exactly two complete coverings of $B^2(r)$.  They correspond to the two cyclic orientations of the boundary, are exchanged by a reflection, and both extend to lattice tilings of $\Z^2$ by $B^2(r)$.
\end{theorem}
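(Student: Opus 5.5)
The plan is to use Proposition~\ref{prop:shell-partition} and determine directly which shadows $D(z)$, $z\in S^2(2r+1)$, can partition $S^2(r+1)$. This shell has $4(r+1)$ points. We split it into the four axis points $\pm(r+1)e_1$, $\pm(r+1)e_2$ and four open quadrant arcs of $r$ points each.

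\textbf{Step 1: shape of a single shadow.} By Lemma~\ref{lem:shadow}, an axial center such as $z=(2r+1,0)$ has shadow $\{(r+1,0)\}$. For a center in the open first quadrant, $z=(a,b)$ with $a,b\ge1$ and $a+b=2r+1$, the shadow is
\[
D(z)=\{(x,r+1-x):\max(0,a-r)\le x\le\min(a,r+1)\}.
\]
The other quadrants follow by symmetry. So along the closed arc of the quadrant (with $r+2$ points, $x=0,\dots,r+1$), $D(z)$ is an interval. The key observation is that this interval always contains an endpoint of the arc, i.e.\ an axis point: avoiding both endpoints would require $a-r\ge1$ and $a\le r$, which is impossible. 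Moreover $D(z)$ contains at most one axis point, since $a\ge r+1$ and $b\ge r+1$ cannot both hold. Hence every shadow is \emph{anchored} at exactly one axis point and contains at most $r$ interior points, all in its own quadrant.

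\textbf{Step 2: counting forces rigidity.} In a complete covering each of the four axis points lies in exactly one shadow. So there are at most four anchored shadows that meet interior points, one per axis point, and each covers at most $r$ interior points. These must cover all $4r$ interior points, so equality holds everywhere. Each axis point anchors a shadow that covers the entire interior of one adjacent quadrant, and each quadrant receives exactly one such shadow. Thus the map sending an axis point to a quadrant is a bijection that sends each axis point to one of its two adjacent quadrants. Such a bijection is one of the two cyclic orientations: all four axis points are pushed clockwise, or all counterclockwise.

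The centers are then determined. For example, if $(r+1,0)$ covers the whole first-quadrant arc, then $a=r+1$ and $b=r$. The two coverings are therefore
\[
\{\pm(r+1,r),\ \pm(-r,r+1)\}\quad\text{and its mirror image }\{\pm(r,r+1),\ \pm(r+1,-r)\}.
\]
This matches Figure~\ref{fig:complete_covering} for $r=2$. Conversely, these sets do give partitions, as one checks directly from the interval description, so there are exactly two complete coverings and they are exchanged by the reflection $(x,y)\mapsto(y,x)$.

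\textbf{Step 3: extension to a tiling.} Let $\Lambda$ be the lattice generated by $(r+1,r)$ and $(-r,r+1)$. Its determinant is $(r+1)^2+r^2=|B^2(r)|$, and its nonzero elements of smallest Lee weight are the four centers above. This is the classical Golomb--Welch lattice tiling, so $B^2(r)+\Lambda=\Z^2$. The quickest verification is that $\Z^2/\Lambda$ is cyclic of order $2r^2+2r+1$, with $(1,0)$ mapping to a generator. Two points of $B^2(r)$ that differ by an element of $\Lambda$ would give a nonzero lattice vector of Lee weight at most $2r$, which contradicts the minimum weight $2r+1$. So the translates are disjoint, and the volume count gives coverage. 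The mirror covering extends to the reflected lattice.

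\textbf{Main obstacle.} Step~1, the anchoring observation, is what makes the counting in Step~2 tight. I expect the only delicate points to be the equality bookkeeping (including axial centers, which cover no interior points) and checking that the minimum Lee weight of $\Lambda$ is $2r+1$.
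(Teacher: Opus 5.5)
Your proposal is correct and follows essentially the same route as the paper: every shadow contains exactly one axial point and at most $r$ further points, so the count forces four maximal shadows and hence the two cyclic orientations, and your lattice $\Lambda$ is exactly the kernel of $(x_1,x_2)\mapsto(r+1)x_1+rx_2 \bmod 2r^2+2r+1$ that the paper uses. The one fact you leave unproved, that $\Lambda$ has minimum Lee weight $2r+1$, is also only cited in the paper; it follows at once by identifying $\Lambda$ with the Gaussian ideal generated by $(r+1)+ri$, since Lee weight is at least Euclidean length and every non-unit multiple has length at least $\sqrt{2(2r^2+2r+1)}>2r+1$.
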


\begin{proof}
The shell $S^2(r+1)$ has $4(r+1)$ points and four axial points
\[
 (\pm(r+1),0),
 \qquad
 (0,\pm(r+1)).
\]
Let $z=(z_1,z_2)\in S^2(2r+1)$, and put
\[
 a=|z_1|,
 \qquad
 b=|z_2|.
\]
Then $a+b=2r+1$, so exactly one of $a,b$ is at least $r+1$.  By Lemma~\ref{lem:shadow}, $D(z)$ contains the corresponding axial point and no other axial point.  Hence every shadow contains exactly one axial point, and a complete covering has at most four centers.

If, for example, $a\ge r+1$, then $b\le r$ and
\[
 D(z)=
 \bigl\{(\sgn(z_1)(r+1-j),\sgn(z_2)j):0\le j\le b\bigr\}.
\]
Thus $|D(z)|=b+1\le r+1$.  Since the shadows partition a set of size $4(r+1)$, there are exactly four shadows, every shadow has size $r+1$, and
\[
 \{a,b\}=\{r+1,r\}.
\]
Each shadow consists of one axial point and all interior points on one adjacent quadrant arc.  The four arcs can be oriented only cyclically clockwise or cyclically counterclockwise.  Consequently, the two complete coverings are
\[
 \Omega_r^+
 =\{\pm(r+1,r),\ \pm(-r,r+1)\}
\]
and
\[
 \Omega_r^-
 =\{\pm(r+1,-r),\ \pm(r,r+1)\}.
\]
They are exchanged by a reflection.

It remains to consider extendability. Put \[ N=|B^2(r)|=2r^2+2r+1 \] and define the homomorphism \[ \varphi_r:\Z^2\longrightarrow\Z_N, \qquad \varphi_r(x_1,x_2)=(r+1)x_1+rx_2\pmod N. \] The classical construction of two-dimensional perfect Lee codes shows that the restriction of $\varphi_r$ to $B^2(r)$ is a bijection \cite{GolombWelch1970,HorakGrosek2014}. Consequently, \[ L_r^+=\ker\varphi_r \] is a tiling lattice for $B^2(r)$: every coset of $L_r^+$ contains exactly one element of $B^2(r)$. Moreover, \[ \varphi_r(r+1,r)=(r+1)^2+r^2\equiv0\pmod N \] and \[ \varphi_r(-r,r+1)=0. \] Hence \[ \Omega_r^+ = \{\pm(r+1,r),\ \pm(-r,r+1)\} \subseteq L_r^+. \] The shadows of these four centers have already been shown to partition $S^2(r+1)$. Therefore the lattice tiling associated with $L_r^+$ extends the complete covering $\Omega_r^+$. Finally, reflecting the second coordinate transforms $\Omega_r^+$ into $\Omega_r^-$ and transforms this lattice tiling into one extending $\Omega_r^-$.
\end{proof}

\subsection{A nonextendable complete covering in dimension three}\label{sec:three-two}

The next result shows that the existence of a complete covering as a local condition is genuinely weaker than a tiling once the dimension is larger than two.

\begin{theorem}\label{thm:intro-32}
There is a complete covering of $B^3(2)$ with ten centers which cannot be extended to a tiling of $\Z^3$.
\end{theorem}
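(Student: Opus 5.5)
The plan is in two parts: first write down an explicit set $\Omega\subseteq S^3(5)$ with $|\Omega|=10$ and check that it is a complete covering using Proposition~\ref{prop:shell-partition}; then show that it cannot be extended to a tiling.

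\textbf{Size of the shell.} By \eqref{eq:shell-explicit},
\[
 |S^3(3)|=2\cdot3\cdot1+4\cdot3\cdot2+8\cdot1\cdot1=38 .
\]
This shell splits into three kinds of points:
\begin{itemize}
 \item the $6$ axial points $\pm3e_i$;
 \item the $24$ points of type $(\pm2,\pm1,0)$ and its permutations;
 \item the $8$ points $(\pm1,\pm1,\pm1)$.
\end{itemize}
For a center $z\in S^3(5)$, Lemma~\ref{lem:shadow} computes $|D(z)|$ directly from the absolute values of its coordinates. For example:
\begin{itemize}
 \item $z=(3,2,0)$ gives $|D(z)|=3$;
 \item $z=(3,1,1)$ gives $|D(z)|=5$;
 \item $z=(2,2,1)$ gives $|D(z)|=5$.
\end{itemize}

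\textbf{Finding the covering.} I would search for $\Omega$ invariant under $x\mapsto -x$. This reduces the problem to choosing five antipodal pairs whose shadows partition the $19$ antipodal classes of $S^3(3)$. I expect a mix of the three center types listed above. A natural first attempt:
\begin{itemize}
 \item use centers like $(3,1,1)$ to absorb the axial points;
 \item use centers like $(2,2,1)$ to absorb the $(1,1,1)$-type points.
\end{itemize}
The bookkeeping forces the shadow sizes to sum to $38$, for instance $3\cdot 5+\cdots$, and the sizes must be compatible with the three point types. If hand search stalls, a short exhaustive computer search over antipodally symmetric $5$-subsets of $S^3(5)/\pm$ is cheap and settles it. Once a set is found, I list the ten centers and verify the partition~\eqref{eq:shell-partition} by tabulating each shadow via Lemma~\ref{lem:shadow}. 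Equivalently, I check pairwise $\mu(z,z')\le 2$ together with $\sum|D(z)|=38$, as in Proposition~\ref{prop:lee-shadow-code}.

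\textbf{Non-extendability.} The cleanest route is to cite Gravier, Mollard and Payan \cite{GravierMollardPayan1998}. They proved that $\Z^3$ admits no tiling by $B^3(2)$ at all, so no complete covering of $B^3(2)$ extends to one.

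For a self-contained certificate, I would argue as follows.
\begin{enumerate}
 \item Suppose a tiling $C$ contains $0$. The tiles meeting $S^3(3)$ have centers of weight $5$ and induce a complete covering. An extension of $\Omega$ is a tiling $C$ with $\{0\}\cup\Omega\subseteq C$.
 \item Exhibit a point $x\in S^3(4)$, or a nearby point, not covered by $B^3(2)$ or by any $B^3(2)+z$ with $z\in\Omega$.
 \item Show that every center $c$ with $\dL(c,x)\le2$ violates the packing condition against $0$ or some $z\in\Omega$, that is, has $\wtL(c)\le4$ or $\dL(c,z)\le4$. This is a finite check over at most $|B^3(2)|=25$ candidates.
\end{enumerate}
If no single point works, I would propagate forced choices over a few points of the second shell until a contradiction appears.

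\textbf{Main obstacle.} I expect the hard step to be locating the explicit ten-center configuration. The non-extendability is immediate from \cite{GravierMollardPayan1998} and only needs a finite local check if a direct certificate is wanted.
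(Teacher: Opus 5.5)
\textbf{The gap is the existence half.} Your treatment of non-extendability is fine. Citing the Gravier--Mollard--Payan theorem is logically sufficient. Your self-contained alternative is also what the paper does: it takes the uncovered point $x=(-2,-1,-1)\in S^3(4)$, and the two centers $(-3,1,-1)$ and $(-1,-2,-2)$ at distance $3$ from $x$ force any new center to be $(-2,-1,1)$, which has weight $4$.

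Once non-extendability is outsourced, the whole content of the theorem is the \emph{existence} of a ten-center complete covering. Your proposal never produces one; it only describes a search. Moreover, the search you say ``settles it'' runs over a space containing no solution.

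First, a correction: by Lemma~\ref{lem:shadow}, $D(3,1,1)=\{(3,0,0),(2,1,0),(2,0,1),(1,1,1)\}$ has four elements, not five. The five absolute types $(5,0,0),(4,1,0),(3,2,0),(3,1,1),(2,2,1)$ have shadows of sizes $1,2,3,4,5$. Each of the first four types contains exactly one axial point, and $(2,2,1)$ contains none. Only the types $(3,1,1)$ and $(2,2,1)$ contain a point $(\pm1,\pm1,\pm1)$, exactly one each. Counting the $6$ axial points, the $8$ full-support points and the $24$ points of type $(2,1,0)$ forces $|\Omega|\in\{9,10\}$. For $|\Omega|=10$ it forces exactly two centers of type $(5,0,0)$, four of type $(3,1,1)$ and four of type $(2,2,1)$.

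\textbf{No antipodally invariant solution exists.} Suppose in addition that $\Omega=-\Omega$. Then the type-$(5,0,0)$ centers are $\pm5e_k$; take $k=3$.
\begin{enumerate}[label=\textup{(\alph*)}]
 \item The eight points $(\pm1,0,\pm2)$ and $(0,\pm1,\pm2)$ can only lie in shadows of type-$(2,2,1)$ centers with $|z_3|=2$, and each such shadow contains exactly two of them. Hence all four type-$(2,2,1)$ centers have $|z_3|=2$, and their sign patterns are $(\alpha,\beta,\pm)$ and $(-\alpha,-\beta,\pm)$ for some $\alpha,\beta$. So they cover the full-support points with $x_1x_2=\alpha\beta$.
 \item The type-$(3,1,1)$ centers must cover the full-support points with $x_1x_2=-\alpha\beta$. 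Together with $\Omega=-\Omega$, this forces them to be $(3,s,t)$, $(-3,-s,-t)$, $(s,3,-st)$ and $(-s,-3,st)$, with $s=-\alpha\beta$. These already cover $(2,0,t)$, $(-2,0,-t)$, $(0,2,-st)$ and $(0,-2,st)$.
 \item Now take either of the two type-$(2,2,1)$ centers $z$ with $\sgn(z_3)=t\,\sgn(z_1)$. Whether it has shape $(2,1,2)$ or $(1,2,2)$, it would cover one of those four points a second time.
\end{enumerate}
So no antipodally invariant ten-center complete covering exists.

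\textbf{What you need instead.} You need a configuration with a different symmetry group. The paper's choice is $(0,0,\pm5)$, $(3,-1,-1)$, $(-3,1,-1)$, $(1,-3,1)$, $(-1,3,1)$, $(2,1,2)$, $(-2,-1,2)$, $(1,2,-2)$, $(-1,-2,-2)$. It is invariant under $(x_1,x_2,x_3)\mapsto(-x_1,-x_2,x_3)$ and $(x_1,x_2,x_3)\mapsto(x_2,x_1,-x_3)$, but not under $x\mapsto-x$. Its shadows have sizes $1,1,4,4,4,4,5,5,5,5$ and are checked to partition $S^3(3)$ orbit by orbit.
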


\begin{proof}
Define
\[
\begin{aligned}
\Omega=\{&
(0,0,5),(0,0,-5),\\
&(3,-1,-1),(-3,1,-1),(1,-3,1),(-1,3,1),\\
&(2,1,2),(-2,-1,2),(1,2,-2),(-1,-2,-2)
\}.
\end{aligned}
\]
Every center has Lee weight five.  The shadows of three representatives are
\[
 D(0,0,5)=\{(0,0,3)\},
\]
\[
 D(3,-1,-1)
 =\{(3,0,0),(2,-1,0),(2,0,-1),(1,-1,-1)\},
\]
and
\[
\begin{aligned}
D(2,1,2)=\{&(2,1,0),(2,0,1),(1,0,2),\\
&(0,1,2),(1,1,1)\}.
\end{aligned}
\]
The remaining shadows are obtained by the group generated by
\[
 (x_1,x_2,x_3)\longmapsto(-x_1,-x_2,x_3)
\]
and
\[
 (x_1,x_2,x_3)\longmapsto(x_2,x_1,-x_3).
\]
The shell $S^3(3)$ has $6$ points of absolute-value type $(3,0,0)$, $24$ points of type $(2,1,0)$, and $8$ points of type $(1,1,1)$.

The orbit of the first shadow has size two and covers the two axial points on the third axis.  The four shadows in the orbit of $D(3,-1,-1)$ cover the four remaining axial points, eight points of type $(2,1,0)$, and the four full-support sign patterns in which the first two signs are opposite.  The four shadows in the orbit of $D(2,1,2)$ cover the other sixteen points of type $(2,1,0)$ and the four full-support sign patterns in which the first two signs agree.  These sets are disjoint, and their sizes add to
\[
 2+16+20=38=|S^3(3)|.
\]
Hence the shadows partition $S^3(3)$, and $\Omega$ is a complete covering.

To prove nonextendability, consider
\[
 x=(-2,-1,-1)\in S^3(4).
\]
Its distance from every center in $\Omega$ is at least three, so it is not covered by any of the selected radius-two spheres.  Suppose that an additional center $c$ covered $x$, while its
radius-two sphere remained disjoint from the central sphere and from
all the selected spheres.  Put
\[
d=(d_1,d_2,d_3)=c-x.
\]
Since $c$ covers $x$, we have
\[
\wtL(d)=\dL(c,x)\le2.
\]
Consider the two existing centers
\[
a=(-3,1,-1),
\qquad
b=(-1,-2,-2).
\]
We have
\[
x-a=(1,-2,0),
\qquad
x-b=(-1,1,1),
\]
and hence
\[
\dL(x,a)=\dL(x,b)=3.
\]
Since the radius-two sphere centered at $c$ is disjoint from those
centered at $a$ and $b$, we must have
\[
\dL(c,a)\ge5,
\qquad
\dL(c,b)\ge5.
\]
On the other hand,
\[
\dL(c,a)
=\wtL\bigl((x-a)+d\bigr)
\le \wtL(x-a)+\wtL(d)
\le3+2=5,
\]
and similarly,
\[
\dL(c,b)
=\wtL\bigl((x-b)+d\bigr)
\le \wtL(x-b)+\wtL(d)
\le3+2=5.
\]
Therefore equality holds throughout both estimates.  In particular,
\[
\wtL(d)=2.
\]
Equality in
\[
\wtL\bigl((1,-2,0)+d\bigr)
\le \wtL(1,-2,0)+\wtL(d)
\]
means that there is no coordinatewise cancellation.  It follows that
\[
d_1\ge0,
\qquad
d_2\le0.
\]
Similarly, equality in
\[
\wtL\bigl((-1,1,1)+d\bigr)
\le \wtL(-1,1,1)+\wtL(d)
\]
gives
\[
d_1\le0,
\qquad
d_2\ge0,
\qquad
d_3\ge0.
\]
Consequently,
\[
d_1=d_2=0.
\]
Since $\wtL(d)=2$ and $d_3\ge0$, we obtain
\[
d=(0,0,2).
\]
It follows that
\[
c=x+d=(-2,-1,1).
\]
However,
\[
\wtL(c)=4<5,
\]
so the radius-two Lee sphere centered at $c$ intersects the central
sphere.  This contradiction proves that the complete covering cannot
be extended to a tiling.
\end{proof}

The global nonexistence of three-dimensional Lee-sphere tilings for every radius at least two was proved in \cite{GravierMollardPayan1998}.  The preceding argument is stronger for this particular local configuration, since the failure is already visible on the next shell.

\subsection{Almost complete coverings from ternary constant-weight codes}\label{sec:almost}

We identify the ternary alphabet with $\{0,+1,-1\}$.  The \emph{Hamming weight} of a word is the number of its nonzero coordinates, and its \emph{Hamming distance} from another word is the number of coordinates in which the two words differ.  A $Q$-ary constant-weight code of length $n$, weight $w$, and minimum Hamming distance $d$ is a code in which every word has Hamming weight $w$ and every two distinct words have Hamming distance at least $d$.  Let $A_Q(n,d,w)$ denote its maximum possible size.

The following asymptotically sharp result is proved in \cite{LiuShangguan2025}.

\begin{theorem}\label{thm:ternary-constant-weight}
Fix $Q,w,d$, where $d$ is odd, and put
\[
 t=\frac{2w-d+1}{2}.
\]
Then, as $n\to\infty$,
\begin{equation}\label{eq:ternary-constant-weight}
 A_Q(n,d,w)
 =(1-o(1))
 \frac{(Q-1)^t}{\binom wt}
 \binom nt.
\end{equation}
\end{theorem}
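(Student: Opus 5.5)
The plan has two matching halves: a Johnson-type counting argument for the upper bound in \eqref{eq:ternary-constant-weight}, and a near-perfect hypergraph matching for the lower bound. The organizing notion is the \emph{$t$-shadow} of a word: a pair $(T,\sigma)$ with $T\subseteq\supp(c)$, $|T|=t$, and $\sigma=c|_T\in\{1,\ldots,Q-1\}^T$.

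For the upper bound, let $c,c'$ be two distinct weight-$w$ words sharing a $t$-shadow, so they agree in $t$ nonzero coordinates. At most $w-t$ further coordinates lie in $\supp(c)\cup\supp(c')$ for each word, so $\dH(c,c')\le2(w-t)=d-1<d$. Hence in any code with minimum distance $d$ the $t$-shadows of distinct words are pairwise disjoint. Each word has exactly $\binom wt$ of them, and there are $(Q-1)^t\binom nt$ possible pairs $(T,\sigma)$. This gives
\[
A_Q(n,d,w)\binom wt\le(Q-1)^t\binom nt,
\]
which is the upper half of \eqref{eq:ternary-constant-weight}.

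For the lower bound, define a $\binom wt$-uniform hypergraph $\mathcal H$. Its vertices are all pairs $(T,\sigma)$, and its edges are the weight-$w$ words, each edge being the set of $t$-shadows of that word. The hypergraph $\mathcal H$ is regular of degree
\[
\Delta=\binom{n-t}{w-t}(Q-1)^{w-t},
\]
and two distinct vertices lie in $O(\Delta/n)$ common edges, since the union of their supports contains more than $t$ coordinates. By the Frankl--R\"odl / Pippenger--Spencer theorem, $\mathcal H$ has a matching covering all but $o(1)$ of the vertices. That matching has size $(1-o(1))(Q-1)^t\binom nt/\binom wt$, and its words pairwise share no $t$-shadow.

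Sharing no $t$-shadow does not by itself give distance at least $d$. Two words whose supports overlap in $s$ coordinates and which agree in $a\le t-1$ of them satisfy
\[
\dH(c,c')=2(w-s)+(s-a),
\]
and this can be smaller than $d$, for instance when $s=w$ and $d>w-t+1$. So I would add the forbidden pairs as \emph{conflicts}: pairs of edges at Hamming distance less than $d$. Then I would invoke a conflict-free hypergraph matching theorem, such as that of Glock--Joos--Kim--K\"uhn--Lichev or of Delcourt--Postle, which still yields an almost-perfect matching provided the conflict system is sparse in the required degree sense.

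The main obstacle is verifying these conflict-degree conditions. For a fixed edge $c$, I would count the edges $c'$ with $\dH(c,c')<d$ and no common $t$-shadow. Such a $c'$ must share at least $w-(d-1)/2>w-t$ support coordinates with $c$ and agree with $c$ in at most $t-1$ of them. The number of such $c'$ is $O(n^{w-s})$ with $s>w-t$, hence $o(\Delta)$ since $\Delta=\Theta(n^{w-t})$. I would then check the finer codegree bounds required by the theorem by the same polynomial-order counting. Removing one edge from each surviving conflict costs only $o(1)$ of the matching, which completes the lower bound.
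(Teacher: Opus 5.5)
The paper does not prove this statement. It is quoted from Liu and Shangguan \cite{LiuShangguan2025} and used only as a black box in Theorem~\ref{thm:intro-almost}, so there is no internal argument to compare with, and your proposal has to stand on its own.

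\textbf{Upper bound.} This half is complete and correct. Two weight-$w$ words sharing a $t$-shadow $(T,\sigma)$ can differ only on $(\supp(c)\cup\supp(c'))\setminus T$, which has at most $2(w-t)=d-1$ elements. The lower-bound strategy (the $t$-shadow hypergraph, Pippenger--Spencer, and conflicts for pairs at distance less than $d$) is the right kind of argument. However, the step you yourself flag as the main obstacle is miscalculated as written, and the matching theorem needs care.

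\textbf{The conflict count.} You have $w-(d-1)/2=t$, not a quantity exceeding $w-t$. The symmetric-difference bound alone gives only $s\ge t$. The case $s=t$ would allow $\Theta(n^{w-t})=\Theta(\Delta)$ partners, which is not $o(\Delta)$. What you need is $s>t$, not $s>w-t$; the inference ``$s>w-t$, hence $o(\Delta)$'' fails whenever $t>w/2$. The correct deduction uses both constraints you wrote down:
\begin{itemize}
\item $\dH(c,c')=2w-s-a\le d-1=2w-2t$ gives $s+a\ge 2t$;
\item the absence of a common $t$-shadow gives $a\le t-1$;
\item together these give $s\ge t+1$.
\end{itemize}
The number of such $c'$ is then $\sum_{s\ge t+1}O(n^{w-s})=O(n^{w-t-1})=O(\Delta/n)=O(\Delta^{1-1/(w-t)})$, the same order as the codegrees of $\mathcal H$. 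For pairs there are no further ``finer codegree'' conditions to check.

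\textbf{Size-two conflicts.} Every conflict here is a pair of edges. The theorem of Glock--Joos--Kim--K\"uhn--Lichev, as usually stated, only admits conflicts of size at least three. You therefore need either a version that allows size-two conflicts of degree at most $\Delta^{1-\epsilon}$, or to remove them by hand, for instance as follows:
\begin{itemize}
\item keep each word independently with probability $p=n^{-(w-t)+1/2}$;
\item then degrees concentrate around $p\Delta=\Theta(n^{1/2})$, codegrees become $O(1)$, and a kept word has $O(n^{-1/2})$ kept conflict partners in expectation;
\item delete every kept word that has a kept partner (w.h.p.\ only $o(p\Delta)$ per vertex, by a routine tail estimate);
\item apply Pippenger--Spencer to what remains.
\end{itemize}
Your closing sentence about ``removing one edge from each surviving conflict'' should be dropped. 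A conflict-free matching theorem leaves no conflicts at all. For an arbitrary almost-perfect matching produced by Pippenger--Spencer, you have no control over how many conflicting pairs it contains.

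With these repairs, and with the trivial cases $t\in\{0,w\}$ set aside, your argument proves the statement.
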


We first record why the relevant ternary codes give disjoint Lee shadows.  If two such shadows met, the two centers would agree with the same nonzero sign on at least $r+1$ coordinates.  Since both centers have Hamming weight $2r+1$, this would force their Hamming distance to be at most $2r$.

\begin{lemma}\label{lem:code-shadows-disjoint}
Let $c,c'\in\{0,\pm1\}^n$ have Hamming weight $2r+1$.  If
\[
 \dH(c,c')\ge2r+1,
\]
then $D(c)\cap D(c')=\varnothing$.
\end{lemma}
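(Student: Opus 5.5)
We argue by contraposition, using the criterion already present in the proofs of Lemma~\ref{lem:disjoint-balls} and Proposition~\ref{prop:lee-shadow-code}. Since $c,c'\in\{0,\pm1\}^n$ have Hamming weight $2r+1$, they lie in $S^n(2r+1)$, so Lemma~\ref{lem:shadow} applies. That lemma shows $D(c)\cap D(c')\ne\varnothing$ if and only if $\mu(c,c')\ge r+1$, where
\[
 \mu(c,c')=\sum_{i:c_ic_i'>0}\min\{|c_i|,|c_i'|\}.
\]
For unit-coordinate vectors each term equals $1$, so $\mu(c,c')$ is the number of coordinates $i$ with $c_i=c_i'\ne0$.

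Now suppose the shadows meet. Then there is a set $A$ of at least $r+1$ coordinates on which $c$ and $c'$ agree and are nonzero. Every coordinate at which $c$ and $c'$ differ lies in $\supp(c)\cup\supp(c')$ but not in $A$. Hence
\[
 \dH(c,c')\le|\supp(c)\cup\supp(c')|-|A|
 =(2r+1)+(2r+1)-|\supp(c)\cap\supp(c')|-|A|.
\]
Since $A\subseteq\supp(c)\cap\supp(c')$, this is at most $4r+2-2|A|\le 4r+2-2(r+1)=2r$. That contradicts $\dH(c,c')\ge2r+1$, so the shadows are disjoint.

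An equivalent direct argument is also available. A common point $x\in D(c)\cap D(c')$ satisfies $x\preceq c$ and $x\preceq c'$ with $\wtL(x)=r+1$. For unit-coordinate $c$, every nonzero coordinate of $x$ is then $\pm1$ and agrees with both $c$ and $c'$. So $\supp(x)$ is a set of $r+1$ coordinates where $c$ and $c'$ agree and are nonzero, and the count above applies.

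There is no real obstacle. The only care needed is to count agreements on the common support correctly; the bound $\dH\le 4r+2-2|A|$ already holds before any sign disagreements on the common support are taken into account.
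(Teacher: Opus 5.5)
Your proof is correct and follows the same route as the paper: a common shadow point forces at least $r+1$ coordinates on which $c$ and $c'$ agree with the same nonzero sign, and since both have Hamming weight $2r+1$, this caps $\dH(c,c')$ at $2r$. The only difference is that you spell out the count $\dH(c,c')\le 4r+2-2|A|$, which the paper leaves implicit.
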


The next theorem gives the almost complete coverings announced in the introduction.

\begin{theorem}\label{thm:intro-almost}
Fix $r\ge1$.  For all sufficiently large $n$, there is a set
\[
 \Omega_n\subseteq\{0,\pm1\}^n\cap S^n(2r+1)
\]
whose shadows are pairwise disjoint and cover a $1-o(1)$ proportion of $S^n(r+1)$.  Moreover,
\begin{equation}\label{eq:almost-size}
 |\Omega_n|
 =(1-o(1))
 \frac{2^{r+1}}{\binom{2r+1}{r+1}}
 \binom n{r+1}.
\end{equation}
\end{theorem}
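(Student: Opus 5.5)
We apply Theorem~\ref{thm:ternary-constant-weight} with $Q=3$, $w=2r+1$ and $d=2r+1$, which is odd as required. Then
\[
 t=\frac{2(2r+1)-(2r+1)+1}{2}=r+1,
\]
so there is a ternary constant-weight code $\Omega_n\subseteq\{0,\pm1\}^n$ of weight $2r+1$ and minimum Hamming distance $2r+1$ with
\[
 |\Omega_n|=(1-o(1))\frac{2^{r+1}}{\binom{2r+1}{r+1}}\binom n{r+1}.
\]
This is exactly \eqref{eq:almost-size}. Every word of $\Omega_n$ has Lee weight $2r+1$, so $\Omega_n\subseteq S^n(2r+1)$. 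By Lemma~\ref{lem:code-shadows-disjoint}, the shadows $D(c)$, $c\in\Omega_n$, are pairwise disjoint.

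To justify that lemma, use Lemma~\ref{lem:shadow}. A common point $x$ of $D(c)$ and $D(c')$ satisfies $x\preceq c$ and $x\preceq c'$. Since $c,c'$ are unit-coordinate, $x$ is also unit-coordinate with support of size $r+1$, and $c,c'$ agree with the same sign on that support. Both supports have size $2r+1$, so they differ in at most $2r$ coordinates. Hence $\dH(c,c')\le 2r$, a contradiction.

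For the coverage proportion, each $c\in\Omega_n$ is unit-coordinate, so by Lemma~\ref{lem:shadow} its shadow consists of the signed restrictions of $c$ to the $(r+1)$-subsets of its support. Thus $|D(c)|=\binom{2r+1}{r+1}$, and disjointness gives
\[
 \Bigl|\bigsqcup_{c\in\Omega_n}D(c)\Bigr|=(1-o(1))\,2^{r+1}\binom n{r+1}.
\]
It remains to compare this with $L_{r+1}=|S^n(r+1)|$. Take \eqref{eq:shell-explicit} with $t=r+1$. The top term $j=r+1$ equals $2^{r+1}\binom n{r+1}$, and each term with $j\le r$ is $O(n^{r})$ for fixed $r$. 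Hence
\[
 L_{r+1}=(1+O(1/n))\,2^{r+1}\binom n{r+1},
\]
and the shadows cover a $1-o(1)$ proportion of $S^n(r+1)$.

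The only genuine input is the asymptotic existence result of Liu and Shangguan. The main care point is checking that the distance parameter $d=2r+1$ is odd and gives $t=r+1$, and that this Hamming condition is strong enough for Lemma~\ref{lem:code-shadows-disjoint}. The remaining steps are counting.
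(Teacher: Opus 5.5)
Your proposal is correct and follows the paper's proof essentially step for step. You use the same Liu--Shangguan parameters $(Q,w,d)=(3,2r+1,2r+1)$ giving $t=r+1$, the same disjointness lemma (your justification of it matches the paper's informal remark), the same shadow size $\binom{2r+1}{r+1}$, and the same comparison of the coverage with the support-$(r+1)$ term $2^{r+1}\binom{n}{r+1}$ of $L_{r+1}$.
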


\begin{proof}
Apply Theorem~\ref{thm:ternary-constant-weight} with
\[
 Q=3,
 \qquad
 w=2r+1,
 \qquad
 d=2r+1.
\]
Then
\[
 t=\frac{2(2r+1)-(2r+1)+1}{2}=r+1.
\]
Hence there is a ternary constant-weight code $\cC_n$ such that
\[
 |\cC_n|
 =(1-o(1))
 \frac{2^{r+1}}{\binom{2r+1}{r+1}}
 \binom n{r+1}.
\]
Regard the codewords as integer vectors and set $\Omega_n=\cC_n$.  Every center has Lee weight $2r+1$, and Lemma~\ref{lem:code-shadows-disjoint} shows that the shadows are pairwise disjoint.

A unit-coordinate center of weight $2r+1$ dominates exactly the unit-coordinate points obtained by choosing $r+1$ of its signed nonzero coordinates.  Hence its shadow has size
\[
 \binom{2r+1}{r+1}.
\]
It follows that
\[
 \left|\bigcup_{z\in\Omega_n}D(z)\right|
 =(1-o(1))2^{r+1}\binom n{r+1}.
\]
This asymptotically attains the unit-coordinate shadow bound \eqref{eq:unit-coordinate-shadow-bound}.

Finally, \eqref{eq:shell-explicit} gives
\[
 |S^n(r+1)|
 =\sum_{j=1}^{r+1}
 2^j\binom nj\binom r{j-1}.
\]
The support-$(r+1)$ term, which counts the unit-coordinate points, is
\[
 2^{r+1}\binom n{r+1}.
\]
For fixed $r$, all terms of smaller support sum to $O_r(n^r)$.  Therefore
\[
 |S^n(r+1)|
 =2^{r+1}\binom n{r+1}+O_r(n^r),
\]
and the selected shadows cover a $1-o(1)$ proportion of the entire outer shell.
\end{proof}

\section{Nonexistence of Complete Coverings}\label{sec:nonexistence}
We now prove the two main nonexistence theorems.  Assume throughout
this section that $\Omega$ is a complete covering.  For every
$a\in S^n(r-1)$, define
\begin{equation}\label{eq:Omega-a}
	\Omega(a)=\{z\in\Omega:a\preceq z\}.
\end{equation}
Thus $\Omega(a)$ is the set of centers in $\Omega$ which dominate $a$

The idea starts with the following equation obtained via double-counting:
\begin{equation}\label{eq:key_idea}
	\sum_{a\in S^n(r-1)} |\Omega(a)| = \sum_{z\in \Omega} |\{a\in S^n(r-1): a\preceq z\}|.
\end{equation}

\begin{figure}[h!]
	\centering
	\resizebox{0.90\textwidth}{!}{%
		\begin{tikzpicture}[
			box/.style={draw, rounded corners, align=center, inner sep=4pt,
				text width=0.30\textwidth, font=\footnotesize},
			wide/.style={draw, rounded corners, align=center, inner sep=4pt,
				text width=0.34\textwidth, font=\footnotesize},
			arr/.style={-{Latex[length=2mm]}, thick}
			]
			\node[wide] (cover) {Complete covering};
			\node[box, below left=8mm and -7mm of cover] (local)
			{Exact local decomposition of\\
				$E(\Gamma_a)\sqcup\{\ell_u:u\in V(a)\}$};
			\node[box, below right=8mm and -7mm of cover] (coeff)
			{Coefficient polynomials $P_z(q)$\\
				and central gaps $g(z)$};
			\node[box, below=6mm of local] (lower)
			{Incidence-rank lower bound for \eqref{eq:key_idea}};
			\node[box, below=6mm of coeff] (upper)
			{Coefficient-gap upper bound for \eqref{eq:key_idea}};
			\coordinate (mid) at ($(lower)!0.5!(upper)$);
			\node[wide, below=10mm of mid] (compare)
			{Comparison gives necessary inequalities\\
				among $L_{r-1}$, $L_r$, and $L_{r+1}$};
			\node[box, below left=7mm and -7mm of compare] (finite)
			{Lee-shell recurrence\\and finite estimates};
			\node[box, below right=7mm and -7mm of compare] (asym)
			{Lee-shell recurrence\\and top-support asymptotics};
			\node[box, below=6mm of finite] (main)
			{Contradiction under the condition\\of Theorem~\ref{thm:main}};
			\node[box, below=6mm of asym] (asymthm)
			{Contradiction under the condition\\of Theorem~\ref{thm:asymptotic-intro}};
			\draw[arr] (cover) -- (local);
			\draw[arr] (cover) -- (coeff);
			\draw[arr] (local) -- (lower);
			\draw[arr] (coeff) -- (upper);
			\draw[arr] (lower) -- (compare);
			\draw[arr] (upper) -- (compare);
			\draw[arr] (compare) -- (finite);
			\draw[arr] (compare) -- (asym);
			\draw[arr] (finite) -- (main);
			\draw[arr] (asym) -- (asymthm);
		\end{tikzpicture}%
	}
	\centeredfigurecaption{Framework of the proofs of Theorems~\ref{thm:main} and~\ref{thm:asymptotic-intro}.\label{fig:proof-framework}}
\end{figure}

The framework of the proofs is shown in Fig.~\ref{fig:proof-framework} which has two complementary parts.
For $a\in S^n(r-1)$, let $V(a)$ denote the signed directions which increase the Lee weight from $a$, and let $\Gamma_a$ be the graph on $V(a)$ whose edges are the compatible pairs of distinct directions.  We also keep one formal loop $\ell_u$ at every $u\in V(a)$.  The precise definitions are given in Subsection~\ref{sec:descent}. By considering the decomposition of this graph, we can derive lower bounds for $|\Omega(a)|$.
On the other hand, the coefficient polynomial $P_z(q)$ associated with each center $z\in \Omega$ and some further analysis give an upper bound for the right-hand-side of \eqref{eq:key_idea}. 

Comparing these two bounds produces necessary inequalities among the three consecutive Lee-shell sizes $L_{r-1}$, $L_r$, and $L_{r+1}$.  The Lee-shell recurrence, together with either finite estimates or the top-support asymptotics, shows that these necessary inequalities cannot hold in the parameter ranges of Theorems~\ref{thm:main} and~\ref{thm:asymptotic-intro}. 

\subsection{The compatibility graph and its local decomposition}\label{sec:descent}

This subsection establishes the first step in the left branch of Fig.~\ref{fig:proof-framework}. 
Assume throughout this subsection that $\Omega$ is a complete covering, and fix
\[
 a\in S^n(r-1).
\]
For each nonzero coordinate $a_i$, exactly one signed unit direction increases the Lee weight, namely $\sgn(a_i)e_i$.  For each zero coordinate, both $e_i$ and $-e_i$ increase the Lee weight.  Define
\[
 V(a)=
 \{\sgn(a_i)e_i:a_i\ne0\}
 \cup
 \{e_i,-e_i:a_i=0\}.
\]
Then
\[
 v_a:=|V(a)|=2n-|\supp(a)|.
\]
For every zero coordinate $a_i=0$, the two directions $e_i$ and
$-e_i$ form an opposite pair.  The number of such pairs is
\[
m_a=n-|\supp(a)|.
\]
These $m_a$ opposite pairs are pairwise vertex-disjoint and hence form
a matching on $V(a)$.

Let $\Gamma_a$ be the \emph{compatibility graph} with vertex set
$V(a)$.  Two distinct vertices $u,w\in V(a)$ are adjacent in
$\Gamma_a$ if and only if they do not form one of the opposite pairs
$\{e_i,-e_i\}$ arising from a zero coordinate of $a$.  Consequently,
\begin{equation}\label{eq:Gamma-structure}
	\Gamma_a\cong K_{v_a}-M_{m_a}.
\end{equation}
Here $K_{v_a}$ is the complete graph on the $v_a$ signed outward
directions in $V(a)$, while $M_{m_a}$ is the matching consisting
precisely of the $m_a$ opposite pairs
\[
\{e_i,-e_i\},
\qquad a_i=0.
\]
Thus the expression $K_{v_a}-M_{m_a}$ means that all unordered pairs
of distinct directions are edges, except for these $m_a$ opposite
pairs.  In particular,
\[
|E(\Gamma_a)|=\binom{v_a}{2}-m_a.
\]
The graph $\Gamma_a$ is an ordinary simple graph and therefore has no
loops.  

In addition to the ordinary edge set $E(\Gamma_a)$, we introduce one formal loop symbol $\ell_u$ for every $u\in V(a)$, and put
\begin{equation}\label{eq:formal-edge-set}
 \widetilde E_a
 =E(\Gamma_a)\sqcup\{\ell_u:u\in V(a)\}.
\end{equation}
The ordinary edge $\{u,w\}$ corresponds to the point $a+u+w$, while $\ell_u$ corresponds to $a+2u$.  All these points lie in $S^n(r+1)$.  Moreover,
\begin{equation}\label{eq:formal-edge-count}
 |\widetilde E_a|
 =\binom{v_a}{2}-m_a+v_a
 =\binom{v_a}{2}+n.
\end{equation}

For $z\in\Omega(a)$, the difference $z-a$ has a unique expression
\begin{equation}\label{eq:beta-expansion}
 z-a=\sum_{u\in V(a)}\beta_a(z,u)u,
 \qquad
 \beta_a(z,u)\in\Z_{\ge0},
\end{equation}
in which at most one direction from each opposite pair has positive coefficient.  Since $a\preceq z$, there is no cancellation and
\begin{equation}\label{eq:beta-mass}
 \sum_{u\in V(a)}\beta_a(z,u)
 =\wtL(z)-\wtL(a)
 =r+2.
\end{equation}
Define
\begin{equation}\label{eq:RH-sets}
 \mathcal R_a(z)=\{u\in V(a):\beta_a(z,u)\ge1\},
 \qquad
 \mathcal H_a(z)=\{u\in V(a):\beta_a(z,u)\ge2\}.
\end{equation}
Then $\mathcal H_a(z)\subseteq\mathcal R_a(z)$, no opposite pair is contained in $\mathcal R_a(z)$, and
\begin{equation}\label{eq:RH-budget}
 |\mathcal R_a(z)|+|\mathcal H_a(z)|\le r+2.
\end{equation}
Indeed, every element of $\mathcal R_a(z)$ consumes one unit in \eqref{eq:beta-mass}, and every element of $\mathcal H_a(z)$ consumes at least one further unit.

For each $z\in\Omega(a)$, define the following actual subset of the formal edge set \eqref{eq:formal-edge-set}:
\begin{equation}\label{eq:Faz}
 \mathcal F_a(z)
 =\bigl\{\{u,w\}\in E(\Gamma_a):
          u,w\in\mathcal R_a(z)\bigr\}
  \sqcup
  \bigl\{\ell_u:u\in\mathcal H_a(z)\bigr\}.
\end{equation}
Thus $\mathcal F_a(z)$, rather than the ordered pair $(\mathcal R_a(z),\mathcal H_a(z))$, is the set of ordinary edges and loops assigned to $z$.

The next proposition is the exact local decomposition underlying the proof.

\begin{proposition}\label{prop:clique-decomp}
For every $a\in S^n(r-1)$,
\begin{equation}\label{eq:exact-formal-decomp}
 \widetilde E_a
 =\bigsqcup_{z\in\Omega(a)}\mathcal F_a(z).
\end{equation}
Equivalently, every ordinary edge of $\Gamma_a$ occurs in exactly one set $\mathcal F_a(z)$, and every formal loop $\ell_u$ occurs in exactly one such set.
\end{proposition}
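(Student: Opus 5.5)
The plan is to turn \eqref{eq:exact-formal-decomp} into Proposition~\ref{prop:shell-partition} through the map sending each formal edge to a point of $S^n(r+1)$. Define $\psi:\widetilde E_a\to S^n(r+1)$ by $\psi(\{u,w\})=a+u+w$ and $\psi(\ell_u)=a+2u$. First I will check that $\psi$ is injective. Both $u$ and $w$ lie in $V(a)$, so $a+u+w$ involves no cancellation. The image point therefore determines the multiset $\{u,w\}$ of outward directions, provided opposite pairs are excluded. They are excluded: $\{e_i,-e_i\}$ with $a_i=0$ is not an edge of $\Gamma_a$, and $e_i-e_i=0$ would give weight $r-1$ anyway. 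In coordinates, the multiset is recovered from $x-a$, because every coordinate of $x-a$ has the sign prescribed by $V(a)$.

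The central step is to identify, for a fixed $x=\psi(\varepsilon)$, which centers $z\in\Omega$ satisfy $x\preceq z$. By Lemma~\ref{lem:shadow} these are exactly the centers whose shadow contains $x$. Since $a\preceq x$ and $\preceq$ is transitive (it is coordinatewise domination on the same sign side), every such $z$ lies in $\Omega(a)$. For $z\in\Omega(a)$, I then show
\[
x\preceq z\iff \varepsilon\in\mathcal F_a(z).
\]
Write $z-a=\sum_u\beta_a(z,u)u$ as in \eqref{eq:beta-expansion}.

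\begin{itemize}
\item If $\varepsilon=\{u,w\}$ with $u\ne w$, then $x\preceq z$ means that adding $u$ and $w$ to $a$ stays within the domination interval of $z$. Coordinatewise this says $\beta_a(z,u)\ge1$ and $\beta_a(z,w)\ge1$. If $u,w$ act on different coordinates this is immediate. If they act on the same coordinate, that can happen only when they are opposite, which is excluded.
\item If $\varepsilon=\ell_u$, the condition reads $\beta_a(z,u)\ge2$, that is, $u\in\mathcal H_a(z)$.
\end{itemize}

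In both cases the key point is that, when $a\preceq z$, the condition $a+y\preceq z$ for $y$ a nonnegative combination of $V(a)$ (with no opposite pair used) is equivalent to the coefficient-wise inequality $y\le z-a$ in the basis $V(a)$. I will verify this coordinate by coordinate. When $a_i\neq0$, there is only one direction, and the condition is $|a_i|+y\le|z_i|$. When $a_i=0$, $z_i$ determines which of $\pm e_i$ is used, and $y$ must use the same sign.

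Combining the two steps: Proposition~\ref{prop:shell-partition} says each $x\in S^n(r+1)$ lies in exactly one $D(z)$. Hence each $\varepsilon\in\widetilde E_a$ lies in exactly one $\mathcal F_a(z)$ with $z\in\Omega(a)$. Conversely, every element of every $\mathcal F_a(z)$ is in $\widetilde E_a$ by definition, which gives the disjoint union \eqref{eq:exact-formal-decomp}.

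The main obstacle is the coordinatewise equivalence in the central step. It needs care with zero coordinates of $a$, in particular checking that a center $z$ with $z_i=0$, or with $z_i$ of the opposite sign, correctly fails to dominate $a\pm e_i$. The rest is bookkeeping.
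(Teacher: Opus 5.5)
Your proof is correct and follows the paper's argument. Both pull back the exact shell partition along $\{u,w\}\mapsto a+u+w$ and $\ell_u\mapsto a+2u$, use transitivity of $\preceq$ to place the covering center in $\Omega(a)$, and use the coordinatewise equivalences $a+u+w\preceq z\iff u,w\in\mathcal R_a(z)$ and $a+2u\preceq z\iff u\in\mathcal H_a(z)$ for $z\in\Omega(a)$. Your injectivity check on $\psi$ is harmless but not needed, since the statement only requires that each formal edge lie in exactly one $\mathcal F_a(z)$.
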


\begin{proof}
Let $\{u,w\}\in E(\Gamma_a)$.  Then $u$ and $w$ are distinct and are not opposite, so
\[
 x=a+u+w\in S^n(r+1).
\]
The exact shell partition gives a unique $z\in\Omega$ with $x\preceq z$.  Since $a\preceq x\preceq z$, this center belongs to $\Omega(a)$, and $\beta_a(z,u),\beta_a(z,w)\ge1$.  Hence $\{u,w\}\in\mathcal F_a(z)$.  Conversely, if $\{u,w\}\in\mathcal F_a(z)$, then $a+u+w\preceq z$, so uniqueness in the shell partition shows that no other center can contain the same ordinary edge.

Now fix $u\in V(a)$.  The point
\[
 a+2u\in S^n(r+1)
\]
has a unique covering center $z$.  Again $a\preceq a+2u\preceq z$, so $z\in\Omega(a)$ and $\beta_a(z,u)\ge2$.  Therefore $\ell_u\in\mathcal F_a(z)$.  Conversely, $\ell_u\in\mathcal F_a(z)$ implies $a+2u\preceq z$, and exactness gives uniqueness.  This proves \eqref{eq:exact-formal-decomp}.
\end{proof}
In Figure \ref{fig:graph_model}, we provide an example in $\mathbb{Z}^2$ to illustrate Proposition \ref{prop:clique-decomp}. In this example $\Omega(a)=\{z,z'\}$, $\mathcal F_a(z)=\bigl\{\{-e_1,-e_2\}, \{\ell_{-e_1}\}, \{\ell_{-e_2}\}\bigr\}$ and $\mathcal F_a(z')=\bigl\{\{e_1,-e_2\}, \{\ell_{e_1}\}\bigr\}$.
\begin{figure}[h!]
	\centering
	\includegraphics[width=0.5\textwidth]{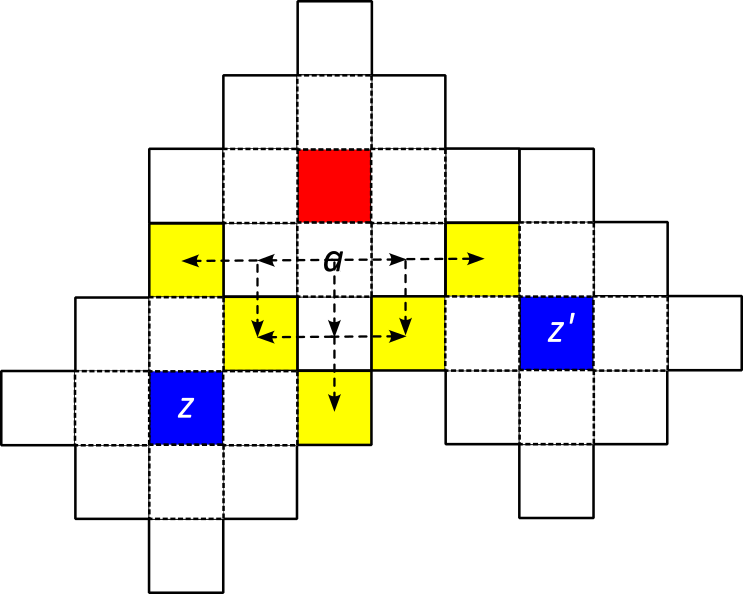}
	\centeredfigurecaption{An example of $\Omega(a)\subseteq S^2(5)$ and the corresponding partition of $\widetilde E_a$ \label{fig:graph_model}}
\end{figure}

\subsection{Incidence-rank lower bounds}\label{sec:descent-bounds}

We now establish the second step in the left branch of Fig.~\ref{fig:proof-framework}.  Incidence matrices convert the local decomposition into lower bounds for $|\Omega(a)|$, and hence into a lower bound for the sum of these quantities over $a\in S^n(r-1)$.

Let $N_a$ be the $0$-$1$ matrix whose rows are indexed by $V(a)$, whose columns are indexed by $z\in\Omega(a)$, and whose $z$-column is the incidence vector of $\mathcal R_a(z)$.  Let $\rho_u$ be the number of sets $\mathcal R_a(z)$ containing $u$.  Exactness on the ordinary edges gives
\begin{equation}\label{eq:NNt}
 N_aN_a^{\mathsf T}=J+D-P,
 \qquad
 D=\operatorname{diag}(\rho_u-1:u\in V(a)),
\end{equation}
where $P$ is the adjacency matrix of the missing matching in $\Gamma_a$.

The first consequence is a lower bound on the size of $\Omega(a)$. Note that the assumption on $r<n$ is essential: the example in Fig.~\ref{fig:graph_model} with $r=n=2$ fails to satisfy this lower bound.

\begin{lemma}\label{lem:weak}
If $r<n$, then for every $a\in S^n(r-1)$,
\begin{equation}\label{eq:weak-bound}
 |\Omega(a)|\ge n+1.
\end{equation}
\end{lemma}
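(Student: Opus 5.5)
The plan is to combine the Gram identity \eqref{eq:NNt} with Lemma~\ref{lem:matching-matrix}. Then
\[
|\Omega(a)|\ge\operatorname{rank}(N_a)=\operatorname{rank}(N_aN_a^{\mathsf T})=\operatorname{rank}(J+D-P).
\]

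The first step is to justify \eqref{eq:NNt}. The $(u,w)$ entry of $N_aN_a^{\mathsf T}$ counts the centers $z\in\Omega(a)$ with $u,w\in\mathcal R_a(z)$. For a non-opposite pair, Proposition~\ref{prop:clique-decomp} says this count is exactly $1$. For an opposite pair it is $0$, because no $\mathcal R_a(z)$ contains an opposite pair. The diagonal entry is $\rho_u$. Hence $N_aN_a^{\mathsf T}=J+\operatorname{diag}(\rho_u-1)-P$.

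To apply Lemma~\ref{lem:matching-matrix} we need $d_u=\rho_u-1\ge1$, that is, $\rho_u\ge2$ for every $u\in V(a)$. Here $r<n$ enters. Since $|\supp(a)|\le r-1<n$, we have $m_a\ge1$ and $v_a=2n-|\supp(a)|\ge n+1$. Fix $u$. Every edge $\{u,w\}$ of $\Gamma_a$ is covered by some $z$ with $u\in\mathcal R_a(z)$, and each such $z$ covers at most $|\mathcal R_a(z)|-1\le r+1$ of these edges. The degree of $u$ in $\Gamma_a$ is $v_a-2$ or $v_a-1$, so at least $v_a-2\ge 2n-r-1$. If $\rho_u=1$, then a single center would cover all of them, forcing $2n-r-1\le r+1$, i.e.\ $n\le r+1$.

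This bound is tight: it fails only when $r=n-1$. That boundary case needs a sharper count using the budget \eqref{eq:RH-budget}. The loop $\ell_u$ is also covered, so the unique center containing $u$ in $\mathcal R_a(z)$ must have $u\in\mathcal H_a(z)$. Then $|\mathcal R_a(z)|\le r+1$, so it covers at most $r$ edges at $u$, and we need $2n-r-1>r$, i.e.\ $n>r+\tfrac12$. This holds whenever $r<n$. So every $d_u\ge1$.

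With $d_u\ge1$, Lemma~\ref{lem:matching-matrix} applies directly. If $t$ matching edges have both endpoints with $d=1$, the rank is $|V|-t+1$ when $t\ge1$, and $|V|$ when $t=0$. Since $t\le m_a$, we get
\[
\operatorname{rank}\ge v_a-m_a+1=(2n-|\supp(a)|)-(n-|\supp(a)|)+1=n+1.
\]
This is exactly \eqref{eq:weak-bound}.

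The main obstacle is the step $\rho_u\ge2$ in the boundary case $r=n-1$, where the loop-forcing argument above is needed. I would double-check that the degree bound $v_a-2$ is valid for $u$ coming from a zero coordinate of $a$, where $u$ loses one neighbour to its opposite direction. The remainder of the argument is routine linear algebra.
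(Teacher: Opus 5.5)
Your proposal is correct and follows the paper's proof essentially step for step. The covered loop $\ell_u$ forces the unique center with $u\in\mathcal R_a(z)$ to satisfy $u\in\mathcal H_a(z)$, so the budget \eqref{eq:RH-budget} limits it to $r<2n-r-1$ neighbours of $u$, which gives $\rho_u\ge2$; Lemma~\ref{lem:matching-matrix} then yields rank at least $v_a-m_a+1=n+1$. Your preliminary cruder count and your explicit check of \eqref{eq:NNt} are harmless extras, and the degree bound $v_a-2$ you flag is indeed valid for directions coming from zero coordinates of $a$.
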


\begin{proof}
We first show that $\rho_u\ge2$ for every $u\in V(a)$.  If $\rho_u=1$, then the unique set $\mathcal R_a(z)$ containing $u$ must contain every ordinary neighbor of $u$ in $\Gamma_a$.  By Proposition~\ref{prop:clique-decomp}, the same center is also the unique one for which $u\in\mathcal H_a(z)$.  Hence \eqref{eq:RH-budget} gives
\[
 |\mathcal R_a(z)|\le r+1,
\]
so that this set contains at most $r$ ordinary neighbors of $u$.  On the other hand,
\[
 \deg_{\Gamma_a}(u)
 \ge v_a-2
 =2n-|\supp(a)|-2
 \ge2n-r-1>r,
\]
because $|\supp(a)|\le r-1$ and $r<n$.  This contradiction proves $\rho_u\ge2$.

Thus every diagonal entry of $D$ in \eqref{eq:NNt} is at least one.  Moreover,
\[
 m_a=n-|\supp(a)|\ge n-r+1\ge2.
\]
By Lemma~\ref{lem:matching-matrix},
\[
 \operatorname{rank}(N_aN_a^{\mathsf T})
 \ge v_a-m_a+1=n+1.
\]
Therefore $N_a$ has at least $n+1$ columns, proving \eqref{eq:weak-bound}.
\end{proof}

The stronger bound follows when the compatibility graph is sufficiently large compared with the cardinality of the support of $a$.

\begin{lemma}\label{lem:strong}
If
\begin{equation}\label{eq:strong-pointwise-condition}
 2n-|\supp(a)|\ge2r+4,
\end{equation}
then
\begin{equation}\label{eq:strong-bound}
 |\Omega(a)|\ge2n-|\supp(a)|.
\end{equation}
In particular, \eqref{eq:strong-bound} holds for every $a\in S^n(r-1)$ whenever
\begin{equation}\label{eq:strong-uniform-condition}
 2n\ge3r+3.
\end{equation}
\end{lemma}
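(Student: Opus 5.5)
The plan is to show that the Gram matrix $N_aN_a^{\mathsf T}=J+D-P$ in \eqref{eq:NNt} is nonsingular. Then it has rank $v_a=2n-|\supp(a)|$, and since $N_a$ has $|\Omega(a)|$ columns, this gives $|\Omega(a)|\ge v_a$. By the last assertion of Lemma~\ref{lem:matching-matrix}, $D-P$ is positive definite as soon as every diagonal entry $\rho_u-1$ is at least $2$. Adding the positive semidefinite matrix $J$ then keeps the sum positive definite, hence of full rank $v_a$. So the whole proof reduces to one combinatorial claim: under \eqref{eq:strong-pointwise-condition}, $\rho_u\ge3$ for every $u\in V(a)$.

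\textbf{Step 1: $\rho_u\ge1$.} Fix $u\in V(a)$. By Proposition~\ref{prop:clique-decomp}, the loop $\ell_u$ lies in $\mathcal F_a(z_1)$ for exactly one $z_1\in\Omega(a)$. Then $u\in\mathcal H_a(z_1)\subseteq\mathcal R_a(z_1)$, so $\rho_u\ge1$. Moreover, \eqref{eq:RH-budget} gives $|\mathcal R_a(z_1)|\le r+1$, since at least one element of $\mathcal R_a(z_1)$, namely $u$, lies in $\mathcal H_a(z_1)$.

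\textbf{Step 2: the edges at $u$ are split disjointly.} The ordinary neighbours of $u$ in $\Gamma_a$ are all of $V(a)$ except $u$ and possibly its opposite $\bar u$, so
\[
\deg_{\Gamma_a}(u)\ge v_a-2.
\]
By the exactness in Proposition~\ref{prop:clique-decomp}, each edge $\{u,w\}$ lies in exactly one $\mathcal F_a(z)$, and necessarily $u\in\mathcal R_a(z)$. Hence the sets $\mathcal R_a(z)\setminus\{u\}$, over the $\rho_u$ centers $z$ with $u\in\mathcal R_a(z)$, cover the neighbourhood of $u$ and are pairwise disjoint on it.

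\textbf{Step 3: $\rho_u\ge3$.} Suppose $\rho_u\le2$, and let the centers be $z_1$ (from Step 1) and possibly $z_2$. Using $|\mathcal R_a(z_1)|\le r+1$ and $|\mathcal R_a(z_2)|\le r+2$ from \eqref{eq:RH-budget}, the neighbourhood of $u$ would have size at most
\[
r+(r+1)=2r+1.
\]
But \eqref{eq:strong-pointwise-condition} gives
\[
\deg_{\Gamma_a}(u)\ge v_a-2\ge 2r+2,
\]
a contradiction. Therefore $\rho_u\ge3$ for every $u\in V(a)$. Every diagonal entry of $D$ is then at least $2$, and the rank argument above yields \eqref{eq:strong-bound}.

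\textbf{The uniform condition.} Every $a\in S^n(r-1)$ has $|\supp(a)|\le r-1$. Hence
\[
2n-|\supp(a)|\ge 2n-r+1,
\]
and this is at least $2r+4$ exactly when $2n\ge3r+3$. So \eqref{eq:strong-uniform-condition} implies \eqref{eq:strong-pointwise-condition} for all such $a$.

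The only delicate point is Step 3. It must use the extra unit spent by the loop $\ell_u$, which lowers the budget of $z_1$ to $r+1$, and it must use the fact that the edges at $u$ are shared disjointly, not merely covered. This is a sharpening of the argument in Lemma~\ref{lem:weak}, with one additional center accounted for.
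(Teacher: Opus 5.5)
Your proof is correct and follows the paper's argument essentially verbatim: you force $\rho_u\ge3$ by bounding the neighbourhood of $u$ by $r$ (from the center carrying the loop $\ell_u$) plus $r+1$ (from a possible second center), and then use that $D-P$, hence $J+D-P$ in \eqref{eq:NNt}, is positive definite. One small correction to your closing remark: Step 3 only needs the sets $\mathcal R_a(z)\setminus\{u\}$ to cover the neighbourhood of $u$, not to be disjoint on it, since uniqueness enters the argument only through the identity \eqref{eq:NNt}.
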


\begin{proof}
Suppose $\rho_u\le2$.  The unique center for which $u\in\mathcal H_a(z)$ contains at most $r$ ordinary neighbors of $u$, by \eqref{eq:RH-budget}.  A possible second set $\mathcal R_a(z')$ contains at most $r+1$ ordinary neighbors.  Hence
\[
 \deg_{\Gamma_a}(u)\le2r+1.
\]
But \eqref{eq:strong-pointwise-condition} gives
\[
 \deg_{\Gamma_a}(u)\ge v_a-2\ge2r+2,
\]
a contradiction.  Thus $\rho_u\ge3$ for every $u$.  Every diagonal entry of $D$ in \eqref{eq:NNt} is therefore at least two, so $D-P$ is positive definite by Lemma~\ref{lem:matching-matrix}.  Hence $N_aN_a^{\mathsf T}$ is positive definite and $N_a$ has row rank $v_a=2n-|\supp(a)|$.  This proves \eqref{eq:strong-bound}.

Finally, $|\supp(a)|\le r-1$, and $2n\ge3r+3$ implies
\[
 2n-|\supp(a)|
 \ge2n-r+1
 \ge2r+4. \qedhere
\]
\end{proof}

The full loop-and-edge decomposition also gives a direct bound that is useful when support sizes are treated separately.

\begin{lemma}\label{lem:loop-count}
For every $a\in S^n(r-1)$,
\begin{equation}\label{eq:loop-count}
 |\Omega(a)|
 \ge
 \left\lceil
 \frac{\binom{2n-|\supp(a)|}{2}+n}
      {\binom{r+2}{2}}
 \right\rceil.
\end{equation}
\end{lemma}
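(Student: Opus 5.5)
The plan is a direct counting argument based on the exact decomposition in Proposition~\ref{prop:clique-decomp}: first compute the total size of $\widetilde E_a$, then bound how much of it a single center can absorb, and finally divide.

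\textbf{Step 1 (total size).} By \eqref{eq:formal-edge-count}, and since $v_a=2n-|\supp(a)|$,
\[
 |\widetilde E_a|=\binom{2n-|\supp(a)|}{2}+n.
\]

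\textbf{Step 2 (capacity of one center).} For each $z\in\Omega(a)$, put $\rho=|\mathcal R_a(z)|$ and $h=|\mathcal H_a(z)|$. By \eqref{eq:Faz}, the ordinary edges in $\mathcal F_a(z)$ are pairs inside $\mathcal R_a(z)$, so there are at most $\binom{\rho}{2}$ of them. The loops in $\mathcal F_a(z)$ number exactly $h$. Hence
\[
 |\mathcal F_a(z)|\le\binom{\rho}{2}+h.
\]
The budget \eqref{eq:RH-budget} gives $\rho+h\le r+2$, with $h\le\rho$, and we want to maximize $\binom{\rho}{2}+h$ under these constraints.

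Moving one unit from $h$ to $\rho$ changes the quantity by $\rho-1\ge0$ whenever $\rho\ge1$. So the maximum is attained at $h=0$, $\rho=r+2$, giving $\binom{r+2}{2}$. One can also see this directly: $\binom{\rho}{2}+h\le\binom{\rho}{2}+(r+2-\rho)\le\binom{r+2}{2}$ for $1\le\rho\le r+2$. The last inequality holds because
\[
 \binom{r+2}{2}-\binom{\rho}{2}=\sum_{k=\rho}^{r+1}k\ge r+2-\rho .
\]
This holds trivially when $\rho=r+2$, and otherwise every term in the sum is at least $1$. The case $\rho=0$ cannot occur, since $r+2\ge1$ forces $\mathcal R_a(z)\ne\varnothing$.

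\textbf{Step 3 (conclusion).} Since the sets $\mathcal F_a(z)$, $z\in\Omega(a)$, partition $\widetilde E_a$,
\[
 \binom{2n-|\supp(a)|}{2}+n=\sum_{z\in\Omega(a)}|\mathcal F_a(z)|\le|\Omega(a)|\binom{r+2}{2}.
\]
Because $|\Omega(a)|$ is an integer, dividing and taking the ceiling gives \eqref{eq:loop-count}.

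The only step needing any care is the small optimization in Step 2. It confirms that counting loops separately does not let a center exceed $\binom{r+2}{2}$, because loops use budget that would otherwise enlarge $\mathcal R_a(z)$. Everything else follows directly from the exact decomposition.
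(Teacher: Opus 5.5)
Your proposal is correct and follows the paper's proof. Both use the exact partition of $\widetilde E_a$ from Proposition~\ref{prop:clique-decomp}, the per-center bound $\binom{s}{2}+h\le\binom{s}{2}+(r+2-s)\le\binom{r+2}{2}$ obtained from \eqref{eq:RH-budget}, and division with a ceiling; the only difference is cosmetic, in that you check the last inequality by a telescoping sum where the paper uses the factorization $(r+2-s)(r+s-1)/2\ge0$.
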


\begin{proof}
By \eqref{eq:formal-edge-count}, the formal edge set has
\[
 \binom{v_a}{2}+n
\]
elements.  If
\[
 s=|\mathcal R_a(z)|,
 \qquad
 h=|\mathcal H_a(z)|,
\]
then $\mathcal F_a(z)$ contains $\binom{s}{2}+h$ formal edges.  Since $s+h\le r+2$,
\[
 \binom{s}{2}+h
 \le\binom{s}{2}+r+2-s
 \le\binom{r+2}{2}.
\]
The last inequality follows from
\[
 \binom{r+2}{2}-\binom{s}{2}-(r+2-s)
 =\frac{(r+2-s)(r+s-1)}2\ge0.
\]
Together with the exact partition \eqref{eq:exact-formal-decomp} and the inequalities obtained above, we get \eqref{eq:loop-count}.
\end{proof}

\subsection{Coefficient gaps and the upper bound}\label{sec:coefficient}

We continue to work on the right branch of Fig.~\ref{fig:proof-framework} in this subsection.

For $z\in\mathbb{Z}^n$, define
\begin{equation}\label{eq:Pz}
 P_z(q)=\prod_{i=1}^n(1+q+\cdots+q^{|z_i|})
       =\sum_{j=0}^{2r+1}c_j(z)q^j.
\end{equation}
The coefficient $c_j(z)$ is the number of points $x\in S^n(j)$ satisfying $x\preceq z$: the absolute value of each coordinate may be chosen between zero and $|z_i|$, and its sign is then uniquely determined.

By Lemma~\ref{lem:unimodal}, the coefficient sequence of $P_z(q)$ is symmetric unimodal.  Since its total degree is $2r+1$,
\begin{equation}\label{eq:central-coeff-basic}
 c_{r-1}(z)\le c_r(z)=c_{r+1}(z).
\end{equation}
Double counting pairs $(a,z)$ with $a\in S^n(r-1)$ and $a\preceq z$, and then using the exact shell partition, gives
\begin{equation}\label{eq:global-upper}
 \sum_{a\in S^n(r-1)}|\Omega(a)|
 =\sum_{z\in\Omega}c_{r-1}(z)
 \le\sum_{z\in\Omega}c_{r+1}(z)
 =L_{r+1}.
\end{equation}

For $j\ge0$, write
\[
 F_j(q)=1+q+\cdots+q^j.
\]
The next identity is used to make the inequality in \eqref{eq:central-coeff-basic} quantitative.

\begin{lemma}\label{lem:polarization}
If $A\ge B\ge2$, then
\begin{equation}\label{eq:polarization}
 F_A(q)F_B(q)-F_{A+1}(q)F_{B-1}(q)
 =q^B F_{A-B}(q).
\end{equation}
\end{lemma}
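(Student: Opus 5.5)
The plan is to clear denominators and compare coefficients. Since $F_j(q)=(1-q^{j+1})/(1-q)$, multiplying the left-hand side of \eqref{eq:polarization} by $(1-q)^2$ gives
\[
 (1-q^{A+1})(1-q^{B+1})-(1-q^{A+2})(1-q^{B}).
\]
Expanding both products, the constant terms cancel and so do the top terms $q^{A+B+2}$. What remains is
\[
 -q^{A+1}-q^{B+1}+q^{A+2}+q^{B}
 =q^B(1-q)-q^{A+1}(1-q)
 =(1-q)\bigl(q^B-q^{A+1}\bigr).
\]

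The next step is to rewrite the remainder. We have $q^B-q^{A+1}=q^B(1-q^{A-B+1})=q^B(1-q)F_{A-B}(q)$, which uses $A\ge B$ so that $F_{A-B}$ is a genuine polynomial. The difference times $(1-q)^2$ therefore equals $(1-q)^2q^BF_{A-B}(q)$.

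Finally, divide by $(1-q)^2$. This is legitimate because we are in the polynomial ring $\Z[q]$, which is an integral domain, so the cancellation is valid. That yields \eqref{eq:polarization}.

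The hypothesis $B\ge2$ is not needed for the identity itself; $B\ge1$ suffices so that $F_{B-1}$ is defined. It is presumably imposed for the later application, where one factor is shifted without creating a zero coordinate. No step here is a real obstacle; the only care needed is the sign bookkeeping in the expansion.

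A combinatorial alternative would be to interpret the coefficients as counts of lattice points in the rectangles $[0,A]\times[0,B]$ and $[0,A+1]\times[0,B-1]$ on each antidiagonal. The difference consists of antidiagonals $j$ with $B\le j\le A$, each contributing one extra point, which matches $q^BF_{A-B}(q)$. I would use the algebraic computation as the written proof.
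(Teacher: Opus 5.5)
Your proposal is correct and follows essentially the same route as the paper: write $F_j(q)=(1-q^{j+1})/(1-q)$, multiply through by $(1-q)^2$, simplify the difference to $q^B(1-q)(1-q^{A-B+1})$, and divide out. Your expansion is right, and your side remark that $B\ge1$ already suffices for the identity is also correct.
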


\begin{proof}
Use $F_j(q)=(1-q^{j+1})/(1-q)$.  After multiplication by $(1-q)^2$, the numerator on the left is
\[
 (1-q^{A+1})(1-q^{B+1})
 -(1-q^{A+2})(1-q^B),
\]
which simplifies to
\[
 q^B(1-q)(1-q^{A-B+1}).
\]
Dividing by $(1-q)^2$ proves the identity.
\end{proof}

For $z\in S^n(2r+1)$, put
\[
 s(z)=|\supp(z)|
\]
and define
\begin{equation}\label{eq:central-gap}
 g(z)=c_{r+1}(z)-c_{r-1}(z)
     =c_r(z)-c_{r-1}(z).
\end{equation}
The following estimate is the main quantitative refinement of symmetric unimodality.

\begin{lemma}\label{lem:central-gap}
For every $z\in S^n(2r+1)$,
\begin{equation}\label{eq:central-gap-sharp}
 g(z)\ge
 \frac{2r+3-s(z)}{r+2}
 \binom{s(z)}{r+1}.
\end{equation}
Consequently,
\begin{equation}\label{eq:central-gap-weak}
 g(z)\ge
 \frac2{r+2}\binom{s(z)}{r+1}.
\end{equation}
\end{lemma}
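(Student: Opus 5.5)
The plan is to show that, for a fixed support size $s=s(z)$, the gap $g(z)$ is smallest for the most unbalanced shape, namely one large coordinate and all others equal to one, and then to evaluate $g$ exactly for that shape.

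Write the nonzero absolute values of $z$ as $a_1,\ldots,a_s\ge1$, so that $\sum a_i=2r+1$ and $P_z(q)=\prod_{i\le s}F_{a_i}(q)$. For a polynomial $R$, let $\gamma(R)$ denote the coefficient of $q^r$ minus the coefficient of $q^{r-1}$ in $R$; then $g(z)=\gamma(P_z)$.

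\textbf{Step 1: unbalancing never increases $\gamma$.} Take two coordinates with values $A\ge B\ge2$, and let $Q$ be the product of the remaining factors $F_{a_i}$. By Lemma~\ref{lem:polarization},
\[
F_AF_BQ-F_{A+1}F_{B-1}Q=q^BF_{A-B}Q .
\]
The product $F_{A-B}Q$ is symmetric unimodal by Lemma~\ref{lem:unimodal}. After the shift by $q^B$, its coefficients live on degrees $B$ through $A+\deg Q$. The centre of this range is $(A+B+\deg Q)/2=(2r+1)/2$, the same centre as $P_z$. Coefficients $r-1$ and $r$ both lie at or below this centre, so the symmetric unimodal sequence is non-decreasing there and
\[
\gamma\bigl(q^BF_{A-B}Q\bigr)\ge0 .
\]
Hence replacing $(A,B)$ by $(A+1,B-1)$ does not increase $\gamma$, and it keeps both the support size and the total weight $2r+1$. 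I apply this move repeatedly, always taking $B$ to be the smallest coordinate that is still at least $2$. The process ends at the shape $(2r+2-s,1,\ldots,1)$. This gives
\[
g(z)\ge\gamma\bigl(F_K(q)(1+q)^{s-1}\bigr),\qquad K=2r+2-s .
\]

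\textbf{Step 2: evaluate the extremal gap.} For the extremal polynomial, the coefficient of $q^j$ is $\sum_{t=j-K}^{j}\binom{s-1}{t}$. Taking the difference of consecutive coefficients telescopes to
\[
\binom{s-1}{r}-\binom{s-1}{r-1-K}=\binom{s-1}{r}-\binom{s-1}{r+2},
\]
where the second form uses $r-1-K=s-r-3$ and the symmetry $\binom{s-1}{s-r-3}=\binom{s-1}{r+2}$. I then express both terms through $\binom{s}{r+1}$:
\[
\binom{s-1}{r}=\frac{r+1}{s}\binom{s}{r+1},\qquad
\binom{s-1}{r+2}=\frac{(s-r-1)(s-r-2)}{s(r+2)}\binom{s}{r+1}.
\]
Put $p=r+1$ and $y=s-r-1$. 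Then
\[
p(p+1)-y(y-1)=(p-y+1)(p+y)=(2r+3-s)\,s ,
\]
so the difference equals exactly $\frac{2r+3-s}{r+2}\binom{s}{r+1}$. This proves \eqref{eq:central-gap-sharp}, with equality at the extremal shape.

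\textbf{Step 3: degenerate cases and the weak form.} If $s\le r$, the right-hand side of \eqref{eq:central-gap-sharp} is zero, and the bound is just \eqref{eq:central-coeff-basic}. The binomial identities above remain valid there with the convention that out-of-range binomials vanish. Since $s\le 2r+1$, we have $2r+3-s\ge2$, which gives \eqref{eq:central-gap-weak}.

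The main obstacle is Step 1, specifically checking that the correction term $q^BF_{A-B}Q$ is centred at the same point $(2r+1)/2$ as $P_z$. Only this makes its $\gamma$ nonnegative, and hence makes each polarization move monotone in the right direction.
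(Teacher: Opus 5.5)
Your proof is correct and follows essentially the same route as the paper. You use the polarization identity of Lemma~\ref{lem:polarization} together with symmetric unimodality of the correction term $q^BF_{A-B}Q$, centred at $(2r+1)/2$, to show that each unit transfer does not increase the gap; this reduces to the shape $(2r+2-s,1,\ldots,1)$, whose gap $\binom{s-1}{r}-\binom{s-1}{r+2}$ you evaluate as $\frac{2r+3-s}{r+2}\binom{s}{r+1}$. The differences are cosmetic: you rewrite both binomials directly in terms of $\binom{s}{r+1}$ rather than passing through $\binom{s}{r+1}-\binom{s}{r+2}$, and you comment explicitly on the degenerate case $s\le r$.
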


\begin{proof}
List the positive absolute coordinate values of $z$ as
\[
 \alpha_1,\ldots,\alpha_{s(z)},
 \qquad
 \sum_i\alpha_i=2r+1.
\]
Suppose that two coordinates of $z$, say the $i$th and $j$th
coordinates, satisfy
\[
|z_i|=A\ge B=|z_j|\ge2.
\]
Let $z'\in S^n(2r+1)$ be obtained from $z$ by transferring one unit
from the $j$th absolute coordinate to the $i$th absolute coordinate.
More precisely, put
\[
z_i'=z_i+\sgn(z_i),\qquad
z_j'=z_j-\sgn(z_j),
\]
and let $z_\ell'=z_\ell$ for $\ell\notin\{i,j\}$.  Since $B\ge2$,
the $j$th coordinate of $z'$ is still nonzero.  In particular,
\[
\wtL(z')=\wtL(z)=2r+1
\qquad\text{and}\qquad
|\supp(z')|=|\supp(z)|.
\]

By Lemma~\ref{lem:polarization},
\[
	P_z(q)-P_{z'}(q)
	\bigl(F_A(q)F_B(q)
	-F_{A+1}(q)F_{B-1}(q)\bigr)
	\prod_{\ell\notin\{i,j\}}F_{|z_\ell|}(q)
	=q^B H(q),
\]
where
\[
H(q)
=
F_{A-B}(q)
\prod_{\ell\notin\{i,j\}}F_{|z_\ell|}(q).
\]
The total degree of $H(q)$ is
\[
(A-B)+(2r+1-A-B)=2(r-B)+1.
\]
Write
\[
H(q)=\sum_{t\ge0}h_tq^t,
\]
where we use the convention $h_t=0$ when $t<0$.  By
Lemma~\ref{lem:unimodal}, the coefficient sequence of $H(q)$ is
symmetric unimodal.  Hence
\[
h_{r-B}\ge h_{r-B-1}.
\]
Here $B\le r$, since $A\ge B$ and $A+B\le2r+1$.

Using the definition of the central gap, we now obtain
\[
\begin{aligned}
	g(z)-g(z')
	&=
	\bigl([q^r]P_z(q)-[q^{r-1}]P_z(q)\bigr)-
	\bigl([q^r]P_{z'}(q)-[q^{r-1}]P_{z'}(q)\bigr)\\
	&=[q^r]\bigl(q^BH(q)\bigr)
	-[q^{r-1}]\bigl(q^BH(q)\bigr)\\
	&=h_{r-B}-h_{r-B-1}\\
	&\ge0,
\end{aligned}
\]
where $[q^t]Q(q)$ denotes the coefficient of $q^t$ in a polynomial
$Q(q)$.

Thus transferring one unit from a smaller capacity of size at least
two to a capacity at least as large cannot increase the central gap.
Repeating this operation shows that, among all positive capacity
vectors of length $s=s(z)$ and total sum $2r+1$, the central gap is
minimized at
\[
(2r+2-s,1,\ldots,1).
\]
For this vector,
\[
 P(q)=F_{2r+2-s}(q)(1+q)^{s-1}.
\]
Moreover,
\[
	 g_{\min}
	 =\binom{s-1}{r}-\binom{s-1}{r+2}
	 =\binom{s}{r+1}-\binom{s}{r+2}
	 =\frac{2r+3-s}{r+2}\binom{s}{r+1}.
\]
This proves \eqref{eq:central-gap-sharp}.  Since $s\le2r+1$, \eqref{eq:central-gap-weak} follows directly.
\end{proof}

There are
\begin{equation}\label{eq:unit-coordinate-count}
 2^{r+1}\binom n{r+1}
\end{equation}
unit-coordinate points in $S^n(r+1)$ when $r<n$.  A center $z$ dominates exactly $\binom{s(z)}{r+1}$ of them.  Hence the exact shell partition gives
\begin{equation}\label{eq:unit-coordinate-identity}
 \sum_{z\in\Omega}\binom{s(z)}{r+1}
 =2^{r+1}\binom n{r+1}.
\end{equation}
Combining \eqref{eq:central-gap-weak} and \eqref{eq:unit-coordinate-identity} yields
\begin{equation}\label{eq:total-gap}
 \sum_{z\in\Omega}g(z)
 \ge
 \frac{2^{r+2}}{r+2}\binom n{r+1}.
\end{equation}
Consequently,
\begin{equation}\label{eq:refined-global}
 \sum_{a\in S^n(r-1)}|\Omega(a)|
 \le
 L_{r+1}-\frac{2^{r+2}}{r+2}\binom n{r+1}.
\end{equation}
Indeed,
\[
 \sum_{a\in S^n(r-1)}|\Omega(a)|
 =\sum_{z\in\Omega}c_{r-1}(z)
 =\sum_{z\in\Omega}\bigl(c_{r+1}(z)-g(z)\bigr),
\]
and $\sum_z c_{r+1}(z)=L_{r+1}$.

Define
\begin{equation}\label{eq:Phi}
 \Phi_{n,r}
 =rL_r-(r-1)L_{r-1}-L_{r+1}
  +\frac{2^{r+2}}{r+2}\binom n{r+1}.
\end{equation}
The lower and upper incidence estimates give the following exact criterion.

\begin{theorem}\label{thm:exact-criterion}
Let $n\ge3$, and let $r\ge2$ be an integer satisfying
\[
 r\le\left\lfloor\frac{2n-3}{3}\right\rfloor.
\]
If $\Phi_{n,r}>0$, then no complete covering of $B^n(r)$ exists.
\end{theorem}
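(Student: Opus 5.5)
We argue by contradiction: assume a complete covering $\Omega$ exists and compare the two global estimates for $\sum_{a\in S^n(r-1)}|\Omega(a)|$. The upper estimate is already available as \eqref{eq:refined-global}. It remains to produce a matching lower estimate from the pointwise bound of Lemma~\ref{lem:strong}.

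The hypothesis $r\le\lfloor(2n-3)/3\rfloor$ gives $3r\le 2n-3$, i.e.\ $2n\ge 3r+3$. This is exactly condition \eqref{eq:strong-uniform-condition}, so \eqref{eq:strong-bound} holds for every $a\in S^n(r-1)$, namely $|\Omega(a)|\ge 2n-|\supp(a)|$. Summing over the shell, and using the definition of $K_{r-1}$ together with the identity \eqref{eq:support-identity}, we get
\[
 \sum_{a\in S^n(r-1)}|\Omega(a)|
 \ge 2nL_{r-1}-K_{r-1}
 = rL_r-(r-1)L_{r-1}.
\]

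Combining this with \eqref{eq:refined-global} gives
\[
 rL_r-(r-1)L_{r-1}\le L_{r+1}-\frac{2^{r+2}}{r+2}\binom n{r+1},
\]
which is precisely $\Phi_{n,r}\le0$. This contradicts the assumption $\Phi_{n,r}>0$, so no complete covering of $B^n(r)$ exists.

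Two side conditions need checking. First, \eqref{eq:refined-global} relies on \eqref{eq:unit-coordinate-identity}, which requires $r<n$ so that the unit-coordinate points of $S^n(r+1)$ number $2^{r+1}\binom n{r+1}$. This follows from $3r\le 2n-3$, since then $r\le(2n-3)/3<n$. Second, $r\ge2$ ensures $S^n(r-1)$ is a nonempty shell with $r-1\ge1$, so the identity \eqref{eq:support-identity} applies as derived.

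The main obstacle is the lower bound step, which depends on Lemma~\ref{lem:strong} holding uniformly across the shell. The range hypothesis in the statement is designed to give exactly this. The rest is bookkeeping.
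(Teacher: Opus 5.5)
Your proposal is correct and follows the paper's own proof essentially step for step. The hypothesis gives $2n\ge 3r+3$, and hence $r<n$. Lemma~\ref{lem:strong}, summed over $S^n(r-1)$ via \eqref{eq:support-identity}, then yields the lower bound $rL_r-(r-1)L_{r-1}$, and comparing this with \eqref{eq:refined-global} forces $\Phi_{n,r}\le 0$. Your explicit check that $r<n$ is needed for \eqref{eq:unit-coordinate-identity} is a useful detail that the paper mentions only in passing.
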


\begin{proof}
The upper bound on $r$ implies $2n\ge3r+3$ and, in particular, $r<n$.  By Lemma~\ref{lem:strong} and \eqref{eq:support-identity},
\[
	 \sum_{a\in S^n(r-1)}|\Omega(a)|
	 \ge
	 \sum_{a\in S^n(r-1)}
	 \bigl(2n-|\supp(a)|\bigr)
	 =2nL_{r-1}-K_{r-1}
	 =rL_r-(r-1)L_{r-1}.
\]
The refined upper bound \eqref{eq:refined-global} therefore forces $\Phi_{n,r}\le0$.  Hence $\Phi_{n,r}>0$ is impossible.
\end{proof}

For later use, we record a closed form and a monotonicity property.

\begin{proposition}\label{prop:Phi-closed}
For $r<n$,
\begin{equation}\label{eq:Phi-closed}
	 \Phi_{n,r}
	 = \sum_{j=2}^{r-1}	 2^j\binom{n}{j} \frac{(j-1)(r-1)}{r-j+1}
	 \binom{r-2}{j-1} -\frac{2^{r+1}r}{r+2}\binom n{r+1}.
\end{equation}
For fixed $r$, the normalized quantity
\[
 \frac{\Phi_{n,r}}{\binom n{r+1}}
\]
is strictly decreasing for $n\ge r+1$.
\end{proposition}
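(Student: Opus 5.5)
The plan is to verify the closed form coefficientwise in the basis $2^j\binom nj$, using the explicit shell formula \eqref{eq:shell-explicit}. The monotonicity then follows by reading off how each normalized term depends on $n$.

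\textbf{Step 1: expand in the basis $2^j\binom nj$.} Since $r<n$, every index $j\le r+1$ satisfies $j\le n$. Hence \eqref{eq:shell-explicit} gives
\[
L_{r-1}=\sum_{j\ge1}2^j\binom nj\binom{r-2}{j-1},\qquad
L_r=\sum_{j\ge1}2^j\binom nj\binom{r-1}{j-1},\qquad
L_{r+1}=\sum_{j\ge1}2^j\binom nj\binom{r}{j-1}.
\]
Each sum can be run up to $j=r+1$, with the convention that out-of-range binomials vanish. In $rL_r-(r-1)L_{r-1}-L_{r+1}$, the coefficient of $2^j\binom nj$ is therefore
\[
\gamma_j=r\binom{r-1}{j-1}-(r-1)\binom{r-2}{j-1}-\binom{r}{j-1}.
\]

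\textbf{Step 2: simplify $\gamma_j$.} By Pascal's rule, $\binom{r}{j-1}=\binom{r-1}{j-1}+\binom{r-1}{j-2}$ and $\binom{r-1}{j-1}-\binom{r-2}{j-1}=\binom{r-2}{j-2}$. Substituting these gives
\[
\gamma_j=(r-1)\binom{r-2}{j-2}-\binom{r-1}{j-2}.
\]
Next use $\binom{r-1}{j-2}=\frac{r-1}{r-j+1}\binom{r-2}{j-2}$, valid for $j\le r$, to get
\[
\gamma_j=\frac{(r-1)(r-j)}{r-j+1}\binom{r-2}{j-2}.
\]
Finally, $\binom{r-2}{j-1}=\frac{r-j}{j-1}\binom{r-2}{j-2}$ turns this into exactly the coefficient $\frac{(j-1)(r-1)}{r-j+1}\binom{r-2}{j-1}$ appearing in \eqref{eq:Phi-closed}.

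The boundary indices need separate checks. For $j=1$ the factor $\binom{r-2}{-1}$ vanishes, and for $j=r$ the factor $r-j$ vanishes, so both contribute zero. For $j=r+1$ only $L_{r+1}$ contributes, giving $\gamma_{r+1}=-1$. Adding the term $\frac{2^{r+2}}{r+2}\binom n{r+1}$ from \eqref{eq:Phi} then yields
\[
2^{r+1}\binom n{r+1}\left(\frac{2}{r+2}-1\right)=-\frac{2^{r+1}r}{r+2}\binom n{r+1}.
\]
This establishes \eqref{eq:Phi-closed}.

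\textbf{Step 3: monotonicity.} Divide \eqref{eq:Phi-closed} by $\binom n{r+1}$. The last term becomes a constant independent of $n$. Each remaining term, for $2\le j\le r-1$, has a strictly positive coefficient times
\[
\frac{\binom nj}{\binom n{r+1}}=\prod_{i=j+1}^{r+1}\frac{i}{n-i+1}.
\]
This is a product of positive factors, each strictly decreasing in $n$ for $n\ge r+1$, so each such term is strictly decreasing.

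The only point requiring care is that the sum over $2\le j\le r-1$ must be nonempty, which means $r\ge3$. For $r=2$ the normalized quantity is the constant $-2^{3}\cdot 2/4=-4$. I would therefore state strict monotonicity for $r\ge3$, and note that the $r=2$ case is constant (and negative), which is harmless for the later applications. Beyond this caveat, the main obstacle is only the bookkeeping of binomial identities at the boundary indices $j=1$, $j=r$ and $j=r+1$.
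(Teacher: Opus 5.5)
Your proof is correct and follows the paper's route: substitute the shell formula, collect coefficients of $2^j\binom nj$ (the support-one and support-$r$ terms vanish and the support-$(r+1)$ term combines with the correction), then deduce monotonicity from the positive coefficients and the strictly decreasing ratios $\binom nj/\binom n{r+1}$; your Pascal-rule computation supplies the binomial identity that the paper only states. Your caveat is also right: the paper's argument tacitly needs the sum over $2\le j\le r-1$ to be nonempty, so strict decrease holds only for $r\ge3$, while for $r=2$ the normalized quantity is the constant $-4$; this is harmless because the monotonicity is used only for large $r$ in the proof of Theorem~\ref{thm:asymptotic-intro}.
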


\begin{proof}
Substitute \eqref{eq:shell-explicit} into \eqref{eq:Phi} and collect terms according to the support size $j$.  For $2\le j\le r-1$,
\[
	 r\binom{r-1}{j-1}
	 -(r-1)\binom{r-2}{j-1}
	 -\binom r{j-1}\
	=
	 \frac{(j-1)(r-1)}{r-j+1}
	 \binom{r-2}{j-1}.
\]
The support-one contribution is zero, the support-$r$ contributions cancel, and the support-$(r+1)$ contribution together with the correction term in \eqref{eq:Phi} is
\[
 -\frac{2^{r+1}r}{r+2}\binom n{r+1}.
\]
This proves \eqref{eq:Phi-closed}.

After division by $\binom n{r+1}$, the final term is independent of $n$, whereas every ratio
\[
 \frac{\binom nj}{\binom n{r+1}},
 \qquad 2\le j\le r-1,
\]
is strictly decreasing in $n$.  All its coefficients in \eqref{eq:Phi-closed} are positive, proving the assertion.
\end{proof}

\subsection{Finite comparison of the two bounds}\label{sec:uniform-proof}

In this part, we combine the lower and upper estimates, use the Lee-shell recurrence to exclude the region $2n<r^2$, and then apply the exact quantity $\Phi_{n,r}$ in the remaining finite-parameter range to finish the proof of Theorem~\ref{thm:main}.

We first settle the range in which $r^2>2n$.  The first lemma treats the part where the strong rank bound applies.

\begin{lemma}\label{lem:square-strong}
Let $n\ge9$, and let $r$ be an integer satisfying
\[
 \max\left\{
 4,
 \left\lfloor\sqrt{2n}\right\rfloor+1
 \right\}
 \le r\le
 \left\lfloor\frac{2n-3}{3}\right\rfloor.
\]
Then no complete covering of $B^n(r)$ exists.
\end{lemma}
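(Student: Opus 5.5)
The plan is to derive the lemma from Theorem~\ref{thm:exact-criterion}. The hypothesis $r\le\lfloor(2n-3)/3\rfloor$ is exactly the range in which that criterion applies. It also gives $2n\ge3r+3$, so $r<n$ and the closed form of Proposition~\ref{prop:Phi-closed} is available. It therefore suffices to prove $\Phi_{n,r}>0$ for every admissible pair $(n,r)$.

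\textbf{Step 1: reduce to one extremal $n$ per $r$.} The lower bound $r\ge\lfloor\sqrt{2n}\rfloor+1$ means $r^2>2n$, that is, $n\le n^*(r):=\lfloor (r^2-1)/2\rfloor$. By Proposition~\ref{prop:Phi-closed}, $\Phi_{n,r}/\binom n{r+1}$ is strictly decreasing in $n$ for $n\ge r+1$. Hence it suffices to show $\Phi_{n^*(r),r}>0$ for each $r\ge4$. The sign is all that matters, and every admissible $n$ satisfies $r+1\le n\le n^*(r)$. The constraint $n\ge9$ and the upper bound $r\le\lfloor(2n-3)/3\rfloor$ only shrink the set of $n$ to be treated, so they cause no difficulty at this step.

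\textbf{Step 2: a one-variable inequality.} Put $n=n^*(r)\approx r^2/2$ in \eqref{eq:Phi-closed}. Divide by the dominant positive term, the $j=r-1$ summand $2^{r-1}\binom n{r-1}\tfrac{(r-1)(r-2)}2$. Use
\[
 \frac{\binom n{r+1}}{\binom n{r-1}}=\frac{(n-r+1)(n-r)}{r(r+1)},
 \qquad
 \frac{\binom n{j}}{\binom n{r-1}}=\prod_{i=j}^{r-2}\frac{i+1}{n-i}.
\]
This turns the claim into an explicit inequality in $r$ alone, with $n=n^*(r)$ taken separately for $r$ odd and $r$ even.

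\textbf{Step 3: the obstacle, and how I would handle it.} The top summand alone is too weak. With $n\approx r^2/2$, the negative term is roughly $2r^4/(r+2)^{2}\cdot 2^{r-1}\binom n{r-1}/r^{0}$ times a constant, which already exceeds $(r-1)(r-2)/2$ times that same quantity. So a genuine accumulation over $j=r-2,r-3,\dots$ is needed. Each further summand loses a factor $\approx 2(j+1)/(n-j)\approx 4/r$ in the binomial ratio and a factor $2$ from $2^j$. It gains a combinatorial factor $\binom{r-2}{j-1}\tfrac{(j-1)(r-1)}{r-j+1}$, which grows roughly like $r^k/k!$ for $j=r-1-k$.

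So the $k$-th term contributes roughly $c_k\approx (2)^k/(k!\,\cdot\text{poly}(k))$ relative to the first. I would sum a fixed number of terms, say $k\le 3$ or $4$, with explicit lower bounds valid for $r\ge r_1$, and show this beats the negative term; this is the main computational difficulty. For $4\le r<r_1$, together with the finitely many pairs where $n^*(r)<9$ or the range is empty, I would evaluate $\Phi_{n^*(r),r}$ directly from \eqref{eq:Phi-closed} or \eqref{eq:Phi} using the recurrence \eqref{eq:recurrence}.

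An alternative for Step~3 is worth trying first. Use \eqref{eq:recurrence} to eliminate $L_{r+1}$:
\[
 \Phi_{n,r}=\frac{r(r+1)-2n}{r+1}L_r-\frac{(r-1)(r+2)}{r+1}L_{r-1}+\frac{2^{r+2}}{r+2}\binom n{r+1}.
\]
Then use $r(r+1)-2n\ge r+1$. It would remain to compare $L_r-\tfrac{(r-1)(r+2)}{r+1}L_{r-1}$ with the gap term, via the top-support expansions of $L_r$ and $L_{r-1}$ from \eqref{eq:shell-explicit}.
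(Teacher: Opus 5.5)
\textbf{Step~3's main plan fails.} Your Step~1 reduction is valid, but a fixed truncation cannot work. Your term estimates are off, and once corrected they show a leading-order cancellation.

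Put $T=2^{r-1}\binom n{r-1}$. Each factor $\frac{i+1}{n-i}$ is about $2/r$, not $4/r$. So the $j=r-1-k$ summand of \eqref{eq:Phi-closed} is $T\,R_{r,k}\frac{(r-2-k)(r-1)}{k+2}$, with $R_{r,k}\to1/k!$ as in Lemma~\ref{lem:top-support}. It is therefore $\approx\frac{r^2}{(k+2)\,k!}T$, that is, $\frac{2}{(k+2)k!}$ times the top summand, not $2^k/(k!\,\mathrm{poly}(k))$.

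Since $\frac{1}{(k+2)k!}=\frac{1}{(k+1)!}-\frac{1}{(k+2)!}$, the full positive sum is $\approx r^2T$. This is also the leading order of the negative term $\frac{4(n-r)(n-r+1)}{(r+1)(r+2)}T$ when $2n\approx r^2$. The true margin is only $\Phi_{n^*(r),r}=(2+o(1))rT$, by Proposition~\ref{prop:Phi-asymptotic} with $c=0$ and Lemma~\ref{lem:top-support}.

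Truncating at $k\le K$ discards at least the $(K+1)$st summand, which is of order $r^2T/((K+3)(K+1)!)$. Hence for every fixed $K$ the truncated inequality is false for all large $r$, and no $r_1$ exists. The closed-form route would need every summand, with tail control to relative accuracy $O(1/r)$.

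\textbf{Your alternative is the right route.} It is essentially the paper's proof, but as written it is unfinished and one step is too lossy. Do not replace $r(r+1)-2n$ by $r+1$. Keep it as $r+\delta$, with $\delta=r^2-2n\ge1$, and use \eqref{eq:recurrence} at $t=r-1$ to get $L_r/L_{r-1}>2n/r=r-\delta/r$. Then
\[
\frac{r(r+1)}{L_{r-1}}\bigl(rL_r-(r-1)L_{r-1}-L_{r+1}\bigr)>-\delta^2+\delta r^2-\delta r-r^2+2r .
\]
The right side is a concave quadratic in $\delta$. At the endpoints $\delta=1$ and $\delta=r^2-3r-3$ (the latter from $2n\ge3r+3$) it equals $r-1$ and $2r^3-4r^2-13r-9$, both positive for $r\ge4$.

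So the shell difference is already positive. This contradicts $L_{r+1}\ge rL_r-(r-1)L_{r-1}$, which follows from Lemma~\ref{lem:strong}, \eqref{eq:support-identity} and \eqref{eq:global-upper}. You need neither the gap term, nor the closed form, nor Step~1.

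With the crude coefficient $r+1$ the argument breaks for general admissible $n$. At $(n,r)=(9,5)$, where $\delta=7$, it gives $L_5-\frac{14}{3}L_4+\frac{2^7}{7}\binom96=-2658<0$. After your Step~1 reduction, $\delta\in\{1,2\}$, and the crude bound suffices for odd $r$. For even $r$ it only leaves the lower bound $-\frac{2}{r(r+1)}L_{r-1}$, so there you need either the gap term or the exact coefficient $r+2$.
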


\begin{proof}
Suppose that $\Omega$ is a complete covering.  By Lemma~\ref{lem:strong}, \eqref{eq:support-identity}, and \eqref{eq:global-upper},
\begin{equation}\label{eq:strong-necessary-main}
 L_{r+1}\ge rL_r-(r-1)L_{r-1}.
\end{equation}
Put
\[
 \delta=r^2-2n.
\]
Then $\delta\ge1$.  The recurrence \eqref{eq:recurrence}, first with $t=r-1$, gives
\[
 \frac{L_r}{L_{r-1}}
 =\frac{2n}{r}+\frac{r-2}{r}\frac{L_{r-2}}{L_{r-1}}
 >r-\frac{\delta}{r}.
\]
Using the same recurrence with $t=r$, we obtain
\[
 \begin{aligned}
 &\frac{r(r+1)}{L_{r-1}}
 \bigl(rL_r-(r-1)L_{r-1}-L_{r+1}\bigr)\\
 &\qquad>
 -\delta^2+\delta r^2-\delta r-r^2+2r.
 \end{aligned}
\]
The assumption $2n\ge3r+3$ implies
\[
 1\le\delta\le r^2-3r-3.
\]
The expression
\[
 -\delta^2+\delta r^2-\delta r-r^2+2r
\]
is a concave quadratic polynomial in $\delta$.  At the two endpoints of the displayed interval it has the values
\[
 r-1
\]
and
\[
 2r^3-4r^2-13r-9,
\]
respectively.  Both values are positive for $r\ge4$.  Hence
\[
 rL_r-(r-1)L_{r-1}>L_{r+1},
\]
contradicting \eqref{eq:strong-necessary-main}.
\end{proof}

The complementary part of the range $r^2>2n$ follows from the weaker rank bound.

\begin{lemma}\label{lem:square-weak}
Let $n\ge10$, and let $r$ be an integer satisfying
\[
 \max\left\{
 \left\lfloor\sqrt{2n}\right\rfloor+1,
 \left\lceil\frac{2n-2}{3}\right\rceil
 \right\}
 \le r<n.
\]
Then no complete covering of $B^n(r)$ exists.
\end{lemma}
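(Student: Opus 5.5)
Assume a complete covering $\Omega$ exists. By Lemma~\ref{lem:weak} (valid since $r<n$) we have $|\Omega(a)|\ge n+1$ for every $a\in S^n(r-1)$. Summing over $a$ and using \eqref{eq:global-upper} gives the necessary inequality
\[
 (n+1)L_{r-1}\le L_{r+1}.
\]
The plan is to show that this fails in the stated range. We will use the Lee-shell recurrence \eqref{eq:recurrence} twice, in the same way as in the proof of Lemma~\ref{lem:square-strong}.

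Put $\delta=r^2-2n\ge1$. Since $r\ge(2n-2)/3$, the quantity $2n$ is at most about $3r+2$, so $\delta$ is of order $r^2$ and $r$ is large compared with $\sqrt{2n}$.

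From the recurrence with $t=r-1$ we have
\[
 \frac{L_r}{L_{r-1}}=\frac{2n}{r}+\frac{r-2}{r}\frac{L_{r-2}}{L_{r-1}}.
\]
The correction term is positive but small, so $L_r/L_{r-1}<2n/r+\epsilon$. For the upper bound on $L_{r+1}$, the recurrence with $t=r$ gives
\[
 (r+1)L_{r+1}=2nL_r+(r-1)L_{r-1}.
\]
We therefore need upper bounds on $L_r/L_{r-1}$, which is the opposite direction to Lemma~\ref{lem:square-strong}. For this, apply the recurrence backwards. Since $L_{r-2}/L_{r-1}\le$ something, a crude bound such as $L_{r-2}\le L_{r-1}\cdot r/(2n)$ should suffice; it follows from the recurrence at $t=r-2$ because $(r-1)L_{r-1}\ge 2nL_{r-2}$. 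This gives
\[
 \frac{L_r}{L_{r-1}}\le \frac{2n}{r}+\frac{r-2}{2n}.
\]
Consequently
\[
 \frac{L_{r+1}}{L_{r-1}}\le \frac{2n}{r+1}\Bigl(\frac{2n}{r}+\frac{r-2}{2n}\Bigr)+\frac{r-1}{r+1}.
\]

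It then remains to show that this right-hand side is strictly less than $n+1$ whenever $r\ge\max\{\lfloor\sqrt{2n}\rfloor+1,\lceil(2n-2)/3\rceil\}$. Since $r\ge(2n-2)/3$, the main term $4n^2/(r(r+1))$ is at most about $4n^2\cdot 9/(2n-2)^2\approx 9$, which is far below $n+1$ once $n\ge10$. The remaining terms are each $O(1)$, namely $(r-2)/(r+1)<1$ and $(r-1)/(r+1)<1$. I expect a clean verification: bound $2n\le3r+2$, so
\[
 \frac{4n^2}{r(r+1)}\le \frac{(3r+2)^2}{r(r+1)}\le 9+\frac{3r+4}{r(r+1)}.
\]
The total is then below roughly $11.5$, and $11.5<n+1$ holds for $n\ge 11$.

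The main obstacle is the boundary case $n=10$, which has to be checked separately. There $r\in\{6,\dots,9\}$, and it suffices to compute $L_{r-1}$ and $L_{r+1}$ directly from \eqref{eq:shell-explicit} and verify $(n+1)L_{r-1}>L_{r+1}$. If the crude constants are too weak for small $n$, I would tighten the estimate by using the exact value of $L_{r-2}/L_{r-1}$ from \eqref{eq:shell-explicit}, or fall back on direct computation for the handful of pairs $(n,r)$ with $n\le 12$.
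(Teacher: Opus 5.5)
Your argument is correct and is essentially the paper's proof. Lemma~\ref{lem:weak} with \eqref{eq:global-upper} gives $(n+1)L_{r-1}\le L_{r+1}$, and three applications of \eqref{eq:recurrence} then bound $L_{r+1}/L_{r-1}$ from above. The paper uses the slightly sharper $L_{r-2}/L_{r-1}\le(r-1)/(2n)$, and then invokes monotonicity in $r$ to substitute $r=2(n-1)/3$ and obtain a cubic in $n$; you instead substitute $2n\le3r+2$ directly.

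No separate check at $n=10$ is needed, because your own bound evaluated exactly gives
$\frac{4n^2}{r(r+1)}+\frac{2r-3}{r+1}\le\frac{11r^2+9r+4}{r(r+1)}=11-\frac{2(r-2)}{r(r+1)}<11\le n+1$.
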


\begin{proof}
If a complete covering existed, Lemma~\ref{lem:weak} and \eqref{eq:global-upper} would give
\begin{equation}\label{eq:weak-necessary-main}
 L_{r+1}\ge(n+1)L_{r-1}.
\end{equation}
The recurrence \eqref{eq:recurrence} with $t=r-2$ gives
\[
 \frac{L_{r-2}}{L_{r-1}}\le\frac{r-1}{2n}.
\]
Using it with $t=r-1$, we obtain
\[
 \frac{L_r}{L_{r-1}}
 \le
 \frac{2n}{r}+
 \frac{(r-2)(r-1)}{2nr}.
\]
A further use of the recurrence, now with $t=r$, gives
\begin{equation}\label{eq:ratio-upper-main}
 \frac{L_{r+1}}{L_{r-1}}
 \le
 \frac{4n^2+2(r-1)^2}{r(r+1)}.
\end{equation}
For fixed $n$ and real $1\le x<n$, the function
\[
 x\longmapsto
 \frac{4n^2+2(x-1)^2}{x(x+1)}
\]
is strictly decreasing, because its derivative is
\[
 \frac{2(3x^2-2x-1-4n^2x-2n^2)}{x^2(x+1)^2}<0.
\]
The inequality $2n\le3r+2$ gives $r\ge2(n-1)/3$.  Substituting $x=2(n-1)/3$ into the difference between $n+1$ and the right-hand side of \eqref{eq:ratio-upper-main}, and multiplying by the positive denominator, gives
\[
 \frac{2}{9}\bigl(2n^3-21n^2+18n-26\bigr).
\]
This expression is positive for every $n\ge10$: at $n=10$ it equals $54$, and its derivative is positive for $n\ge10$.  Hence
\[
 L_{r+1}<(n+1)L_{r-1},
\]
contradicting \eqref{eq:weak-necessary-main}.
\end{proof}

The two dimensions not covered uniformly by Lemma~\ref{lem:square-weak} are handled separately.  Isolating them here keeps the proof of Theorem~\ref{thm:main} from being obscured by short numerical checks.

\begin{lemma}\label{lem:eight-nine}
Let $n\in\{8,9\}$ and $2\le r<n$.  If $r^2>2n$, then no complete covering of $B^n(r)$ exists.
\end{lemma}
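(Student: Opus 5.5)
The plan is to settle the finitely many remaining pairs by direct numerical comparison of the lower and upper bounds for $\sum_{a\in S^n(r-1)}|\Omega(a)|$. The condition $r^2>2n$ with $r<n$ leaves $r\in\{5,6,7\}$ for $n=8$ and $r\in\{5,6,7,8\}$ for $n=9$. The case $(n,r)=(9,5)$ is already covered by Lemma~\ref{lem:square-strong}, since $\lfloor(2\cdot 9-3)/3\rfloor=5$. So only six pairs remain, and none of them satisfies $2n\ge 3r+3$. Hence Lemma~\ref{lem:strong} cannot be applied uniformly, and the work lies in combining the available pointwise bounds according to support size.

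\emph{Lower bound.} I will use a support-dependent pointwise bound. Lemma~\ref{lem:weak} gives $|\Omega(a)|\ge n+1$ for every $a$, because $r<n$. The pointwise form \eqref{eq:strong-pointwise-condition} of Lemma~\ref{lem:strong} gives $|\Omega(a)|\ge 2n-|\supp(a)|$ whenever $2n-|\supp(a)|\ge 2r+4$, which is the case when $|\supp(a)|$ is small. If useful, Lemma~\ref{lem:loop-count} can also be added for the remaining supports. Summing over $a$ grouped by support size $j$, with $2^j\binom nj\binom{r-2}{j-1}$ points of $S^n(r-1)$ having support $j$, produces an explicit number $\Lambda_{n,r}$ satisfying $\sum_a|\Omega(a)|\ge\Lambda_{n,r}$.

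\emph{Upper bound.} I will use \eqref{eq:refined-global}, namely $\sum_a|\Omega(a)|\le L_{r+1}-\frac{2^{r+2}}{r+2}\binom n{r+1}$. If this is not enough, I will instead use the sharp gap estimate \eqref{eq:central-gap-sharp} together with \eqref{eq:unit-coordinate-identity}. The shell sizes follow quickly from the recurrence \eqref{eq:recurrence} starting from $L_0=1$ and $L_1=2n$.

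As a test of the plan, take $(n,r)=(8,5)$. The recurrence gives $L_4=2816$ and $L_6=27008$, so the upper bound is $27008-512=26496$. The points of $S^8(4)$ with support $1,2,3,4$ number $16,336,1344,1120$. The strong bound applies for supports $1$ and $2$, since $16-2\ge 14$, and gives at least $15$ and $14$ centers respectively. The weak bound gives $9$ for supports $3$ and $4$. Hence
\[
16\cdot15+336\cdot14+2464\cdot 9=27120>26496,
\]
which is a contradiction. The cruder inequality $L_{r+1}\ge(n+1)L_{r-1}$ alone would not suffice here, so the mixing by support is essential. I would repeat the same computation for $(8,6)$, $(8,7)$, $(9,6)$, $(9,7)$ and $(9,8)$.

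\emph{Main obstacle.} The difficulty is the largest radii, for example $(8,7)$ and $(9,8)$. There the strong bound applies to almost no support sizes, because $2n-j\ge2r+4$ fails, and $n+1$ is relatively small compared with $L_{r+1}/L_{r-1}$. For these pairs I would first try to sharpen the rank argument of Lemma~\ref{lem:weak}. The idea is to count the vertices with $\rho_u=2$ more carefully, or to use the exact rank formula of Lemma~\ref{lem:matching-matrix} with the number $t$ of degenerate matching edges. If that still falls short, I would fall back on the sharp gap bound \eqref{eq:central-gap-sharp}, minimized over the support sizes $s\le\min\{n,2r+1\}$, to lower the upper bound further.
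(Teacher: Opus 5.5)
Your plan is essentially the paper's proof: the same six remaining pairs after removing $(9,5)$ via Lemma~\ref{lem:square-strong}, and the same support-by-support count $16\cdot15+336\cdot14+2464\cdot9=27120$ for $(8,5)$, where your use of \eqref{eq:refined-global} even gives the sharper upper bound $26496$ in place of the paper's $L_6=27008$. The one correction is that your ``main obstacle'' is backwards: $L_{r+1}/L_{r-1}$ shrinks as $r$ grows, and for the other five pairs the crude consequence $L_{r+1}\ge(n+1)L_{r-1}$ of Lemma~\ref{lem:weak} already fails. Indeed $9L_5-L_7=16352$, $9L_6-L_8=85888$, $10L_5-L_7=18594$, $10L_6-L_8=157266$ and $10L_7-L_9=622114$ are all positive, so $(8,5)$ is the only delicate case and no sharpening of the rank or gap estimates is needed.
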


\begin{proof}
First let $n=9$.  The value $r=5$ satisfies $2n=3r+3$ and is excluded by Lemma~\ref{lem:square-strong}.  For $r=6,7,8$, Lemma~\ref{lem:weak} would imply
\[
 L_{r+1}\ge10L_{r-1}.
\]
However, the explicit shell formula \eqref{eq:shell-explicit} gives
\[
 \begin{array}{c|ccc}
 r&6&7&8\\ \hline
 10L_{r-1}-L_{r+1}&18594&157266&622114.
 \end{array}
\]
Each entry is positive, giving a contradiction.

Now let $n=8$.  Since $r<n$ and $r^2>16$, we have $r\in\{5,6,7\}$.  For $r=6,7$, the weak bound would give $L_{r+1}\ge9L_{r-1}$, whereas
\[
 9L_5-L_7=16352,
 \qquad
 9L_6-L_8=85888.
\]
It remains to consider $r=5$.  The shell $S^8(4)$ contains, respectively,
\[
 16,\qquad336,\qquad1344,\qquad1120
\]
points of support sizes $1,2,3,4$.  For support sizes $1$ and $2$, the pointwise hypothesis of Lemma~\ref{lem:strong} is satisfied, and the corresponding lower bounds for $|\Omega(a)|$ are $15$ and $14$.  For support sizes $3$ and $4$, Lemma~\ref{lem:weak} gives the lower bound $9$.  It follows that
\[
 \begin{aligned}
 \sum_{a\in S^8(4)}|\Omega(a)|
 &\ge16\cdot15+336\cdot14
      +1344\cdot9+1120\cdot9\\
 &=27120.
 \end{aligned}
\]
On the other hand, \eqref{eq:global-upper} gives an upper bound of
\[
 L_6=27008,
\]
which is a contradiction.
\end{proof}

We can now prove the first main theorem.

\begin{proof}[Proof of Theorem~\ref{thm:main}]
Suppose, for a contradiction, that $\Omega$ is a complete covering satisfying \eqref{eq:main-condition}.  If $r\ge n$, then Corollary~\ref{cor:spacapan-Zn} gives a contradiction.  We may therefore assume
\begin{equation}\label{eq:r-less-n-main}
 r<n.
\end{equation}

Assume first that $r^2>2n$.  If $n\ge10$, then either $2n\ge3r+3$ or $2n\le3r+2$, so Lemma~\ref{lem:square-strong} or Lemma~\ref{lem:square-weak} applies.  If $n\in\{8,9\}$, the conclusion follows from Lemma~\ref{lem:eight-nine}.  Thus no complete covering exists in this range.

It remains to consider $r^2\leq 2n$.  
The hypothesis \eqref{eq:main-condition} gives
\begin{equation}\label{eq:p-range-main}
 0\le 2n-r^2
 \le\left\lfloor\frac{25r}{34}\right\rfloor-3
 \le\frac{25r}{34}-3.
\end{equation}
In particular, $r\ge5$.  Moreover, as $2n\ge3r+3$,
Theorem~\ref{thm:exact-criterion} is applicable.  We shall prove that $\Phi_{n,r}$ which is defined in \eqref{eq:Phi-closed} is positive.

The recurrence \eqref{eq:recurrence} gives
\[
 \frac{L_r}{L_{r-1}}
 >\frac{2n}{r}=r+\frac pr.
\]
Using the recurrence once more, we obtain
\begin{equation}\label{eq:shell-difference-lower}
 \frac{rL_r-(r-1)L_{r-1}-L_{r+1}}{L_{r-1}}
 >-
 \frac{r-2+p(r-1)+p^2/r}{r+1}.
\end{equation}

To estimate the positive correction term in \eqref{eq:Phi}, retain the contribution
\[
 2^{r-1}\binom n{r-1}
\]
of the points of support $r-1$ in $L_{r-1}$.  Writing the remaining support sizes as $r-1-\ell$, the ratio of the corresponding term in \eqref{eq:shell-explicit} to this top-support contribution is at most
\[
 \frac1{\ell!}
 \left(
  \frac{(r-1)(r-2)}{2(n-r+2)}
 \right)^\ell.
\]
Summing them gives
\begin{equation}\label{eq:Ltop-main}
 L_{r-1}
 \le
 2^{r-1}\binom n{r-1}
 \exp\left(
  \frac{(r-1)(r-2)}{2(n-r+2)}
 \right).
\end{equation}
Consequently,
\begin{equation}\label{eq:correction-lower-main}
 \begin{aligned}
 &\frac{2^{r+2}}{(r+2)L_{r-1}}\binom n{r+1}\\
 &\quad\ge
 \frac{8(n-r)(n-r+1)}{r(r+1)(r+2)}
 \exp\left(
  -\frac{(r-1)(r-2)}{2(n-r+2)}
 \right).
 \end{aligned}
\end{equation}

Apply Lemma~\ref{lem:comparison-25-34} from the appendix with $a=n-r$.  Its hypotheses follow from \eqref{eq:p-range-main}.  Combining \eqref{eq:shell-difference-lower} and \eqref{eq:correction-lower-main}, we obtain
\[
 \frac{\Phi_{n,r}}{L_{r-1}}>0.
\]
This contradicts Theorem~\ref{thm:exact-criterion} and completes the proof.
\end{proof}

\subsection{Asymptotic comparison of the two bounds}\label{sec:asymptotic-proof}

In this subsection, we evaluate the two sides of the necessary inequality when $2n=r^2+O(r)$, or equivalently evaluate $\Phi_{n,r}$ in this range, and then prove Theorem~\ref{thm:asymptotic-intro}.

We next determine the asymptotic size of $L_{r-1}$ under the condition that $n\sim r^2/2$.  The proof is included in detail because the factor $\mathrm e$ is responsible for the coefficient $2/\mathrm e$ in Theorem~\ref{thm:asymptotic-intro}.

\begin{lemma}\label{lem:top-support}
Suppose that $r\to\infty$, that $n=n(r)$ is integer-valued, and that
\[
 2n=r^2+O(r).
\]
Then
\begin{equation}\label{eq:top-support-e}
 \frac{L_{r-1}}{2^{r-1}\binom n{r-1}}
 \longrightarrow\mathrm e.
\end{equation}
\end{lemma}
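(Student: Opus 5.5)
We start from the explicit shell formula \eqref{eq:shell-explicit} with $t=r-1$ and normalize by the top-support term $j=r-1$. Writing $j=r-1-\ell$ with $0\le\ell\le r-2$, we get
\[
 \frac{L_{r-1}}{2^{r-1}\binom n{r-1}}
 =\sum_{\ell=0}^{r-2}T_\ell,
 \qquad
 T_\ell=2^{-\ell}\,\frac{\binom n{r-1-\ell}}{\binom n{r-1}}\binom{r-2}{r-2-\ell}.
\]
(The restriction $j\le\min\{n,t\}$ is harmless because $r-1<n$ for large $r$.) Expanding the ratios gives
\[
 T_\ell=\frac{1}{2^\ell\,\ell!}\prod_{i=0}^{\ell-1}\frac{(r-1-i)(r-2-i)}{n-r+2+i}.
\]
The plan is to show $T_\ell\to 1/\ell!$ for each fixed $\ell$. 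Then we dominate uniformly and apply dominated convergence (Tannery's theorem) to conclude $\sum_\ell T_\ell\to\sum_\ell 1/\ell!=\mathrm e$.

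For the termwise limit, fix $\ell$ and use $2n=r^2+O(r)$, so $n-r+2+i=\tfrac{r^2}{2}(1+O(1/r))$. Each factor $(r-1-i)(r-2-i)$ is $r^2(1+O(1/r))$. Hence each quotient in the product equals $\tfrac{r^2}{r^2/2}(1+O(1/r))=2+O(1/r)$, the product is $2^\ell(1+O(\ell/r))$, and $T_\ell\to 1/\ell!$.

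For the domination, note that $(r-1-i)(r-2-i)\le(r-1)(r-2)$ and $n-r+2+i\ge n-r+2$. This gives the same bound already used in deriving \eqref{eq:Ltop-main}:
\[
 T_\ell\le\frac{1}{\ell!}\Bigl(\frac{(r-1)(r-2)}{2(n-r+2)}\Bigr)^\ell .
\]
Under $2n=r^2+O(r)$ the base $\frac{(r-1)(r-2)}{2(n-r+2)}$ tends to $1$, so for large $r$ it is at most $2$. Thus $T_\ell\le 2^\ell/\ell!$, which is summable and independent of $r$, and Tannery's theorem applies. Alternatively, without invoking Tannery, this bound gives the upper estimate $\limsup\le\exp(1)$ directly from \eqref{eq:Ltop-main}. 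For the matching lower bound, truncate at a fixed $\ell_0$: all terms are nonnegative, so $\liminf\ge\sum_{\ell\le\ell_0}1/\ell!$, and we then let $\ell_0\to\infty$.

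The only delicate point is the uniform control of the tail, since the number of terms $r-1$ grows with $r$. The bound from \eqref{eq:Ltop-main}, with its base tending to $1$, handles this. No other obstacle is expected; the $O(r)$ error in $2n$ only enters through factors $1+O(1/r)$.
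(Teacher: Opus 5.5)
Your proposal is correct and follows essentially the same route as the paper: expand $L_{r-1}$ by support size, normalize by the top-support term, use the exact product formula to get $R_{r,\ell}\to 1/\ell!$ termwise, and dominate by $\theta_r^\ell/\ell!\le 2^\ell/\ell!$ with $\theta_r=\frac{(r-1)(r-2)}{2(n-r+2)}\to1$. The paper finishes with an explicit finite-part-plus-tail estimate, which is Tannery's theorem written out. Your alternative ending is equally valid and slightly shorter: bound the $\limsup$ by $\lim_{r\to\infty}\mathrm{e}^{\theta_r}=\mathrm{e}$, and bound the $\liminf$ from below by truncating the sum of nonnegative terms.
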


\begin{proof}
We expand $L_{r-1}$ according to the support size.  A point of $S^n(r-1)$ with exactly $j$ nonzero coordinates is obtained by choosing the $j$ positions, choosing their signs, and choosing a composition of $r-1$ into $j$ positive parts.  Thus, for all sufficiently large $r$,
\begin{equation}\label{eq:Lrminus1-support}
 L_{r-1}
 =\sum_{j=1}^{r-1}
 2^j\binom nj\binom{r-2}{j-1}.
\end{equation}
The condition $2n=r^2+O(r)$ implies $n\sim r^2/2$, so $n\ge r-1$ for all sufficiently large $r$.

The term with the largest possible support, $j=r-1$, is
\[
 T_0=2^{r-1}\binom n{r-1}.
\]
It counts the points whose nonzero coordinates all have absolute value one.  Write
\[
 j=r-1-\ell,
 \qquad 0\le\ell\le r-2.
\]
Since
\[
 \binom{r-2}{j-1}=\binom{r-2}{\ell},
\]
the ratio of the corresponding summand in \eqref{eq:Lrminus1-support} to $T_0$ is
\begin{equation}\label{eq:ratio-ell}
 R_{r,\ell}
 =2^{-\ell}
 \frac{\binom n{r-1-\ell}}{\binom n{r-1}}
 \binom{r-2}{\ell}.
\end{equation}
Hence
\begin{equation}\label{eq:normalized-L-sum}
 \frac{L_{r-1}}{T_0}
 =\sum_{\ell=0}^{r-2}R_{r,\ell}.
\end{equation}

For $\ell\ge1$,
\[
 \frac{\binom n{r-1-\ell}}{\binom n{r-1}}
 =\prod_{s=0}^{\ell-1}
 \frac{r-1-s}{n-r+2+s},
\]
and
\[
 \binom{r-2}{\ell}
 =\frac1{\ell!}
 \prod_{s=0}^{\ell-1}(r-2-s).
\]
Substitution into \eqref{eq:ratio-ell} gives the exact product formula
\begin{equation}\label{eq:ratio-ell-product}
 R_{r,\ell}
 =\frac1{\ell!}
 \prod_{s=0}^{\ell-1}
 \frac{(r-1-s)(r-2-s)}{2(n-r+2+s)}.
\end{equation}
The empty product is interpreted as one when $\ell=0$.

Fix $\ell\ge0$.  For every fixed $s\in\{0,\ldots,\ell-1\}$,
\[
 (r-1-s)(r-2-s)=r^2+O_\ell(r),
\]
whereas
\[
 2(n-r+2+s)=r^2+O_\ell(r).
\]
Every factor in \eqref{eq:ratio-ell-product} therefore tends to one, and hence
\begin{equation}\label{eq:ratio-pointwise}
 R_{r,\ell}\longrightarrow\frac1{\ell!}
 \qquad\text{for every fixed }\ell.
\end{equation}


On the other hand, the factors in \eqref{eq:ratio-ell-product} satisfy
\[
 (r-1-s)(r-2-s)\le(r-1)(r-2)
\]
and
\[
 n-r+2+s\ge n-r+2.
\]
It follows that
\begin{equation}\label{eq:ratio-majorant}
 0\le R_{r,\ell}
 \le\frac{\theta_r^\ell}{\ell!},
 \qquad
 \theta_r=
 \frac{(r-1)(r-2)}{2(n-r+2)}.
\end{equation}
Furthermore,
\[
 \theta_r
 =\frac{r^2-3r+2}{2n-2r+4}
 \longrightarrow1.
\]
Thus $\theta_r\le2$ for all sufficiently large $r$.  Set $R_{r,\ell}=0$ when $\ell>r-2$.  Then
\[
 0\le R_{r,\ell}\le\frac{2^\ell}{\ell!}
\]
for every $\ell\ge0$ and all sufficiently large $r$.

For completeness, we finish with an explicit finite-part and tail argument.  Let $\eta>0$.  Choose $M$ so large that
\[
 \sum_{\ell>M}\frac{2^\ell}{\ell!}<\eta
 \qquad\text{and}\qquad
 \sum_{\ell>M}\frac1{\ell!}<\eta.
\]
By \eqref{eq:ratio-pointwise}, for all sufficiently large $r$,
\[
 \left|
 \sum_{\ell=0}^{M}R_{r,\ell}
 -\sum_{\ell=0}^{M}\frac1{\ell!}
 \right|<\eta.
\]
The tail estimate \eqref{eq:ratio-majorant} gives
\[
 0\le\sum_{\ell>M}R_{r,\ell}<\eta.
\]
Consequently,
\[
 \left|
 \sum_{\ell=0}^{\infty}R_{r,\ell}
 -\sum_{\ell=0}^{\infty}\frac1{\ell!}
 \right|<3\eta
\]
for all sufficiently large $r$.  Since $\eta$ is arbitrary, \eqref{eq:normalized-L-sum} yields
\[
 \frac{L_{r-1}}{T_0}
 \longrightarrow
 \sum_{\ell=0}^{\infty}\frac1{\ell!}
 =\mathrm e,
\]
which is \eqref{eq:top-support-e}.
\end{proof}

The preceding lemma allows us to evaluate the exact criterion on the scale $2n=r^2+O(r)$.

\begin{proposition}\label{prop:Phi-asymptotic}
Let $c<1$ be fixed and suppose that
\[
 2n=r^2+cr+O(1).
\]
Then
\begin{equation}\label{eq:Phi-asymptotic}
 \frac{\Phi_{n,r}}{L_{r-1}}
 =\left(\frac2{\mathrm e}-c\right)r+o(r).
\end{equation}
\end{proposition}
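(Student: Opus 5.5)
Write $p=2n-r^2=cr+O(1)$ and split $\Phi_{n,r}/L_{r-1}$ as in \eqref{eq:Phi} into the shell part
\[
 \frac{rL_r-(r-1)L_{r-1}-L_{r+1}}{L_{r-1}}
\]
and the correction part $\frac{2^{r+2}}{(r+2)L_{r-1}}\binom n{r+1}$. I will show that the first equals $-cr+o(r)$ and the second equals $\frac{2}{\e}r+o(r)$.

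\textbf{Shell part.} I use the recurrence \eqref{eq:recurrence} twice. With $t=r-1$,
\[
 \frac{L_r}{L_{r-1}}=\frac{2n}{r}+\frac{r-2}{r}\cdot\frac{L_{r-2}}{L_{r-1}}.
\]
Applying the recurrence at $t=r-2$ gives $L_{r-2}/L_{r-1}\le (r-1)/(2n)=O(1/r)$. Hence
\[
 \frac{L_r}{L_{r-1}}=r+\frac pr+O\!\left(\frac1r\right)=r+c+O\!\left(\frac1r\right).
\]
With $t=r$,
\[
 (r+1)\frac{L_{r+1}}{L_{r-1}}=2n\frac{L_r}{L_{r-1}}+(r-1).
\]
Set $\lambda=L_r/L_{r-1}$. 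Then
\[
 \frac{rL_r-(r-1)L_{r-1}-L_{r+1}}{L_{r-1}}
 =r\lambda-(r-1)-\frac{2n\lambda+r-1}{r+1}
 =\frac{(r^2+r-2n)\lambda-(r-1)(r+2)}{r+1}.
\]
Substituting $r^2+r-2n=(1-c)r+O(1)$ and $\lambda=r+c+O(1/r)$, the numerator becomes
\[
 \bigl((1-c)r+O(1)\bigr)(r+c)-r^2-r+O(1)=-cr^2+O(r).
\]
Dividing by $r+1$ gives $-cr+O(1)$. This matches \eqref{eq:shell-difference-lower} in the finite case, and the error terms only need to be tracked to order $O(1)$.

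\textbf{Correction part.} By Lemma~\ref{lem:top-support}, $L_{r-1}=(\e+o(1))\,2^{r-1}\binom n{r-1}$. Therefore
\[
 \frac{2^{r+2}\binom n{r+1}}{(r+2)L_{r-1}}
 =\frac{8+o(1)}{\e}\cdot\frac{(n-r+1)(n-r)}{r(r+1)(r+2)}.
\]
Since $n=r^2/2+O(r)$, we have $(n-r+1)(n-r)=r^4/4+O(r^3)$, so the right side is $\frac{2}{\e}r+o(r)$.

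Adding the two parts gives \eqref{eq:Phi-asymptotic}. The main point needing care is the cancellation in the shell part: the leading $r^2$ terms cancel, so $\lambda$ must be known to $O(1/r)$ precision, which the bound $L_{r-2}/L_{r-1}=O(1/r)$ supplies. The $o(1)$ relative error from Lemma~\ref{lem:top-support} contributes only $o(r)$, so no rate is needed there. The hypothesis $c<1$ is not used in this expansion beyond ensuring $r\le\lfloor(2n-3)/3\rfloor$ and $n\ge r+1$ for large $r$. The same algebra holds for any fixed $c$, so that hypothesis likely matters only for the later application.
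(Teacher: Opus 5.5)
Your proposal is correct and follows essentially the same route as the paper: the recurrence \eqref{eq:recurrence} at $t=r-2,r-1,r$ gives the shell part as $-cr+O(1)$, slightly sharper than the paper's stated $-cr+o(r)$. The binomial ratio $\frac{8(n-r)(n-r+1)}{r(r+1)(r+2)}=2r+O(1)$, combined with Lemma~\ref{lem:top-support}, then gives the correction term $\frac{2}{\e}r+o(r)$, exactly as in the paper. Your side remark that the expansion never uses $c<1$ is also right; in fact $2n\ge 3r+3$ and $n\ge r+1$ hold for large $r$ with any fixed $c$, so that hypothesis matters only for the application in Theorem~\ref{thm:asymptotic-intro}.
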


\begin{proof}
The recurrence \eqref{eq:recurrence} with $t=r-2$ gives
\[
 \frac{L_{r-2}}{L_{r-1}}=O(r^{-1}).
\]
Using it with $t=r-1$, we obtain
\[
 \frac{L_r}{L_{r-1}}=r+c+o(1).
\]
A further use of the recurrence, now with $t=r$, gives
\begin{equation}\label{eq:asymptotic-shell-difference}
 \frac{rL_r-(r-1)L_{r-1}-L_{r+1}}{L_{r-1}}
 =-cr+o(r).
\end{equation}

For the positive correction term in \eqref{eq:Phi}, a direct binomial-ratio calculation gives
\[
 \frac{\displaystyle\frac{2^{r+2}}{r+2}\binom n{r+1}}
      {2^{r-1}\binom n{r-1}}
 =\frac{8(n-r)(n-r+1)}{r(r+1)(r+2)}
 =2r+O(1).
\]
By Lemma~\ref{lem:top-support},
\begin{equation}\label{eq:asymptotic-correction}
 \frac{\displaystyle\frac{2^{r+2}}{r+2}\binom n{r+1}}
      {L_{r-1}}
 =\frac2{\mathrm e}r+o(r).
\end{equation}
Adding \eqref{eq:asymptotic-shell-difference} and \eqref{eq:asymptotic-correction} proves \eqref{eq:Phi-asymptotic}.
\end{proof}

\begin{proof}[Proof of Theorem~\ref{thm:asymptotic-intro}]
Fix $\eps>0$ and take $r$ sufficiently large.  If $r\ge n$, nonexistence follows from Corollary~\ref{cor:spacapan-Zn}.  We may therefore assume $r<n$.

If $2n<r^2$, Lemmas~\ref{lem:square-strong} and~\ref{lem:square-weak} apply once $r$ is sufficiently large, according as $2n\ge3r+3$ or $2n\le3r+2$.  It remains to consider
\[
 r^2\le2n
 \le r^2+\left(\frac2{\mathrm e}-\eps\right)r.
\]
By Proposition~\ref{prop:Phi-closed}, the normalized quantity
\[
 \frac{\Phi_{n,r}}{\binom n{r+1}}
\]
is decreasing in $n$.  Hence the least favorable case in this interval occurs at the largest admissible integer $n$, for which
\[
 2n=r^2+\left(\frac2{\mathrm e}-\eps\right)r+O(1).
\]
Proposition~\ref{prop:Phi-asymptotic} gives
\[
 \frac{\Phi_{n,r}}{L_{r-1}}
 =\eps r+o(r)>0
\]
for all sufficiently large $r$.  Moreover, $2n\ge3r+3$ in this range once $r$ is large.  Theorem~\ref{thm:exact-criterion} now gives the required contradiction.
\end{proof}

\begin{remark}\label{rem:two-over-e}
The coefficient $2/\mathrm e$ comes from two sharp features of the present argument.  First, unit-coordinate centers attain equality in \eqref{eq:central-gap-weak}.  Second, the factor $\mathrm e$ in \eqref{eq:top-support-e} is the limiting ratio between the whole shell $S^n(r-1)$ and its top-support contribution.  An improvement of the linear correction beyond $2/\mathrm e$ must therefore use information coupling different centers, rather than a stronger pointwise estimate depending only on the support size of one center.
\end{remark}

\section{Conclusion and Further Remarks}\label{sec:conclusion}

A complete covering records exactly the first-shell condition forced around each codeword of a perfect Lee code in $\Z^n$.  The shadow formulation shows that its centers form a constant-Lee-weight code attaining equality in a weighted shadow-packing inequality.  In dimension two this local condition is rigid and recovers a lattice tiling, while the explicit configuration for $(n,r)=(3,2)$ shows that complete coverings need not extend to global tilings in higher dimension.

The local decomposition associated with $\Gamma_a$ retains both pairs of distinct outward directions and repeated directions.  Its incidence-rank consequences, combined with the coefficient gap $g(z)$, prove Theorems~\ref{thm:main} and~\ref{thm:asymptotic-intro}.  Since every perfect Lee code induces a complete covering, these local nonexistence results give new cases of the Golomb--Welch conjecture.  In the opposite asymptotic regime, the near-optimal ternary constant-weight codes in \cite{LiuShangguan2025} produce almost complete coverings for every fixed radius.

The transfer result in Proposition~\ref{prop:cycle-lifting}\textup{(ii)} also applies Theorems~\ref{thm:main} and~\ref{thm:asymptotic-intro} to sufficiently long Cartesian products of cycles.  More precisely, if
\[
 G=C_{\ell_1}\times \cdots\times C_{\ell_n},
 \qquad
 \ell_i\ge4r+3\quad(1\le i\le n),
\]
then no radius-$r$ ball in $G$ admits an optimal local structure in either parameter range of the two main theorems.  This complements Theorem~\ref{thm:spacapan}: \v{S}pacapan allows the shorter condition $\ell_i\ge2r+1$ in the range $r\ge n$, whereas the transfer proposition applies the stronger restrictions on $n$ and $r$ obtained here at the cost of longer cycles.

The main remaining questions are whether exact complete coverings exist in further dimensions and whether the local nonexistence range can be extended below the square-root scale.  The local decompositions associated with the graphs $\Gamma_a$ suggest that stronger restrictions may come from compatibility among the decompositions attached to different points $a\in S^n(r-1)$, rather than from separate rank estimates at individual points.

\appendices

\section{The Technical Comparison for the Constant 25/34}\label{app:comparison}

The following elementary estimate is separated from the main argument because its proof is computational and independent of the combinatorial ideas.

\begin{lemma}\label{lem:comparison-25-34}
Let $r\ge5$, let $0\le p\le25r/34-3$, and put
\[
 a=\frac{r^2-2r+p}{2}.
\]
Then
\begin{equation}\label{eq:comparison-25-34}
 \begin{aligned}
 &\frac{8a(a+1)}{r(r+1)(r+2)}
 \exp\left(-\frac{(r-1)(r-2)}{2(a+2)}\right)\\
 &\qquad>
 \frac{r-2+p(r-1)+p^2/r}{r+1}.
 \end{aligned}
\end{equation}
\end{lemma}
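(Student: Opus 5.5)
The plan is to reduce \eqref{eq:comparison-25-34} to an explicit polynomial inequality after bounding the exponential from below. Put $x=\frac{(r-1)(r-2)}{2(a+2)}$. Since $2(a+2)=r^2-2r+p+4$ and $(r-1)(r-2)=r^2-3r+2$, we have $x=1-\frac{r+p+2}{r^2-2r+p+4}<1$, and $x\ge0$ for $r\ge5$. Hence $e^{-x}=e^{-1}e^{1-x}\ge e^{-1}(1+(1-x))=e^{-1}\bigl(2-x\bigr)$, or more sharply $e^{-1}(1+y+y^2/2)$ with $y=1-x=\frac{r+p+2}{r^2-2r+p+4}$. Because $y=O(1/r)$ while the left side of \eqref{eq:comparison-25-34} is of order $r$, the linear term $y$ contributes only $O(1)$. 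That term may still matter near the threshold, so I will keep $1+y$.

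Next, expand both sides in powers of $r$ for $p=\lambda r$ with $0\le\lambda\le 25/34$. Using $8a(a+1)=2(r^2-2r+p)(r^2-2r+p+2)$, the left side is
\[
\frac{2(r^2-2r+p)(r^2-2r+p+2)}{e\,r(r+1)(r+2)}(1+y)=\frac{2}{e}\bigl(r+(2\lambda-6)+1+\lambda+O(1/r)\bigr)\,\text{-ish}.
\]
The right side is $\frac{r-2+p(r-1)+p^2/r}{r+1}=\lambda r+(\lambda^2-\lambda+1)+O(1/r)$. At leading order we need $2/e>\lambda$, and $25/34\approx0.7353<2/e\approx0.7358$. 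The margin is only about $5\times10^{-4}\,r$, so the constant terms decide the matter for moderate $r$. This is why the hypothesis is $p\le 25r/34-3$: the shift $-3$ buys a constant margin $\approx 3$ (times the slope) that must dominate the $O(1)$ discrepancy of the constant terms. I expect this to be the main obstacle, namely choosing the lower bound for the exponential tightly enough that the constant terms work out, uniformly from $r=5$ upward.

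Concretely, multiply through by the positive quantity $e\,r(r+1)(r+2)(r^2-2r+p+4)$ and replace $e$ by the rational upper bound $e<2721/1001$ (or $e\le 19/7$ if it suffices). This gives a polynomial inequality $Q(r,p)>0$. The difference of the two sides is convex in $p$ on the left minus a convex quadratic on the right, so I will check monotonicity or concavity in $p$. Since the right side grows faster in $p$ than the left, the worst case should be the endpoint $p=25r/34-3$, with $p=0$ also checked. Substituting $p=25r/34-3$ gives a one-variable polynomial in $r$. I would show that its leading coefficient is positive (that is where $25/34<2/e$ enters), then verify positivity for $r\ge R_0$ by bounding the lower-order coefficients, and finally check the finitely many $5\le r<R_0$ directly. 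If $1+y$ proves too lossy for small $r$, I will fall back to $1+y+y^2/2$ or handle those small $r$ by exact numerical evaluation.
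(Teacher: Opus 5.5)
Your route matches the paper's in outline: the same bound $\mathrm e^{-x}\ge\mathrm e^{-1}(1+y)$ with $y=(r+p+2)/(r^2-2r+p+4)$, a rational bound for $\mathrm e$, reduction to $p_*=25r/34-3$, and positivity of a one-variable rational function. \textbf{The numerical step fails as written, because neither $2721/1001$ nor $19/7$ is an upper bound for $\mathrm e$.} Both are continued-fraction convergents lying \emph{below} $\mathrm e$: $19/7=2.714\ldots$ and $2721/1001=2.718281718\ldots<2.718281828\ldots=\mathrm e$. After clearing the factor $\mathrm e$, replacing it on the right-hand side by some $E<\mathrm e$ gives a weaker inequality than the one required. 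So $Q(r,p)>0$ would not imply \eqref{eq:comparison-25-34}.

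What you need is $\mathrm e<E$ together with $2/E\ge 25/34$, i.e.\ $E\le 68/25$. The paper takes exactly $E=68/25$, obtained from $\sum_{j\le 4}1/j!=65/24$ and a tail below $1/100$. This is where the constant comes from, since $2/(68/25)=25/34$. With that choice your plan to show that ``the leading coefficient is positive'' cannot work: after substituting $p_*$, the terms of order $r$ in the lower bound for $G_r(p_*)-T_r(p_*)$ cancel exactly. Positivity comes instead from the constant term, which tends to $2(1-25/34)^2=81/578\approx 0.14$. That is precisely what the shift $-3$ pays for; with a shift $-c$ the limit would be $81/578-3+c$. The paper writes the resulting expression as a polynomial in $r-5$ with positive coefficients over a positive denominator, which settles all real $r\ge5$ at once. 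A valid sharper choice such as $E=87/32$ would make the leading coefficient positive, but only by $1/2958$. The constant term would still have to carry every moderate $r$.

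Second, the reduction to the endpoint is asserted rather than proved (``the worst case should be''). Checking $p=0$ and $p=p_*$ proves nothing unless the difference is monotone or concave in $p$, and ``convex minus convex'' gives neither. The paper proves $G_r'(p)<(r-1)/(r+1)\le T_r'(p)$ directly on the exponential expression. There $G_r'(p)$ is $4/(r(r+1)(r+2))$ times an exponential factor below $1$ times $2a+1+\frac{(r-1)(r-2)a(a+1)}{2(a+2)^2}$. The last factor is increasing in $a$, which is essentially your convexity observation, so it suffices to bound it at $a_*$, where an explicit polynomial identity finishes the estimate. You need to supply this, or the analogous derivative estimate for your rational lower bound. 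Once monotonicity is established, the check at $p=0$ is superfluous.
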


\begin{proof}
Denote the left-hand and right-hand sides of \eqref{eq:comparison-25-34} by $G_r(p)$ and $T_r(p)$, respectively.  Since $da/dp=1/2$, differentiation gives
\[
 \begin{aligned}
 G_r'(p)
 ={}&\frac{4\exp\bigl(-(r-1)(r-2)/(2(a+2))\bigr)}
          {r(r+1)(r+2)}\\
 &\times\left(
  2a+1+
  \frac{(r-1)(r-2)a(a+1)}{2(a+2)^2}
 \right).
 \end{aligned}
\]
The expression in parentheses is increasing in $a$, because
\[
 \frac{d}{da}\left(\frac{a(a+1)}{(a+2)^2}\right)
 =\frac{3a+2}{(a+2)^3}>0.
\]
At the right endpoint $p_*=25r/34-3$, we have
\[
 a_* =\frac{34r^2-43r-102}{68}.
\]
The exponential factor is less than one.  Direct simplification gives
\[
 \begin{aligned}
 &r(r+2)(r-1)-4\left(
 2a_*+1+
 \frac{(r-1)(r-2)a_*(a_*+1)}{2(a_*+2)^2}
 \right)\\
 &\quad=
 \frac{1}{17(34r^2-43r+34)^2}
 \Bigl(
 19652(r-5)^7+539852(r-5)^6\\
 &\hspace{15mm}
 +6375561(r-5)^5+42382768(r-5)^4
 +173212044(r-5)^3\\
 &\hspace{15mm}
 +439788886(r-5)^2+647147455(r-5)+426965142
 \Bigr),
 \end{aligned}
\]
which is positive for $r\ge5$.  Therefore
\[
 G_r'(p)<\frac{r-1}{r+1}.
\]
On the other hand,
\[
 T_r'(p)=\frac{r-1+2p/r}{r+1}
 \ge\frac{r-1}{r+1}.
\]
Thus $G_r(p)-T_r(p)$ is strictly decreasing in $p$, and it remains to verify positivity at $p=p_*$.

At this endpoint,
\[
 \frac{(r-1)(r-2)}{2(a_*+2)}
 =1-\frac{59r-34}{34r^2-43r+34}.
\]
We use the elementary estimate $\mathrm e<68/25$.  Indeed,
\[
 \sum_{j=0}^4\frac1{j!}=\frac{65}{24},
 \qquad
 \sum_{j=5}^{\infty}\frac1{j!}
 <\frac1{120}\sum_{j\ge0}\frac1{6^j}
 =\frac1{100}.
\]
Hence $\mathrm e<1631/600<68/25$.  Since $\mathrm e^x>1+x$ for $x>0$, substitution into $G_r(p_*)$ gives
\[
 \begin{aligned}
 G_r(p_*)-T_r(p_*)
 >{}&
 \frac{1}{9826r(r+1)(r+2)(34r^2-43r+34)}\\
 &\times\Bigl(
 46818(r-5)^5+2342991(r-5)^4\\
 &\hspace{12mm}
 +32339210(r-5)^3+190996038(r-5)^2\\
 &\hspace{12mm}
 +516933880(r-5)+528093063
 \Bigr),
 \end{aligned}
\]
which is positive.  This proves the lemma.
\end{proof}

\paragraph*{\bf Declaration of generative AI use}
All mathematical constructions in Section~\ref{sec:constructions}
and the graph model in Section~\ref{sec:nonexistence} were developed
entirely by the authors without AI assistance.
The authors used ChatGPT 5.6 pro to assist in the improvement of the upper bounds
for the expression on the right-hand side of
\eqref{eq:key_idea} in Subsection \ref{sec:coefficient} and part of the computations in Subsection \ref{sec:uniform-proof}.
The authors also used ChatGPT 5.6 pro to improve the English
expression and readability of the manuscript.
All AI-assisted mathematical arguments have been independently
checked by the authors, and all AI-assisted text has been reviewed
and edited by them.
The authors take full responsibility for the correctness,
originality, and integrity of the entire paper.

\end{document}